\newif\iffull
\begin{document} %%%
%%%%%%%%%%%%%%%%%%%%
%%%%%%%%%%%%%%%%%%%%

%%%%%%%%%%%%%%%%
% Front Matter %
%%%%%%%%%%%%%%%%
\iffull
\title{\textbf{Uncloneable Cryptographic Primitives \\ with Interaction}}
\else
\title{\textbf{Uncloneable Cryptographic Primitives with Interaction}\\\vspace{-4mm}}

\fi
\author[1]{Anne Broadbent}
\author[2]{Eric Culf}
\date{\vspace{-10mm}} %% if you don't need date to appear
\affil[1]{University of Ottawa,  Department of Mathematics and Statistics \footnote{\texttt{abroadbe@uottawa.ca}}}
\affil[2]{University of Waterloo, Institute for Quantum Computing and Faculty of Mathematics \footnote{\texttt{eculf@uwaterloo.ca}}
}
\setcounter{Maxaffil}{0}
\renewcommand\Affilfont{\small}

\maketitle
\iffull
\begin{abstract}
	Much of the strength of quantum cryptography may be attributed to the no-cloning property of quantum information. We construct three new cryptographic primitives whose security is based on uncloneability, and that have in common that their security can be established via a novel monogamy-of-entanglement (MoE) property:
	\begin{itemize}
		\item We define \emph{interactive uncloneable encryption}, a version of the uncloneable encryption defined by Broadbent and Lord [TQC 2020] where the receiver must partake in an interaction with the sender in order to decrypt the ciphertext. We provide a one-round construction that is secure in the information-theoretic setting, in the sense that no other receiver may learn the message even if she eavesdrops on all the interactions.
		
		\item We provide a way to make a bit string commitment scheme uncloneable. The scheme is augmented with a check step chronologically in between the commit and open steps, where an honest sender verifies that the commitment may not be opened by an eavesdropper, even if the receiver is malicious. Our construction preserves the assumptions of the original commitment  while requiring only a polynomial decrease in the length of the committed string.
		\item We construct a \emph{receiver-independent quantum key distribution} (QKD) scheme, which strengthens the notion of one-sided device independent QKD of Tomamichel, Fehr, Kaniewski, and Wehner (TFKW) [NJP 2013] by also permitting the receiver's classical device to be untrusted. Explicitly, the sender remains fully trusted while only the receiver's communication is trusted. We provide a construction that achieves the same asymptotic error tolerance as the scheme of TFKW.
	\end{itemize}
	To show security, we prove an extension of the MoE property of coset states introduced by Coladangelo, Liu, Liu, and Zhandry [Crypto 2021]. In our stronger version, the player Charlie also receives Bob's answer prior to making his guess, thus simulating a party who eavesdrops on an interaction. To make use of this property, we express it as a new type of entropic uncertainty relation which arises naturally from the structure of the underlying MoE~game.
	
\end{abstract} 
\fi

\iffull
\newpage
\setcounter{tocdepth}{2}
\tableofcontents
\newpage
\fi

%%%%%%%%%%%
% Content %
%%%%%%%%%%%

\iffull
\section{Introduction}
\fi

An important feature of quantum information is the no-cloning principle --- the property that an arbitrary quantum state cannot be perfectly copied, unlike a classical string~\cite{Par70,WZ82,Die82}. This idea underpins many of the unique constructions in quantum cryptography~\cite{BS16}, beginning with quantum money~\cite{Wie83} and quantum key distribution (QKD)~\cite{BB84}. In this work, we give three new constructions of cryptographic primitives that, at the intuitive level, make use of uncloneability: \emph{uncloneable encryption with interactive decryption}, \emph{uncloneable bit commitment}, and \emph{receiver-independent QKD}. An important consequence of the uncloneability is that none of these primitives can be secure classically --- in fact, as classical information can always be copied, the security is clearly unachievable.

In order to prove security of these primitives and formally  reason about their ``uncloneability,'' we show a strengthened form of the subspace coset state monogamy-of-entanglement (MoE) property \cite{CLLZ21, CV22}, which is a bound on the winning probability of an MoE game built using subspace coset states. MoE games are used to quantify the strength of quantum tripartite correlations. They belong to the family of \emph{extended nonlocal games} \cite{JMRW16}, which generalise nonlocal games, but are highly distinct from them. The MoE game paradigm, introduced in \cite{TFKW13}, has recently been used in various uncloneability-related cryptographic constructions \cite{BL20,BC23,CLLZ21}. An MoE game is played between two cooperating players, Bob and Charlie, and an honest referee, Alice, all of whom may hold a quantum system. The subspace coset MoE game (called the strong monogamy game in \cite{CLLZ21}), proceeds as follows. First, Alice samples a subspace $a$ of dimension $n/2$ of the space of $n$-bit strings $\Z_2^n$, and strings $t,t'$ uniformly at random, and prepares the coset state\footnote{We use lowercase rather than uppercase letters for subspaces as we aim to reserve the uppercase letters for registers and random variables.}
\iffull
\begin{align}
	\ket{a_{t,t'}}=\frac{1}{\sqrt{|a|}}\sum_{u\in a}(-1)^{u\cdot t'}\ket{t+u}.
\end{align}
\else
$\ket{a_{t,t'}}=\frac{1}{\sqrt{|a|}}\sum_{u\in a}(-1)^{u\cdot t'}\ket{t+u}$.
\fi
 She sends this state to Bob and Charlie, who may do arbitrary actions to split\footnote{Note that the splitting operation is represented by an arbitrary quantum channel, chosen by Bob and Charlie. It is not necessarily something simple like a bipartition of the qubits.} the state between their two systems, after which they are isolated. Next, Alice provides them with a description of~$a$. In order to win, Bob must provide a vector from the coset $t+a$ and Charlie must provide one from $t'+a^\perp$, where $a^\perp$ is the orthogonal complement of $a$. This game was shown in~\cite{CV22} to have an exponentially small winning probability in $n$. We strengthen the relation by showing that the same bound holds on a version of the game that is easier to win --- Bob's answer, whether or not it is correct, leaks to Charlie before he makes his guess. In this way, we are able to see the information that Charlie gets as messages sent during an interaction between Alice and Bob, on which he eavesdrops. We refer to this bound on the winning probability as the \emph{\gamename}monogamy-of-entanglement property.
\iffull

\else
\fi

\iffull
\subsection{Uncloneable encryption with interactive decryption}\label{sec:intro-encryption}
\else
\subsection*{Uncloneable encryption with interactive decryption}
\fi

We introduce, study, and construct a variant of uncloneable encryption that allows for an interaction during the decryption process. Uncloneable encryption as is currently understood was introduced in \cite{BL20}, building on earlier concepts such as the tamper-evident encryption of \cite{Got03} and the MoE games of \cite{TFKW13}. In its most general form, an uncloneable encryption scheme provides a way to encrypt messages in such a way that they cannot be simultaneously read by two malicious parties, Bob and Charlie, under the assumption that they are isolated once the encryption key is released. To the best of our knowledge, it is unknown whether this is achievable in the plain model, even if we allow computational assumptions. Uncloneable encryption schemes in the quantum random oracle model (QROM) have been studied \cite{BL20} and provide nearly optimal security. Other computational assumptions have been considered: under the assumption of post-quantum one-way functions, \cite{AK21} show that it is possible to turn an uncloneable encryption scheme into one with semantic security; and under the assumption of a post-quantum public key encryption scheme, they show how to turn the scheme into a public-key uncloneable encryption scheme. Since all these rely on the existence of uncloneable encryption, a key open question remains concerning the existence of an ``uncloneable bit'' --- an optimal uncloneable encryption scheme in the plain model that encrypts one-bit message. This is a fundamental object as any uncloneable encryption scheme implies an uncloneable bit \cite[Theorem 9]{BL20}. We work with a simple communication assumption rather than a computational assumption in order to instantiate a new form of uncloneable encryption.\looseness=-1

Originally, the encryption was represented by a quantum encryption of classical messages (QECM), a protocol that encrypts classical messages as quantum ciphertexts, which can be decrypted using only the classical encryption key \cite{BL20}. A QECM scheme is uncloneable if two receivers receive a ciphertext, split it arbitrarily, and only get the key once they are isolated, then they can simultaneously learn the message with at best near-trivial probability. We extend the original non-interactive setting of \cite{BL20} by allowing interaction in the decryption phase. We call this model \emph{quantum encryption of classical messages with interactive decryption} (QECM-ID). To adapt uncloneability to a QECM-ID scheme, we again have two receivers, whom we call Bob and Eve, who split a ciphertext. To decrypt, Bob initiates an interaction with Alice. Only after this point does Bob need to be seen as the intended recipient of the message. To avoid the trivial attack where Bob simply gives the decrypted message to Eve, they may not communicate directly during the interaction step --- nevertheless, Eve may eavesdrop on the communication between Alice and Bob. We therefore say that the encryption is uncloneable if, for any actions Bob and Eve take, the probability that Eve guesses the message correctly once the interaction finishes and the decryption protocol does not abort is near-trivial.

 We also adapt uncloneable-indistinguishable security, which is meant to represent an uncloneability version of chosen-plaintext attack (CPA) security. For a QECM, this is the property that Bob and Eve cannot simultaneously distinguish the encryption of a chosen message distribution from a fixed message~\cite{BL20}. To adapt this to a QECM-ID, we say that it is uncloneable-indistinguishable secure if, after the decryption interaction, the probability that, simultaneously, Alice accepts the decryption and Eve distinguishes a chosen message distribution from a fixed message is near trivial, \emph{i.e.}~half the probability of accepting. Intuitively, the condition that Bob guesses correctly is replaced with the condition that Alice accepts the decryption in order to adapt the definition to a QECM-ID.

 Finally, we show that there is an equivalence between uncloneable and uncloneable-indistinguishable security for QECM-IDs. This extends the result, shown in~\cite{BL20}, that uncloneable security implies uncloneable-indistinguishable security for QECMs. Further, the equivalence generalises an important property of classical encryption. To the best of our knowledge, it is unknown whether both implications hold for QECMs.

\iffull
\paragraph{Proof technique.} To instantiate an uncloneable QECM-ID, we make use of the \gamename MoE property. Alice, to encrypt her message $m$, uses as a key a subspace $a$, strings $h$ and $t,t'$, and a key $r$ for a quantum-proof strong extractor $e$ . She sends the pair $(m+e(t',r)+h,\ket{a_{t,t'}})$ as the ciphertext. The MoE property implies that, if Bob is able to provide $t$ to Alice, then with high probability Eve is unable to guess $t'$ correctly, even if she learns $t$. Hence, Alice can use the interaction to check whether Bob knows $t$. If this succeeds, then $t'$ is secure against Eve with high probability, so Alice sends remainder of the key $(r,h)$ to Bob. With this, our construction satisfies both forms of uncloneable security, with tighter bounds that the equivalence between the properties implies.

\else

\fi

\iffull
\subsection{Uncloneable bit commitment}\label{sec:intro-commitment}
\else
\subsection*{Uncloneable bit commitment}
\fi

In bit string commitment, a sender Alice commits to a string that a receiver Bob can only access when she chooses. Ideally, the commitment should be \emph{hiding}, in the sense that Bob cannot learn the string Alice has committed until she chooses to reveal, and \emph{binding}, in the sense that Alice must reveal the same string to which she had committed. Without additional assumptions, bit commitment is impossible \cite{May96,LC97,BS16}, but there are a variety of models in which it was shown to exist. For example, under classical computational assumptions \cite{Cha87,Nao91} (see also \cite{Cre'11}) or in the noisy quantum storage model \cite{KWW12}. However, a problem underlying many classically-defined cryptographic primitives is that they are inherently cloneable; if an eavesdropper Eve is able to eavesdrop on the communications between Alice and Bob, she may be able to produce a transcript of their interactions and hence learn the final string whenever it is revealed. This is the case for bit commitment: in fact, the reveal step is usually represented as a public broadcast with no indication of security against an eavesdropper. We remedy this with a method to make a bit string commitment scheme uncloneable.

We define an \emph{uncloneable bit string commitment scheme} as a commitment scheme with an additional check step in between the commit and reveal steps, where Alice verifies whether an eavesdropper has attempted to clone the commitment. If the commitment passes this check, then an honest Alice can be sure that only Bob will be able to open it during the reveal phase, despite a lack of prior agreement between them. Bob may even be malicious: the only restriction needed on him is that he does not communicate directly to Eve after the check. With this in mind, the point in time when Alice chooses to undertake the check allows it to be run under varying assumptions. In particular, Alice may check immediately after committing, which means that no honest party needs to store any quantum information, but Alice needs to be sure that Bob does not communicate privately with Eve at any point after committing. This is more feasible for near-term quantum devices, but requires that Bob not communicate information to Eve for a period of time between steps. On the other hand, if Alice waits until immediately before revealing to do the check, she may assume that Bob and Eve have arbitrary communication after committing. The drawback is that Bob must store a quantum state even if he is honest.

\iffull
\paragraph{Proof technique.} We use the \gamename MoE property to provide a way to turn a commitment scheme into an uncloneable commitment of the above form, which works under the same assumptions as the original commitment. We assume that this is a randomised commitment scheme, where Alice commits to a uniformly random string; this form of commitment is equivalent to the standard one where Alice chooses the string to commit \cite{KWW12}. In order to commit to the random string $e(t',r)+h$, where $e$ is a quantum-proof strong extractor, Alice commits to $(r,h)$ using the original commitment and sends a coset state $\ket{a_{t,t'}}$ to Bob. Because Bob does not know $a$, he has no information about $t'$ and $(r,h)$ has not been revealed, so the commitment is hiding. Next, to check for cloning, Alice sends $a$ to Bob and verifies that he can measure $t$. Due to the \gamename MoE property, this implies that Eve is only able to guess $t'$ with low probability. Finally, to reveal, Alice reveals $(r,h)$ and Bob queries Alice for some information about $t'$ to make sure that their values are consistent, making the scheme binding. With a good choice of strong extractor, this causes only a polynomial decrease in the length of the committed string and an exponentially small change in the binding parameter.
\else

\fi

\iffull
\subsection{Receiver-independent QKD}\label{sec:intro-qkd}
\else
\subsection*{Receiver-independent QKD}
\fi

Quantum key distribution (QKD), introduced by Bennett and Brassard \cite{BB84}, is a foundationally important quantum cryptographic primitive. In its most basic form, it allows an honest sender, Alice, to share a secret key with an honest receiver, Bob, over a public channel without an eavesdropper Eve learning the key. Many variants of QKD that require only weaker assumptions on the honest parties have been proposed. In particular, device-independent protocols, initiated by Ekert~\cite{Eke91}, seek to allow QKD with few, if any, assumptions on the behaviour of Alice and Bob's devices. One-sided device-independent QKD, shown to be secure against any eavesdropper in \cite{TFKW13}, allows Bob's quantum device to be fully untrusted, relying on a monogamy-of-entanglement game winning probability bound for security; and fully device-independent QKD, shown by Vazirani and Vidick \cite{VV14b}, allows both Alice and Bob's quantum devices to be untrusted, with security coming from the rigidity of a nonlocal game. These varying assumptions allow implementations of QKD to balance practicality and security, depending on available resources.

We show security of QKD in a model extending the one-sided device-independent model, which we call \emph{receiver-independent QKD}. In this model, Alice's quantum device remains fully trusted, but neither Bob's quantum nor his \emph{classical} device is trusted. However, we require that Bob's communication be trusted: if Bob's communication were not trusted, any QKD scheme would be susceptible to the trivial attack where Bob sends his final key to Eve. In this way, this model can be seen as the minimal assumption on the receiver, hence warranting the name ``receiver-independent''.
\iffull

\else \fi
Receiver-independent QKD schemes are distinct in a number of ways. First, since any computation Bob might want to make is inherently untrusted, he cannot be trusted to check any property of the shared state. As such, only Alice may be given the power to abort the protocol. In this way, the interactions between Alice and Bob take the form of a sequence of challenges and responses. Also, the idea of correctness must be altered to account for the fact that Bob's classical computations are untrusted. This is because it is not possible to be certain that Bob has access to the final key, but it is possible to be sure that his device can compute it.

\iffull
\paragraph{Proof technique.} We construct a receiver-independent QKD scheme using coset states, and show its security using an error-robust generalisation of the \gamename MoE property. Alice sends a coset state $\ket{a_{t,t'}}$ to Bob. To verify that Eve does not have $t'$, Alice asks Bob to provide $t$, acting as the parameter estimation step. If he is able to, with only small error, then Alice issues challenges to Bob that allow her to correct her $t'$ to match the guess $\hat{t}'$ Bob's device claims to have, and then verify this match, which act as the error correction and information reconciliation steps, respectively. Finally, for privacy amplification, Alice acts on her corrected raw key with a quantum-proof strong extractor and instructs Bob to do the same. It is worth noting that our use of an entropic uncertainty relation, as introduced in \cref{sec:eur-intro} below, brings the security proof intuitively closer to earlier proofs of QKD security, as in~\cite{Ren05}, than the proof of~\cite{TFKW13}, which works more directly with an MoE game.
\else

\fi

\iffull
\subsection{Main technique: MoE entropic uncertainty relations}
\else
\subsection*{Main technique: MoE entropic uncertainty relations}
\fi
\label{sec:eur-intro}
Entropic uncertainty relations, and earlier uncertainty relations beginning with~\cite{Hei27}, have played a foundational role in quantum information~\cite{WW10}. Tomamichel, Fehr, Kaniewski, and Wehner show an entropic uncertainty relation in the same scenario as their MoE game~\cite{TFKW13}. We provide an entropic uncertainty relation that arises naturally from the scenario of the \gamename subspace coset MoE game, allowing us to work with the full strength of the MoE property in an entropy setting.

To show our relation, we generalise the min-entropy of guessing $H_{\min}(X|A)_\rho$ to a novel property that we refer to as the \emph{sequential min-entropy}, $H_{\min}(X|A;Y|B)_\rho$, which represents the uncertainty of guessing~$X$ knowing $A$, followed by guessing~$Y$ knowing $B$, on the same state. For any measurement $M$ on $A$ used to guess $X$, this decomposes as the entropic uncertainty relation
\iffull
\begin{align}
	H_{\min}(X|M(A))_\rho+H_{\min}(Y|B)_{\rho_{|(M(A)=X)}}\geq H_{\min}(X|A;Y|B)_\rho,
\end{align}
\else
$H_{\min}(X|M(A))_\rho+H_{\min}(Y|B)_{\rho_{|(M(A)=X)}}\geq H_{\min}(X|A;Y|B)_\rho$,
\fi
where $\rho_{|(M(A)=X)}$ is the state conditioned on the guess of $X$ being correct. A notable distinction between such an entropic uncertainty and a more standard relation is that the states on the two terms are different, although closely related. The winning probability of the \gamename MoE game can directly be expressed using a sequential entropy as $\exp\parens*{-H_{\min}(T|AB,T'|A'TC)_\rho}$, where $\rho_{AA'TT'BC}$ is the state such that $A$ and $A'$ hold two copies of the subspace $a$, $T$ and $T'$ hold the coset representatives $t,t'$, and $B$ and $C$ hold Bob and Charlie's quantum systems once they are isolated. Hence, the \gamename MoE property provides the entropic uncertainty relation
\iffull
\begin{align}
	H_{\min}(T|M(AB))_\rho+H_{\min}(T'|A'TC)_{\rho_{|(M(AB)=T)}}\in \Omega(n).
\end{align}
\else
$H_{\min}(T|M(AB))_\rho+H_{\min}(T'|A'TC)_{\rho_{|(M(AB)=T)}}\in \Omega(n)$.
\fi
This may be compared to the MoE~game-based entropic uncertainty relation that was studied in~\cite{TFKW13}, $H_{\min}(X|\Theta B)_{\rho}+H_{\min}(X|\Theta C)_{\rho}\geq-2\lg\squ*{(1+2^{-n/2})/2}\in O(1)$, where $\rho_{ABC}$ is any quantum state with $A=\Z_2^n$, $X$ is the result of measuring $A$ in a uniformly random Wiesner basis of states $\ket{x^\theta}=H^{\theta_1}\ket{x_1}\otimes\cdots H^{\theta_n}\ket{x_n}$, and~$\Theta$ is the description of the basis. The relation is found in the same way as their bound on the winning probability of their MoE game, but is strictly weaker than that bound, since it only considers entropies with respect to the same state. This makes it too weak to provide security of cryptographic primitives such as QKD. In fact, even in the case of the subspace coset MoE game, we similarly have
\iffull
\begin{align}
	H_{\min}(T|M(AB))_\rho+H_{\min}(T'|A'TC)_\rho\in O(1),
\end{align}
\else
$H_{\min}(T|M(AB))_\rho+H_{\min}(T'|ATC)_\rho\in O(1)$,
\fi
using the same simple attack: half the time, Bob takes the whole state, and the other half of the time, Charlie takes the whole state.

In order to extend the use of the \gamename MoE property and associated entropic uncertainty relation to scenarios where errors should be accounted for, such as QKD, we adapt the MoE game to allow for errors. That is, we show a bound on the winning probability of a robust generalisation of the \gamename MoE game where Bob and Charlie's answers are considered to be correct even if some small number of bits are wrong. The important case for QKD is where Bob is allowed to guess $t$ incorrectly up to relative error $\gamma$ but Charlie, who represents the eavesdropper, must still answer perfectly. For small enough error, the winning probability remains exponentially small in $n$. We can also handle this probability of approximate guessing as an entropic uncertainty relation, by representing the ``entropy of approximate guessing'' as an entropy of exact guessing on a modified state. Explicitly, the relation takes the now-familiar form
\iffull
\begin{align}
	H_{\min}(T|M(AB))_\sigma+H_{\min}(T'|A'TC)_{\sigma_{|(M(AB)=T)}}\in \Omega(n),
\end{align}
\else
$H_{\min}(T|M(AB))_\sigma+H_{\min}(T'|ATC)_{\sigma_{|(M(AB)=T)}}\in \Omega(n)$,
\fi
where $\sigma$ is the state modified to account for the error bit flips $\sigma=\expec_{|u|\leq\gamma n/2}X_T^u\rho X_T^u$.

\iffull\else In Section 1 of the full paper, we also give overviews of how the leaky MoE property is used to prove security of each of the primitives we study.\fi

\iffull
\subsection{Further related work}

The no-cloning property is found in a wide and growing range of cryptographic applications, such as tamper-detection \cite{Got03}, copy-protection \cite{Aar09,CMP20arxiv}, certified deletion \cite{BI20}, secure software leasing \cite{AL21,BJL+21}, and uncloneable decryption \cite{CLLZ21}.

The coset states we study act as a generalisation of subspace states --- uniform superpositions of the elements of a subspace --- introduced in the context of quantum money by Aaronson and Christiano \cite{AC12}. Rather than using the properties of subspaces, it is possible to see the generalisation to coset states as subspace states encrypted with a quantum one time pad $\ket{a_{t,t'}}=X^tZ^{t'}\ket{a}$. Coset states under this definition have been studied in the context of proofs of knowledge by Vidick and Zhang \cite{VZ21}.

Though inspired by uncloneable encryption of \cite{BL20}, uncloneable encryption using a QECM-ID also bears comparison to tamper-evident encryption, introduced by Gottesman \cite{Got03} (under the name uncloneable encryption). This is a scheme where an honest receiver can verify, during decryption, whether an eavesdropper had attempted to clone an encrypted message. We emphasize that \cite{Got03} requires both an honest sender and receiver and that our techniques are fundamentally different since they are resilient to a dishonest receiver.

Finally, the recent work of Kundu and Tan \cite{KT22arxiv} provides an alternate extension of the uncloneable encryption paradigm. They consider the case where, for each encryption key, there are multiple decryption keys. They give a construction of an encryption scheme that is uncloneable as long as the attackers receive independently generated keys. Similarly to the interaction in our model, an assumption on the communication during the decryption is used to guarantee uncloneability. Also, their results consider noise on the devices, similarly to what we are concerned with in the robust version of the game used for receiver-independent QKD; arbitrary small leakage of information between Bob and Charlie's devices, contrasting with our fixed but large leakage of Bob's measurement result; and full device-independence, which requires an interactive encryption

\subsection{Acknowledgements}

This work was supported by the Air Force Office of Scientific Research under award number FA9550-20-1-0375, Canada's NSERC, and the University of Ottawa's Research Chairs program.

\subsection{Outline}

In \cref{sec:prelims}, we introduce our notation and the relevant basic technical facts. In \cref{sec:moe}, we introduce and analyse the monogamy-of-entanglement game we study, as well as the related entropic uncertainty relation. In \cref{sec:encryption,,sec:commitment,,sec:qkd} we define and study the primitives of interactive uncloneable encryption, uncloneable bit commitment, and receiver-independent QKD, respectively. In \cref{sec:qkd}, we also study the robust version of the MoE game. The MoE properties are given as \cref{thm:stronger-monogamy} and \cref{thm:robust-stronger-monogamy}, and their expressions as entropic uncertainty relations as \cref{cor:entropic-moe} and \cref{cor:robust-entropic-moe}.
\fi 

\iffull
\section{Preliminaries}\label{sec:prelims}

In this section, we introduce the notation and recall the technical facts we use in this paper. In \cref{sec:states}, we go over the basics of quantum information and probability that we need; in \cref{sec:coset}, we discuss subspaces of vector spaces of bit strings and recall the definition of subspace coset states; and in \cref{sec:entropy}, we note the definitions of conditional min-entropy and strong extractors.

\subsection{Registers and states}\label{sec:states}

A \emph{register} is a set $X$ that represents the classical states of a physical system. Note that we may have distinct registers with the same underlying set of states. We represent registers by uppercase Latin letters and classical states from the register by the corresponding lowercase letter. For registers $X_1$ and $X_2$, write the \emph{compound register} $X_1X_2=X_1\times X_2$, representing the states of both systems. A register $Y$ is a \emph{subregister} of $X$ if $X$ is a compound register with $Y$ as a factor. For a register~$X$, define the Hilbert space $\mc{H}_X$ as the $|X|$-dimensional space spanned by the orthonormal basis $\set*{\ket{x}}{x\in X}$ called the \emph{register basis}. The pure quantum states on $X$ are given by the unit vectors of $\mc{H}_X$, up to phase. We implicitly make use of the isomorphism $\mc{H}_{XY}\cong\mc{H}_X\otimes\mc{H}_Y$.

We write the set of \emph{linear operators} $\mc{H}_X\rightarrow\mc{H}_Y$ as $\mc{L}(X,Y)$, and if $X=Y$ as $\mc{L}(X)$; the set of \emph{positive semidefinite operators} on $\mc{H}_X$ as $\mc{P}(X)$, and when $X$ is evident, write $P\geq 0$ for $P\in\mc{P}(X)$; and the set of \emph{density operators} $\mc{D}(X)=\set*{\rho\in \mc{P}(X)}{\Tr(\rho)=1}$, representing the mixed quantum states. An operator $\rho\in\mc{P}(X)$ is a \emph{subnormalised state} if $\Tr(\rho)\leq 1$. The definitions below for mixed states extend directly to subnormalised states. Write $\Id_X\in\mc{L}(X)$ for the \emph{identity operator}, and $\id_X:\mc{L}(X)\rightarrow\mc{L}(X)$ for the \emph{identity channel}. For $\rho=\rho_{XY}\in \mc{D}(XY)$, write $\rho_X=\Tr_Y(\rho_{XY})$. A state $\rho\in\mc{D}(X)$ is \emph{classical} if it is diagonal in the register basis: it corresponds to a probability distribution on $X$. As a shorthand, write $[x]:=\ketbra{x}\in \mc{D}(X)$ to represent the density operator of a deterministic classical state.  A state $\rho\in \mc{D}(XY)$ is called \emph{classical-quantum} (cq) or classical on $X$ if it can be written $\rho=\sum_{x\in X}p_x[x]\otimes\rho^x_Y$ for some $\rho^x_Y\in \mc{D}(Y)$ and $p_x\in[0,1]$. By extension, we say a state $\rho\in \mc{D}(X_1\cdots X_mY_1\cdots Y_n)$ is $\mathrm{c}^m\mathrm{q}^n$ if it is classical on each $X_i$. We say a register $X$ is \emph{classical} to assume that every state we work with is classical on~it. We say that a state $\rho_X$ is \emph{supported} on $Y$ if $Y$ is a subregister of $X$.

We represent a probability distribution on a register $X$ by a function $\pi:X\rightarrow[0,1]$ such that $\sum_{x\in X}\pi(x)=1$. When the probability distribution is implicit, we write the probability of an event $\Omega\subseteq X$ as $\Pr\squ*{\Omega}=\sum_{x\in\Omega}\pi(x)$. For any $\C$-vector space $V$, we write the \emph{expectation value} with respect to the distribution as $\expec_{x\leftarrow\pi}f(x):=\sum_{x\in X}\pi(x)f(x)$. The classical state corresponding to~$\pi$ is written $\mu_\pi=\expec_{x\leftarrow\pi}[x]\in\mc{D}(x)$. For the uniform distribution, we write the expectation simply $\expec_{x\in X}$ and the state $\mu_X$. Abusing notation a bit, when we consider a random variable with values in a register $X$, we often refer to the variable as $X$ as well.

A linear map $\Phi:\mc{L}(X)\rightarrow\mc{L}(Y)$ is called \emph{completely positive} if for any register $Z$ and $P\in \mc{P}(ZX)$, $(\id_Z\otimes\Phi)(P)\geq0$. It is \emph{trace-preserving} if for any $P\in\mc{P}(X)$, $\Tr\parens*{\Phi(P)}=\Tr(P)$; and \emph{trace non-increasing} if $\Tr\parens*{\Phi(P)}\leq\Tr(P)$. The \emph{quantum channels} $X\rightarrow Y$ are the completely positive trace-preserving (CPTP) maps $\mc{L}(X)\rightarrow\mc{L}(Y)$ --- they represent the most general quantum operations. A \emph{positive operator-valued measurement} (POVM) is a map $P:S\rightarrow\mc{P}(X)$, where $S$ and $X$ are registers, such that $\sum_{s\in S}P(s)=\Id_X$; we write $P_s:=P(s)$. A POVM $P$ is a \emph{projector-valued measurement} (PVM) if $P_sP_{s'}=\delta_{s,s'}P_s$ for all $s,s'\in S$. We can associate various channels to a measurement. For a POVM $P:S\rightarrow\mc{P}(X)$, the \emph{destructive measurement channel} is $\Psi_P:\mc{L}(X)\rightarrow\mc{L}(S)$ defined as
\begin{align}
	\Psi_P(\rho)=\sum_{s\in S}\Tr(P_s\rho)[s],
\end{align}
representing the classical outcome of a measurement; and the \emph{nondestructive measurement channel} $\Phi_P:\mc{L}(X)\rightarrow\mc{L}(SX)$ defined as
\begin{align}
	\Phi_P(\rho)=\sum_{s\in S}[s]\otimes\sqrt{P_s}\rho\sqrt{P_s},
\end{align}
which represents both the classical outcome and the perturbed quantum state after the measurement. Evidently, $\Tr_X(\Phi_P(\rho))=\Psi_P(\rho)$. For a state $\rho_{XY}\in D(XY)$, write $\rho_{P(X)XY}=(\Phi_P\otimes\id_Y)(\rho_{XY})$. Similarly, if $\rho_{XY}$ is classical on $X$, for any function $f:X\rightarrow S$, we write $\rho_{f(X)XY}=\sum_{x\in X}p_x[f(x)x]\otimes\rho^x_Y$. For any cq state $\rho_{XY}$ and any event $\Omega\subseteq X$ --- which may be phrased as either a subset or as a relation --- write the \emph{partial state}
\begin{align}
	\rho_{\land\Omega}=\rho_{XY\land\Omega}=\sum_{x\in\Omega}p_x[x]\otimes\rho^x_{Y},
\end{align}
and the \emph{conditional state} $\rho_{|\Omega}=\frac{\rho_{\land\Omega}}{\Tr\rho_{\land\Omega}}$. If the event makes reference to a measurement, \emph{e.g.} $\Omega=(P(Y)=s)$, or a function evaluation, we assume that the measurement or evaluation is undertaken by the nondestructive channel, used to come up with the partial or conditional state, and then the result is forgotten by tracing out. This may perturb registers on which the state is non-classical, so we have to in particular assure ourselves that any two measurements in the same event are compatible.

\subsection{Finite vector spaces and subspace coset states} \label{sec:coset}

Consider the vector space of bit strings $V=\Z_2^n$ over the finite field $\Z_2$. The \emph{canonical basis} of $V$ is the set $E=\{e_1,\ldots,e_n\}$, where $e_i$ is the string that is $1$ at position $i$ and $0$ elsewhere. For any $u\in V$, we expand in the basis as $u=\sum_iu_ie_i$. The \emph{inner product} on $V\times V\rightarrow\Z_2$ is defined as $u\cdot v=\sum_{i}u_iv_i$. For any subspace $a\subseteq V$, the \emph{orthogonal subspace}
\begin{align}
	a^\perp=\set*{v\in V}{u\cdot v=0\;\forall\;u\in a}.
\end{align}
This satisfies $(a^\perp)^{\perp}=a$ and $\dim a+\dim a^\perp=\dim V=n$, but in general $\spn_{\Z_2}(a\cup a^\perp)=a+a^\perp\neq V$, for example $\{00,11\}^\perp=\{00,11\}$.

A subspace $a\subseteq V$ is called a \emph{register subspace} if it may be expressed as $a=\spn_{\Z_2}S$ for some $S\subseteq E$ \cite{JNV+20arxiv}. For a register subspace, we have that $a^\perp=\spn_{\Z_2}S^c$, and therefore that $a+a^\perp=V$. In this case, we get the canonical isomorphisms $V/a\cong a^\perp$ and $V/a^\perp\cong a$. We can easily express any register subspace by an indicator vector $\iota(a)\in V$ defined by $\iota(a)_i=1$ if and only if $e_i\in a$.

The space $V$ can be be seen as a register, giving the Hilbert space $\mc{H}_V\cong(\C^2)^{\otimes n}$.

\begin{definition}[\cite{CLLZ21,VZ21}]
	Let $a\subseteq V$ be a subspace. Given $t,t'\in V$, the \emph{subspace coset state}
	\begin{align}
		\ket{a_{t,t'}}=\frac{1}{\sqrt{|a|}}\sum_{u\in a}(-1)^{u\cdot t'}\ket{u+t}.
	\end{align}
\end{definition}

If $u\in t+a$ and $u'\in t'+a^{\perp}$, we have that $\ket{a_{u,u'}}$ is equal to $\ket{a_{t,t'}}$ up to global phase. To make use of this, for any subspace $a$, we fix a linear map $\Z_2^{n-\dim a}\rightarrow\Z_2^n$, $t\mapsto t_a$ such that $t\mapsto t_a+a$ is an isomorphism $\Z_2^{n-\dim a}\cong\Z_2^n/a$, and then take, for $t\in\Z_2^{n-\dim a}$ and $t'\in\Z_2^{\dim a}$, $\ket{a_{t,t'}}:=\ket{a_{t_a,t_{a}'{\hspace{-0.5mm}\scriptscriptstyle\perp}}}$. Then, the coset states $\set*{\ket{a_{t,t'}}}{t\in\Z_2^{n-\dim a},t'\in\Z_2^{\dim a}}$ are all distinct and form an orthonormal basis of $\mc{H}_V$.

If $a$ is a register subspace, there is a particularly good choice of map. For $a^\perp=\spn_{\Z_2}\{e_{i_1},\ldots,e_{i_m}\}$ with $i_1<i_2<\ldots<i_m$, we take $t_a=\sum_{j=1}^mt_je_{i_j}$. This allows us to write the subspace coset state in this case as a Wiesner state $\ket{a_{t,t'}}=\ket{x^\theta}$, where $x=t_a+t'_{a^\perp}$ and $\theta=\iota(a)$.

\subsection{Entropy and extractors} \label{sec:entropy}

Given a state $\rho_{XY}\in\mc{D}(XY)$, the \emph{conditional min-entropy} of $X$ given $Y$ is defined as
\begin{align}
	H_{\min}(X|Y)_{\rho}=-\lg\inf\set*{\Tr(\sigma_Y)}{\rho_{XY}\leq\Id_X\otimes\sigma_Y;\sigma_Y\in\mc{P}(Y)},
\end{align}
where $\lg$ is the base-two logarithm \cite{Ren05,Tom16}. Qualitatively, this represents the uncertainty on $X$, knowing $Y$. If $\rho$ is classical on $X$, this takes on a quantitative meaning: $2^{-H_{\min}(X|Y)}$ is the maximal probability of guessing $X$ when given the register $Y$. In the absence of side information, the conditional min-entropy becomes the min-entropy $H_{\min}(X)_{\rho}=-\lg\norm{\rho_X}$, where the norm here is the operator norm.

We will use strong extractors to go from a condition on the entropy to a near-independence of registers.

\begin{definition}[\cite{KT08}]
	Let $X,Y,Z$ be classical registers. A \emph{quantum-proof $(k,\varepsilon)$-strong extractor} is a function $e:X\times Y\rightarrow Z$ that satisfies the following property. Let $\rho_{XQ}$ be a subnormalised state, where $Q$ is a quantum register. If $H_{\min}(X|Q)\geq k$, then
	\begin{align}
		\norm*{\rho_{e(X,Y)YQ}-\mu_Z\otimes\mu_Y\otimes\rho_Q}_{\Tr}\leq\varepsilon,
	\end{align}
	where $\rho_{YXQ}=\mu_Y\otimes\rho_{XQ}$.
\end{definition}
Here, the norm is the trace norm $\norm{A}_{\Tr}=\frac{1}{2}\Tr\sqrt{A^\dag A}$. Due to \cite{DPVR12}, many constructions of extractors exist. Though we will tend to stay general, we give an example of their construction that is useful to keep in mind. For any $m,n\in\N$ and $\varepsilon>0$, there exists a quantum-proof $(8\lg(3m/2\varepsilon)+m,\varepsilon)$-strong extractor $e:\Z_2^n\times\Z_2^d\rightarrow\Z_2^m$, where $d\in O(\lg(m\sqrt{n}/\varepsilon)^2\lg m)$. For course, for this to be useful, we need that $k=8\lg(3m/2\varepsilon)+m<n$. Nevertheless, it is possible to achieve an exponentially small error $\varepsilon=\eta^m$ for any output length $m$ by taking $n>8\lg(3m/2)+(1+8\lg1/\eta)m\in O(m)$, though this requires the key length $d$ to be polynomial in $m$. This example absolutely defeats the original purpose of strong extractors, to extract a large amount of near-uniform randomness using a small seed, but is of great use in our cryptographic applications. 

\section{Novel Coset State Monogamy-of-Entanglement Property}\label{sec:moe}

In this section, we introduce and prove the MoE property that we make use of throughout the paper. In \cref{sec:weak-and-strong}, we recall the MoE properties of coset states that are already known. In \cref{sec:leaky}, we show our new leaky MoE property: the result is given in \cref{thm:stronger-monogamy}. Finally, in \cref{sec:uncertainty}, we show that this MoE property is equivalent to an entropic uncertainty relation, given as \cref{cor:entropic-moe}.

\subsection{Weak and strong MoE properties}\label{sec:weak-and-strong}

Let the register $V=\Z_2^n$ and $A$ be a set of subspaces of $\Z_2^n$ of dimension $n/2$: we take $A$ to either be the set of all register subspaces of dimension $n/2$ or all subspaces of dimension $n/2$. We consider the following monogamy-of-entanglement game, played between a referee Alice, who holds $V$, and two cooperating players, Bob and Charlie.

\begin{enumerate}[1.]
	\item Alice samples a uniformly random $a\in A$ and $t,t'\in\Z_2^{n/2}$. She prepares the state $\ket{a_{t,t'}}\in\mc{H}_V$ and sends it to Bob and Charlie.
	
	\item They act by an arbitrary channel $\Phi:\mc{L}(V)\rightarrow\mc{L}(BC)$ and then are isolated, so that Bob holds~$B$ and Charlie holds~$C$.
	
	\item Alice shares $a$ with Bob and Charlie, and they each make guesses of the pair $(t,t')$.
	
	\item Bob and Charlie win if their guesses are both correct.
\end{enumerate}

It was shown in \cite{CLLZ21} that the winning probability of this game is sub-exponentially small in $n$. This is called the \emph{weak monogamy-of-entanglement property of subspace coset states.}

There is also a \emph{strong monogamy-of-entanglement property}, conjectured in the same work, which constrains the winning probability of a related game. The difference here is that the winning condition is slackened: Bob needs only guess $t$ and Charlie needs only guess $t'$ correctly to win. It was shown in \cite{CV22} that the winning probability of this game is upper-bounded by $\sqrt{e}(\cos\tfrac{\pi}{8})^n$.

\subsection{The \gamename MoE property}\label{sec:leaky}

We exhibit an even stronger version of the MoE properties by showing that the same bound holds on a family of games that can only be easier to win. In the same setting as above, the game proceeds as follows:

\begin{enumerate}[1.]
	\item Alice samples a uniformly random $a\in A$ and $t,t'\in\Z_2^{n/2}$. She prepares the state $\ket{a_{t,t'}}\in\mc{H}_V$ and sends it to Bob and Charlie.
	
	\item They act by an arbitrary channel $\Phi:\mc{L}(V)\rightarrow\mc{L}(BC)$ and then are isolated, so that Bob holds~$B$ and Charlie holds~$C$.
	
	\item Alice shares $a$ with Bob and Charlie.
	
	\item Bob makes a guess $t_B$ of $t$, which is then given to Charlie; Charlie makes a guess $t'_C$ of $t'$.
	
	\item Bob and Charlie win if their guesses are both correct.
\end{enumerate}

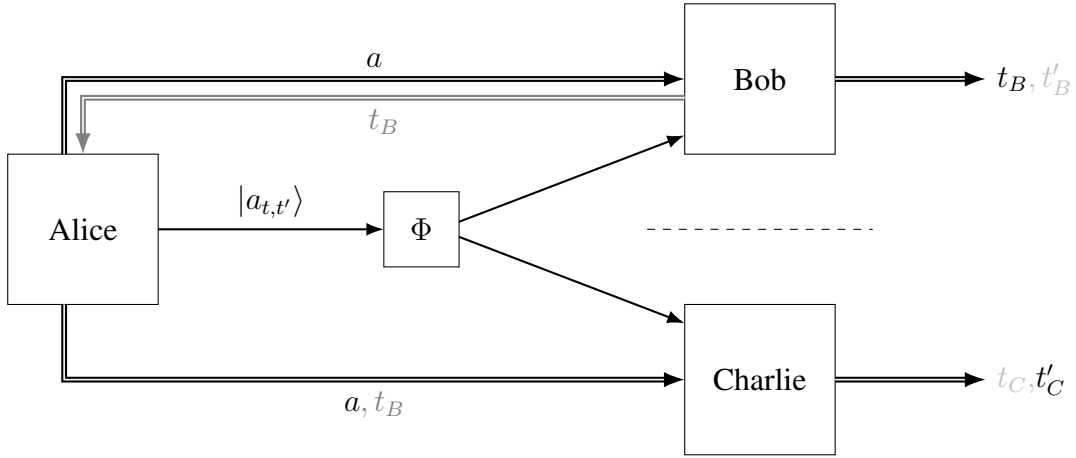
\begin{figure}[h!]
\centering
\begin{tikzpicture}
	\draw (-1,-1) rectangle (1,1) node[pos=0.5]{Alice};
	\draw[thick,-Latex] (1,0) -- (4,0) node[above, pos=0.5]{$\ket{a_{t,t'}}$};
	\draw (4,-0.5) rectangle (5,0.5) node[pos=0.5]{$\Phi$};
	\draw (8,1) rectangle (10,3) node[pos=0.5]{Bob};
	\draw (8,-1) rectangle (10,-3) node[pos=0.5]{Charlie};
	\draw[double, thick, -Latex] (-0.25,1) -- (-0.25,2) -- (8,2) node[above, pos=0.5]{$a$};
	\draw[double, thick, Latex-,gray] (0,1) -- (0,1.75) -- (8,1.75) node[below, pos=0.5]{$t_B$};
	\draw[double, thick, -Latex] (-0.25,-1) -- (-0.25,-2) -- (8,-2) node[below, pos=0.5]{$a{\color{gray},t_B}$};
	\draw[double,thick,-Latex] (10,2) -- (12,2) node[right]{$t_B{\color{lightgray},t'_B}$};
	\draw[double,thick,-Latex] (10,-2) -- (12,-2) node[right]{${\color{lightgray}t_C,}t'_C$};
	\draw[thick, -Latex] (5,0.1) -- (8,1.25);
	\draw[thick, -Latex] (5,-0.1) -- (8,-1.25);
	\draw[dashed] (7.5,0) -- (10.5,0);
\end{tikzpicture}
\caption{The subspace coset MoE games. The additional guesses Bob and Charlie need to make in the weak MoE game are given in {\color{lightgray}light gray}, and the additional interaction step in the \gamename MoE game is given in {\color{gray}dark gray}.}
\label{fig:MoE-games}
\end{figure}

We call this the \emph{$(n,A)$-\gamename monogamy-of-entanglement game}. The scenario is illustrated in \cref{fig:MoE-games}. An alternate but equivalent way to play the game, in order to bring it closer to the original form of an MoE game, is to have Alice provide Charlie with the correct value of $t$ rather than Bob's guess. The equivalence can be seen by noting that, in the original interpretation, only the cases when Bob's guess is correct are relevant to the computation of the winning probability. Next, we formalise the strategies and winning probability of this game.

\begin{definition}
	A quantum \emph{strategy} for the $(n,A)$-\gamename MoE game is a tuple of the form $\ttt{S}=\parens{B,C,\set{B^a}_{a\in A},\set{C^{a,t}}_{a\in A,t\in\Z_2^{n/2}},\Phi}$, where
	\begin{itemize}
		\item $B$ and $C$ are the registers representing Bob and Charlie's systems, respectively;
		
		\item $B^a:\Z_2^{n/2}\rightarrow\mc{P}(B)$ and $C^{a,t}:\Z_2^{n/2}\rightarrow\mc{P}(C)$ are POVMs, representing Bob and Charlie's measurements;
		
		\item $\Phi:\mc{L}(V)\rightarrow\mc{L}(BC)$ is a quantum channel, representing the splitting operation.
	\end{itemize}
	The \emph{winning probability} of a strategy $\ttt{S}$ is
	\begin{align}
		\mfk{w}_{n,A}(\ttt{S})&=\!\!\!\expec_{\substack{a\in A\\t,t'\in\Z_2^{n/2}}}\!\!\!\Tr\squ*{(B^a_t\otimes C^{a,t}_{t'})\Phi(\ketbra{a_{t,t'}})}.
	\end{align}
	The optimal winning probability of the $(n,A)$-\gamename MoE game is the supremum over all quantum strategies $\mfk{w}^\ast(n,A)=\sup_{\ttt{S}}\mfk{w}_{n,A}(\ttt{S})$.
\end{definition}

Now, we can formally express the \gamename MoE property.

\begin{theorem}\label{thm:stronger-monogamy}
	Let $n\in\N$ and $A$ be either the collection of register subspaces or the collection of all subspaces of dimension $n/2$ of $\Z_2^n$. Then,
	\begin{align}
		\mfk{w}^\ast(n,A)\leq\sqrt{e}(\cos\tfrac{\pi}{8})^n.
	\end{align}
\end{theorem}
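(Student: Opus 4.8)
The plan is to adapt the operator-norm method that Culf and Vidick \cite{CV22} used to bound the strong monogamy game, the new ingredient being that Charlie's measurement now depends on the leaked value of $t$. Throughout I would use the equivalent formulation noted above, in which Charlie is simply handed the true $t$ rather than Bob's guess (legitimate since only the Bob-correct terms enter $\mfk{w}_{n,A}$), so that a strategy is a tuple of POVMs $\set{B^a_t}$, $\set{C^{a,t}_{t'}}$ and a splitting channel $\Phi$. The first step is to rewrite the winning probability of a fixed strategy as an operator norm. Passing to the adjoint channel, $\Tr\squ{(B^a_t\otimes C^{a,t}_{t'})\Phi(\ketbra{a_{t,t'}})}=\bra{a_{t,t'}}\Phi^\dagger(B^a_t\otimes C^{a,t}_{t'})\ket{a_{t,t'}}$, and since the coset states $\set{\ket{a_{t,t'}}}{t,t'}$ form an orthonormal basis of $\mc{H}_V$ for each fixed $a$, I would assemble from $\Phi^\dagger(B^a_t\otimes C^{a,t}_{t'})$ and the projectors $\ketbra{a_{t,t'}}$ a positive operator $P_a\in\mc{P}(V)$ such that $\mfk{w}_{n,A}(\ttt{S})\le\norm{\expec_a P_a}$.

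The core estimate is the TFKW-type operator inequality: for a family of positive operators indexed by $A$, $\norm{\sum_a P_a}\le\sum_k\max_a\norm{\sqrt{P_a}\sqrt{P_{\pi_k(a)}}}$, where $\set{\pi_k}$ is a family of $|A|$ permutations of $A$ — arising from a translation/Cayley action on the set of subspaces — that covers every ordered pair exactly once. The identity permutation contributes the diagonal term $\max_a\norm{P_a}\le 1$, which becomes negligible after the $1/|A|$ normalisation; here the completeness relations $\sum_t B^a_t=\Id_B$ and $\sum_{t'}C^{a,t}_{t'}=\Id_C$ collapse the internal sums. The remaining permutations contribute the off-diagonal overlaps $\norm{\sqrt{P_a}\sqrt{P_{a'}}}$ for $a\ne a'$, and bounding these is where the factor $\cos\tfrac{\pi}{8}$ enters. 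Organising the overlaps by the relationship between $a$ and $a'$ (the dimension of $a\cap a'$, or the corresponding Hamming-type distance in the register-subspace case) and summing the resulting binomial series yields the constant $\sqrt{e}$ together with the leading factor $(\cos\tfrac{\pi}{8})^n$. I would carry out this combinatorial bookkeeping separately for the two admissible collections $A$, since the enumeration of subspaces and the covering permutation family differ, but in both cases the final bound is the stated $\sqrt{e}(\cos\tfrac{\pi}{8})^n$.

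The main obstacle is controlling the off-diagonal overlap $\norm{\sqrt{P_a}\sqrt{P_{a'}}}$ in the presence of the leak, i.e.\ showing that Charlie's knowledge of $t$ does not improve this bound beyond the no-leak case of \cite{CV22}. The key point I would exploit is that the $t$-dependence is harmless at exactly this step: in each term of the overlap computation the primal coset label $t$ is pinned by the coset-state projector — equivalently, $\ket{a_{t,t'}}=X^{t}\ket{a_{0,t'}}$ is a \emph{known} Pauli correction — so $C^{a,t}$ is just one fixed POVM whose $t'$-outcomes can be summed out using $\sum_{t'}C^{a,t}_{t'}=\Id_C$. Consequently the overlap is governed entirely by the geometry of the two bases $a$ and $a'$, the very quantity already estimated in \cite{CV22}, and the same per-qubit factor $\cos\tfrac{\pi}{8}$ survives. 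I therefore expect the hardest technical work to be defining $P_a$ so that (i) it faithfully reproduces the winning probability and (ii) its square roots interact with the coset-state projectors cleanly enough for this shift argument to reduce the leaky overlap to the strong-game computation.
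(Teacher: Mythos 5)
Your high-level blueprint (winning probability as an operator norm, the TFKW orthogonal-permutation lemma, CV22's binomial bookkeeping) is indeed the paper's, but there is a genuine gap at both of the places you identify as the technical core. First, the reduction to an operator $P_a\in\mc{P}(V)$ on Alice's register alone cannot work. If you compress the strategy into $P_a=\sum_{t,t'}\ketbra{a_{t,t'}}\Phi^\dagger(B^a_t\otimes C^{a,t}_{t'})\ketbra{a_{t,t'}}$, then $\mfk{w}_{n,A}(\ttt{S})=\Tr\squ{(\expec_aP_a)\mu_V}\leq\norm{\expec_aP_a}$ is fine, but all that survives the compression is that each $P_a$ is diagonal in the coset basis of $a$ with eigenvalues in $[0,1]$. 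Any overlap bound ``governed entirely by the geometry of the two bases'' would then also have to apply to the family $P_a=\Id$ for every $a$ (diagonal in every coset basis, eigenvalues in $[0,1]$), which has $\norm{\sqrt{P_a}\sqrt{P_b}}=1$. What rules that family out is precisely the tensor-product structure of $B^a_t\otimes C^{a,t}_{t'}$ across the two isolated registers — the entire source of monogamy — and the compression destroys it. This is why the paper stays on the tripartite space: it replaces the input state by the Choi--Jamio\l{}kowski state $J(\Phi)\in\mc{D}(VBC)$ and bounds $\mfk{w}_{n,A}(\ttt{S})\leq\norm{\expec_aP^a}$ with $P^a=\sum_{t,t'}\ketbra{a_{t,t'}}\otimes B^a_t\otimes C^{a,t}_{t'}$, a projector on $VBC$ (after Naimark), so Bob's and Charlie's operators remain separately available in the overlap estimate.

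Second, your mechanism for neutralising the leak — summing out Charlie's outcomes via $\sum_{t'}C^{a,t}_{t'}=\Id_C$ — trivialises the bound if applied in the factor carrying the coset-state projectors: it replaces the rank-one $\ketbra{a_{t,t'}}$ by the coset projector $\Pi_{a+t_a}$, and since $\Pi_{a+t_a}$ and $\Pi_{b+u_b}$ are both diagonal in the computational basis, their product has norm $1$ whenever the cosets intersect. The factor $\sqrt{|a\cap b|/2^{n/2}}$ comes precisely from keeping the superposition, via $\braket{a_{t,t'}}{\Pi_{b+u_b}}{a_{t,t'}}\leq|a\cap b|/2^{n/2}$. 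The paper's \cref{lem:coset-overlap} is therefore deliberately asymmetric: in $P^a$ it discards Bob ($B^a_t\leq\Id_B$) and keeps Charlie, while in $P^b$ it keeps Bob and sums out Charlie. Your intuition that the pinned $t$ makes $C^{a,t}$ a single fixed measurement is the correct reason the leak is harmless, but the property actually used is orthogonality within that one PVM, $C^{a,t}_{t'}C^{a,t}_{s'}=0$ for $t'\neq s'$ — together with disjointness of the cosets $t+a$ across different $t$ — which lets $\sum_{t,t'}\Pi_{b+u_b}\ketbra{a_{t,t'}}\Pi_{b+u_b}\otimes C^{a,t}_{t'}$ split as a maximum over $(t,t')$; completeness is the wrong tool. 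A smaller correction: for the collection of all subspaces the paper does not build a separate Cayley-type covering family of permutations; it first conditions on a uniformly random basis $\beta$ of $\Z_2^n$ and reuses the register-subspace combinatorics within each $\beta$, which is how the identical bound $\sqrt{e}(\cos\tfrac{\pi}{8})^n$ arises in both cases.
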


First, we note that, as in \cite{TFKW13}, we need only consider strategies for the $(n,A)$-strong MoE game where the measurements $B^a$ and $C^{a,t}$ are projective, as any measurement may be made projective by dilating using Naimark's theorem. Next, we need an important lemma.

\begin{lemma}[Lemma 2 in \cite{TFKW13}]\label{lem:sum-bound}
	Let $P^s\in\mc{P}(H)$ for $s\in S$ be a collection of positive operators. For any set of mutually orthogonal permutations $\pi_s:S\rightarrow S$ (permutations such that $\pi_{s}\circ\pi_{t}^{-1}$ has a fixed point iff $s=t$) then
	\begin{align*}
	\norm[\Big]{\sum_{s\in S}P^s}&\leq\sum_{s\in S}\max_{t\in S}\norm*{\sqrt{P^t}\sqrt{P^{\pi_s(t)}}}.
	\end{align*}
\end{lemma}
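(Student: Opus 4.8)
The plan is to reduce the operator-norm bound to a block-matrix computation via a standard ``vectorisation'' of the positive square roots $\sqrt{P^s}$. First I would introduce the operator $V=\sum_{s\in S}\sqrt{P^s}\otimes\ket{s}$ mapping $\mc{H}$ into $\mc{H}\otimes\C^{|S|}$. A direct check gives $V^\dag V=\sum_s P^s$, while $VV^\dag=\sum_{s,t\in S}\sqrt{P^s}\sqrt{P^t}\otimes\ket{s}\bra{t}$. Since any operator and its adjoint satisfy $\norm{A^\dag A}=\norm{A}^2=\norm{AA^\dag}$, this yields
\begin{align*}
\norm[\Big]{\sum_{s\in S}P^s}=\norm{V^\dag V}=\norm{VV^\dag}=\norm[\Big]{\sum_{s,t\in S}\sqrt{P^s}\sqrt{P^t}\otimes\ket{s}\bra{t}}.
\end{align*}
Hence it suffices to bound the norm of the block matrix $M$ whose $(s,t)$ block is $M_{s,t}=\sqrt{P^s}\sqrt{P^t}$.

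Next I would use the mutually orthogonal permutations to split $M$ along ``generalised diagonals''. For each $s$, set $D_s=\set*{(\pi_s(t),t)}{t\in S}$. The orthogonality hypothesis states that $\pi_s\circ\pi_{s'}^{-1}$ has a fixed point exactly when $s=s'$; unwinding, a fixed point at $r$ with $t=\pi_{s'}^{-1}(r)$ forces $\pi_s(t)=\pi_{s'}(t)$, so for $s\neq s'$ the permutations never agree and the sets $D_s$ are pairwise disjoint. Equivalently, for each fixed column index $b$ the map $s\mapsto\pi_s(b)$ is injective, hence bijective, so every pair $(a,b)\in S\times S$ lies in exactly one $D_s$; the $D_s$ therefore partition $S\times S$. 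Writing $M^{(s)}=\sum_{t\in S}M_{\pi_s(t),t}\otimes\ket{\pi_s(t)}\bra{t}$ for the restriction of $M$ to $D_s$, we get $M=\sum_s M^{(s)}$, and the triangle inequality gives $\norm{M}\leq\sum_s\norm{M^{(s)}}$.

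It then remains to evaluate each $\norm{M^{(s)}}$. Because $\pi_s$ is a bijection, $M^{(s)}$ is a ``monomial'' block matrix with exactly one nonzero block in each block row and column, and the cross terms drop out when forming $M^{(s)}(M^{(s)})^\dag=\sum_{t\in S}M_{\pi_s(t),t}M_{\pi_s(t),t}^\dag\otimes\ket{\pi_s(t)}\bra{\pi_s(t)}$, leaving a block-diagonal operator. Its norm is therefore the maximum of the block norms, so $\norm{M^{(s)}}^2=\max_t\norm{M_{\pi_s(t),t}}^2$, i.e. $\norm{M^{(s)}}=\max_t\norm{\sqrt{P^{\pi_s(t)}}\sqrt{P^t}}$. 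Using $\norm{A}=\norm{A^\dag}$ to rewrite $\norm{\sqrt{P^{\pi_s(t)}}\sqrt{P^t}}=\norm*{\sqrt{P^t}\sqrt{P^{\pi_s(t)}}}$ and summing over $s$ produces exactly the stated bound.

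The main obstacle is the combinatorial step: the whole argument hinges on the claim that the generalised diagonals $D_s$ partition $S\times S$, which is the only place the precise ``fixed point iff $s=t$'' hypothesis is consumed. I would take care to state cleanly the translation from ``$\pi_s\circ\pi_{s'}^{-1}$ fixed-point-free'' to ``$\pi_s$ and $\pi_{s'}$ nowhere agree,'' since everything downstream is the routine monomial-block-norm computation. The vectorisation and the norm identity $\norm{A^\dag A}=\norm{AA^\dag}$ are the other essential but elementary ingredients.
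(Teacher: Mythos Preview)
Your argument is correct. The paper does not actually prove this lemma; it is stated without proof and attributed to \cite{TFKW13}. Your vectorisation $V=\sum_s\sqrt{P^s}\otimes\ket{s}$, the identity $\norm{V^\dag V}=\norm{VV^\dag}$, the partition of $S\times S$ into generalised diagonals $D_s$ via the mutual-orthogonality hypothesis, and the monomial-block-norm computation are exactly the ingredients of the original proof in \cite{TFKW13}, so there is nothing to compare beyond noting that you have reproduced that argument faithfully.
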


The following technical lemma is  the final step of the proof of the theorem.

\begin{lemma}\label{lem:coset-overlap}
	For any $a,b\in A$, $\norm{P^aP^b}\leq\sqrt{\frac{|a\cap b|}{2^{n/2}}},$ where $P^a=\sum_{t,t'\in\Z_2^{n/2}}\ketbra{a_{t,t'}}\otimes B^a_t\otimes C^{a,t}_{t'}$.
\end{lemma}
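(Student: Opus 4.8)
The plan is to reduce the operator-norm bound to the elementary linear algebra of the coset-state overlaps, and then to recover the square-root factor using the orthogonality of the two players' measurements. First I would record that $P^a$ is an orthogonal projector. For fixed $a$ the states $\{\ket{a_{t,t'}}\}_{t,t'}$ form an orthonormal basis of $\mc{H}_V$, and (after the Naimark dilation already invoked for the theorem) the operators $E^a_{t,t'}:=B^a_t\otimes C^{a,t}_{t'}$ satisfy $E^a_{t,t'}E^a_{\tau,\tau'}=\delta_{t\tau}\delta_{t'\tau'}E^a_{t,t'}$ and $\sum_{t,t'}E^a_{t,t'}=\sum_t B^a_t\otimes\bigl(\sum_{t'}C^{a,t}_{t'}\bigr)=\Id$; hence $P^a=\sum_{t,t'}\ketbra{a_{t,t'}}\otimes E^a_{t,t'}$ is a sum of mutually orthogonal projectors, so a projector, and likewise for $P^b$. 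Using $(P^b)^2=P^b$ and $\norm{X}^2=\norm{XX^\dagger}$ we get $\norm{P^aP^b}^2=\norm{P^aP^bP^a}=\sup_{\Psi\in\mathrm{ran}\,P^a,\,\norm{\Psi}=1}\norm{P^b\Psi}^2$, so it suffices to show $\norm{P^b\Psi}^2\leq |a\cap b|/2^{n/2}$ for every unit $\Psi$ in the range of $P^a$.

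Next I would make the overlaps explicit. Writing a range vector as $\Psi=\sum_{t,t'}\ket{a_{t,t'}}\otimes\beta_{t,t'}$ with $\beta_{t,t'}\in\mathrm{ran}\,E^a_{t,t'}$ (so the $\beta_{t,t'}$ are mutually orthogonal and $\sum\norm{\beta_{t,t'}}^2=1$), orthonormality of $\{\ket{b_{s,s'}}\}$ gives $\norm{P^b\Psi}^2=\sum_{s,s'}\norm{E^b_{s,s'}\,\nu_{s,s'}}^2$ with $\nu_{s,s'}=\sum_{t,t'}\langle b_{s,s'}|a_{t,t'}\rangle\,\beta_{t,t'}$. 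The heart of the computation is the coset-state overlap $\langle b_{s,s'}|a_{t,t'}\rangle$. Expanding both states in the register basis, $\langle u+s_b|v+t_a\rangle$ forces $u+v=s_b+t_a$; there is a solution with $u\in b,\,v\in a$ only when $s_b+t_a\in a+b$, and then the solution set is a coset of $a\cap b$, over which the residual character sum $\sum_{c\in a\cap b}(-1)^{c\cdot(s'+t')}$ equals $|a\cap b|$ exactly when $s'+t'\in(a\cap b)^\perp$ and vanishes otherwise. Thus $|\langle b_{s,s'}|a_{t,t'}\rangle|$ equals the constant $\lambda:=|a\cap b|/2^{n/2}$ precisely when \emph{both} $t_a+s_b\in a+b$ (a condition on $t$) and $t'+s'\in(a\cap b)^\perp$ (a separate condition on $t'$), and is $0$ otherwise; for fixed $(s,s')$ each of the two conditions is met by $2^{n/2}/|a\cap b|=1/\lambda$ values, so the support factorises as a product of size $1/\lambda^2$.

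Finally I would combine these ingredients, and this is the step I expect to be the main obstacle. The naive route---bounding $\norm{E^b_{s,s'}\nu_{s,s'}}\leq\norm{\nu_{s,s'}}$ and then applying Cauchy--Schwarz to $\nu_{s,s'}=\lambda\sum_{\mathrm{supp}}(\pm)\beta_{t,t'}$ over its $1/\lambda^2$ nonzero terms---is far too lossy, returning only the vacuous estimate $\norm{P^b\Psi}^2\lesssim\lambda^{-2}$, because it permits the worst case in which all the $\beta_{t,t'}$ add coherently. The entire monogamy content lies in using that $\{B^b_s\}_s$ and $\{C^{b,s}_{s'}\}_{s'}$ (and, on the $a$-side, $\{B^a_t\}_t$ and $\{C^{a,t}_{t'}\}_{t'}$) are orthogonal measurement families, which forbids these coherent configurations. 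Concretely, I would exploit the factored support by summing over Charlie's index $s'$ first: since $\{C^{b,s}_{s'}\}_{s'}$ is projective and the only $s'$-dependence of $\langle b_{s,s'}|a_{t,t'}\rangle$ is the character $(-1)^{c\cdot s'}$ together with the coset condition $s'+t'\in(a\cap b)^\perp$, the orthogonal components $C^{b,s}_{s'}$ read off distinct Fourier modes and the $s'$-sum collapses the Charlie contribution, reducing the estimate to a Bob-only overlap governed by the subspace-state inner product $\langle a|b\rangle=|a\cap b|/2^{n/2}$. An equivalent way to organise this collapse is to apply \cref{lem:sum-bound} with the mutually orthogonal permutations given by the coset translations, whose cross-terms $\norm{\sqrt{P^t}\sqrt{P^{\pi(t)}}}$ are exactly the coset-state overlaps computed above; summing them leaves a single factor of $\lambda$. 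Either way the target $\norm{P^b\Psi}^2\leq\lambda$ follows, giving $\norm{P^aP^b}\leq\sqrt{|a\cap b|/2^{n/2}}$.
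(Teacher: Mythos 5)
Your reduction to $\norm{P^aP^bP^a}=\sup_{\Psi}\norm{P^b\Psi}^2$ over unit $\Psi\in\mathrm{ran}\,P^a$ is valid, and your overlap computation is correct: $|\langle b_{s,s'}|a_{t,t'}\rangle|$ equals $\lambda=|a\cap b|/2^{n/2}$ on a product support of size $1/\lambda^2$ and vanishes elsewhere. But the proof has a genuine gap at exactly the step you flag as the main obstacle, and neither of your two proposed mechanisms closes it. The claim that ``the orthogonal components $C^{b,s}_{s'}$ read off distinct Fourier modes'' has no basis: the $C^{b,s}_{s'}$ are arbitrary projectors on Charlie's register, with no alignment to the characters $(-1)^{c\cdot(s'+t')}$ appearing in the overlaps, and the only orthogonality available is \emph{within} each measurement family --- nothing relates the $b$-indexed family $C^{b,s}_{s'}$ to the $a$-indexed family $C^{a,t}_{t'}$ that governs the components $\beta_{t,t'}$ of $\Psi$. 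Your fallback, applying \cref{lem:sum-bound} ``with the coset translations as mutually orthogonal permutations,'' is a category error: that lemma bounds $\norm{\sum_s P^s}$ for a \emph{family} of positive operators indexed by a set carrying orthogonal permutations --- it is how \cref{thm:stronger-monogamy} handles the average over subspaces --- and it gives no purchase on the product $P^aP^b$ of two fixed projectors.

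The mechanism that actually works (and is the paper's proof) is an \emph{asymmetric relaxation} you never articulate: discard Charlie's measurement on the $b$ side and Bob's on the $a$ side, keeping the other two. Concretely, $P^b\leq\sum_{s,s'}\ketbra{b_{s,s'}}\otimes B^b_s\otimes\Id_C=\sum_s\Pi_{b+s_b}\otimes B^b_s\otimes\Id_C$ --- the $s'$-sum of coset-state projectors collapses to the coset projector $\Pi_{b+s_b}$ because the $C^{b,s}_{s'}$ have been thrown away via $C^{b,s}_{s'}\leq\Id$, not because they read Fourier modes --- and similarly $P^a\leq\sum_{t,t'}\ketbra{a_{t,t'}}\otimes\Id_B\otimes C^{a,t}_{t'}$. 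In your vector language: $\norm{P^b\Psi}^2\leq\sum_s\norm{(\Pi_{b+s_b}\otimes B^b_s\otimes\Id_C)\Psi}^2$, and for $\Psi=\sum_{t,t'}\ket{a_{t,t'}}\otimes\beta_{t,t'}$ the cross terms vanish: across distinct $t$ because the cosets $a+t_a$ are disjoint, so the projections $\Pi_{b+s_b}\ket{a_{t,t'}}$ are orthogonal; across distinct $t'$ at equal $t$ because $\beta_{t,t'}\in\mathrm{ran}(B^a_t\otimes C^{a,t}_{t'})$ and $B^b_s\otimes\Id_C$ preserves the $C$-register orthogonality, giving $C^{a,t}_{t'}C^{a,t}_{\tau'}=0$ in the inner product. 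This yields $\norm{P^b\Psi}^2\leq\max_{s,t,t'}\braket{a_{t,t'}}{\Pi_{b+s_b}}{a_{t,t'}}\cdot\sum_{t,t'}\norm{\beta_{t,t'}}^2\leq\lambda$, which is the paper's chain of inequalities (there organised via the $C^\ast$ identity and a maximum over Bob's outcome $s$, using that the $B^b_s$ are orthogonal projectors). Without dropping one measurement on each side --- in opposite slots --- you cannot invoke both orthogonality structures simultaneously, which is precisely why your Cauchy--Schwarz estimate stalls at the vacuous $\lambda^{-2}$ and why the sketch, as written, does not constitute a proof.
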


\begin{proof}
	First, note that $P^a\leq\sum_{t,t'}\ketbra{a_{t,t'}}\otimes\Id_B\otimes C^{a,t}_{t'}$ and
	\begin{align}
		P^b\leq\sum_{u,u'}\ketbra{b_{u,u'}}\otimes B^b_{u}\otimes\Id_C=\sum_{u}\Pi_{b+u_b}\otimes B^b_{u}\otimes\Id_C,
	\end{align}
	where $\Pi_{b+u_b}=\sum_{v\in b+u_b}\ketbra{v}$ is the projector onto $b+u_b$. Then,
	\begin{align}
	\begin{split}
		\norm*{P^aP^b}&\leq\norm[\Big]{\sum_{t,t',u}\ketbra{a_{t,t'}}\Pi_{b+u_b}\otimes B^b_u\otimes C^{a,t}_{t'}}\\
		&=\max_{u\in\Z_2^{n/2}}\norm[\Big]{\sum_{t,t'}\ketbra{a_{t,t'}}\Pi_{b+u_b}\otimes C^{a,t}_{t'}},
	\end{split}
	\end{align}
	since the $B^b_u$ are orthogonal projectors. Next, by the $C^\ast$ identity,
	\begin{align}
		\norm[\Big]{\sum_{t,t'}\ketbra{a_{t,t'}}\Pi_{b+u_b}\otimes C^{a,t}_{t'}}=\norm[\Big]{\sum_{t,t'}\Pi_{b+u_b}\ketbra{a_{t,t'}}\Pi_{b+u_b}\otimes C^{a,t}_{t'}}^{1/2}.
	\end{align}
	Now, the terms in this sum are Hermitian with orthogonal supports, because $\Pi_{b+u_b}\ketbra{a_{t,t'}}\Pi_{b+u_b}$ provides the orthogonality for different values of $t$, and equal values of $t$, $C^{a,t}_{t'}$ provides it for different values of $t'$. Therefore, we can again decompose this norm as the maximum of the norms of each term. Putting this together, we get
	\begin{align}
	\begin{split}
		\norm*{P^aP^b}&\leq\max_{t,t',u\in\Z_2^{n/2}}\norm*{\Pi_{b+u_b}\ketbra{a_{t,t'}}\Pi_{b+u_b}}^{1/2}=\max_{t,t',u\in\Z_2^{n/2}}\sqrt{\braket{a_{t,t'}}{\Pi_{b+u_b}}{a_{t,t'}}},
	\end{split}
	\end{align}
	and we complete the proof by noting that
	\begin{align}
		\braket{a_{t,t'}}{\Pi_{b+u_b}}{a_{t,t'}}=\frac{1}{2^{n/2}}\sum_{v\in(a+t_a)\cap(b+u_b)}|(-1)^{t'_{a}{\scriptscriptstyle\perp}\cdot v}|^2\leq\frac{|a\cap b|}{2^{n/2}}.
	\end{align}
\end{proof}

Now, we can proceed to the proof of \cref{thm:stronger-monogamy}, which follows the method of the analogous proof in \cite{CV22}.

\begin{proof}[Proof of \cref{thm:stronger-monogamy}]
	First, for any strategy, we upper bound the winning probability by the norm of a related operator. Using the Choi-Jamio\l{}kowski representation $J(\Phi)=\frac{1}{2^n}\sum_{u,v\in\Z_2^n}\ketbra{u}{v}\otimes\Phi(\ketbra{u}{v})\in\mc{D}(VBC)$ of $\Phi$, we see that
	\begin{align}
	\begin{split}
		\mfk{w}_{n,A}(\ttt{S})&=\expec_{a\in A}\sum_{t,t'\in\Z_2^{n/2}}\Tr\squ*{(\ketbra{a_{t,t'}}\otimes B^a_t\otimes C^{a,t}_{t'})J(\Phi)}\\
		&\leq\norm[\Big]{\expec_{a\in A}\sum_{t,t'\in\Z_2^{n/2}}\ketbra{a_{t,t'}}\otimes B^a_t\otimes C^{a,t}_{t'}}.
	\end{split}
	\end{align}
	Using the notation of the previous lemma, this is $\mfk{w}_{n,A}(\ttt{S})\leq\norm*{\expec_aP^a}$. In the case that $A$ is the set of all subspaces of dimension $n/2$, we split the expectation into two: first we take the average over the bases $\beta$ of $\Z_2^n$, and then over the subspaces than can be spanned by that basis, that is
	\begin{align}
	\begin{split}
		\mfk{w}_{n,A}(\ttt{S})&\leq\norm[\Big]{\expec_{\beta\subseteq\Z_2^n\text{ basis}}\expec_{\gamma\subseteq\beta,|\gamma|=n/2}P^{\spn\gamma}}\leq\expec_{\beta\subseteq\Z_2^n\text{ basis}}\norm[\Big]{\expec_{\gamma\subseteq\beta,|\gamma|=n/2}P^{\spn\gamma}}.
	\end{split}
	\end{align}
	If A is the set of register subspaces, we don't need to take this step as we have $\mfk{w}_{n,A}(\ttt{S})\leq\norm{\expec_{\gamma\subseteq E,|\gamma|=n/2}P^{\spn\gamma}}$. In either case, we will complete the proof by fixing $\beta$ and showing that $\norm{\expec_{\gamma\subseteq\beta,|\gamma|=n/2}P^{\spn\gamma}}\leq\sqrt{e}(\cos\tfrac{\pi}{8})^n$. Let $S$ be the set of subsets of $\beta$ of cardinality $n/2$. There exists a family of orthogonal permutations $\pi_s:S\rightarrow S$ such that for each $k=0,\cdots,\frac{n}{2}$, the number of permutations such that $|\gamma\cap\pi_s(\gamma)|=\dim(\spn\gamma\cap\spn\pi_s(\gamma))=\frac{n}{2}-k$ for each $\gamma$ is $\binom{n/2}{k}^2$. Using \cref{lem:sum-bound} and then \cref{lem:coset-overlap}, we have, since $P^a$ is a projector,
	\begin{align}
	\begin{split}
		\norm[\Big]{\expec_{\gamma\in S}P^{\spn\gamma}}&\leq\expec_{s\in S}\max_{\gamma\in S}\norm{P^{\spn\gamma}P^{\spn\pi_s(\gamma)}}\\
		&\leq\expec_{s\in S}\max_{\gamma\in S}\sqrt{\frac{\abs*{\spn\gamma\cap\spn\pi_s(\gamma)}}{2^{n/2}}}\\
		&=\frac{1}{\binom{n}{n/2}}\sum_{k=0}^{n/2}\binom{n/2}{k}^2\sqrt{\frac{2^{n/2-k}}{2^{n/2}}}=\frac{1}{\binom{n}{n/2}}\sum_{k=0}^{n/2}\binom{n/2}{k}^22^{-k/2}.
	\end{split}
	\end{align}
	Using a result of \cite{CV22}, this is upper-bounded by $\sqrt{e}(\cos\tfrac{\pi}{8})^n$, finishing the proof.
\end{proof}

\subsection{A new type of entropic uncertainty relation}\label{sec:uncertainty}

We define a generalisation of the min-entropy that can be used to express MoE properties.

\begin{definition}
	Let $\rho$ be a state supported on not necessarily distinct classical registers $X_1,\ldots,X_n$ and quantum registers $A_1,\ldots,A_n$. For POVMs $M^i:X_i\rightarrow\mc{P}(A_i)$, write
	\begin{align}
		H_{\min}(X_1|M^1(A_1);\ldots;X_n|M^n(A_n))_\rho=-\lg\Tr\squ*{(\cdots(\rho_{\land(M^1(A_1)=X_1)})\cdots)_{\land(M^n(A_n)=X_n)}}.
	\end{align}
	Then, we define the \emph{sequential min-entropy} of $X_1,\ldots,X_n$ knowing $A_1,\ldots,A_n$ as
	\begin{align}
		H_{\min}(X_1|A_1;\ldots;X_n|A_n)_\rho=\inf_{M^1,\ldots,M^n\text{ POVMs}}H_{\min}(X_1|M^1(A_1);\ldots;X_n|M^n(A_n)).
	\end{align}
\end{definition}
Note that the sequential min-entropy is a generalisation of the conditional min-entropy in the sense that they are the same for $n=1$.

The winning probability of the $(n,A)$-\gamename MoE game may be phrased using this entropy. First, for registers $T=T'=\Z_2^{n/2}$ and $A$ representing either the register subspaces or all subspaces of $\Z_2^n$ of dimension $n/2$, Alice prepares $\rho_{ATT'}=\mu_A\otimes\mu_T\otimes\mu_{T'}$, and then copies $A$ and prepares coset states on $V=\Z_2^n$ accordingly to get
\begin{align}
	\rho_{AA'TT'V}=\expec_{a,t,t'}[aatt']\otimes\ketbra{a_{t,t'}}.
\end{align}
Bob and Charlie act with a channel $\Phi$, giving $\rho_{AA'TT'BC}=(\id_{AA'TT'}\otimes\Phi)(\rho_{AA'TT'V})$. In terms of the sequential min-entropy, the \gamename MoE property is the statement that
\begin{align}\label{eq:sequential-leaky-moe}
	H_{\min}(T|AB;T'|A'TC)_\rho\geq-\lg\mfk{w}^\ast(n,A)\geq(-\lg\cos\tfrac{\pi}{8})n-\tfrac{1}{2\ln 2}.
\end{align}
This expression follows directly from the definition. The only snarl is that, in general in the definition of the sequential min-entropy, Bob's measurement may not preserve $A$; and
similarly Charlie’s measurement may not preserve $A'T$. However, since these classical registers are not reused, only the diagonal blocks have any effect, and therefore, we may assume that the measurements are diagonal on the classical registers. As such, the infimum over the measurements is attained by those measurements that correspond to strategies. Note that any MoE game admits an entropic expression of this form.

To close off this section, we present a way to expand the sequential min-entropy as an entropic uncertainty relation.

\begin{proposition}\label{prop:sequential-to-entropic-uncertainty}
	Let $\rho$ be a state supported on classical registers $X,Y$ and quantum registers $A,B$. Then,
	\begin{align}
		H_{\min}(X|A;Y|B)_\rho=\inf_{M:X\rightarrow\mc{P}(A)\text{ POVM}}\parens*{H_{\min}(X|M(A))_\rho+H_{\min}(Y|B)_{\rho_{|(M(A)=X)}}}.
	\end{align}
\end{proposition}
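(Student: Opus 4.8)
The goal is to show the sequential min-entropy $H_{\min}(X|A;Y|B)_\rho$ factors as an infimum over a single measurement $M$ on $A$ of a sum of two ordinary conditional min-entropies. The natural approach is to unwind both sides back to their defining trace expressions and match them term by term, so the plan is essentially a bookkeeping argument around the negative logarithm.

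Let me think about what I would actually do. By definition, $H_{\min}(X|A;Y|B)_\rho = \inf_{M,N} \bigl(-\lg \Tr[(\rho_{\land(M(A)=X)})_{\land(N(B)=Y)}]\bigr)$, where the infimum ranges over POVMs $M:X\to\mc{P}(A)$ and $N:Y\to\mc{P}(B)$. Since $-\lg$ is decreasing, the outer infimum is a supremum of the inner trace, and since the measurements $M$ and $N$ act on disjoint registers $A$ and $B$ respectively, the two conditionings commute and can be decoupled. The key observation I would exploit is that, after fixing $M$, the inner optimization over $N$ on the state $\rho_{\land(M(A)=X)}$ is exactly the optimization that defines the conditional min-entropy $H_{\min}(Y|B)$ on that partial state. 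Concretely, I would write the trace as $\Tr[\rho_{\land(M(A)=X)}]\cdot\Tr[(\rho_{|(M(A)=X)})_{\land(N(B)=Y)}]$, factoring out the normalization. Taking $-\lg$ converts the product into a sum: the first factor gives $H_{\min}(X|M(A))_\rho$ (since $\Tr[\rho_{\land(M(A)=X)}] = 2^{-H_{\min}(X|M(A))}$ for a classical-on-$X$ state measured by $M$), and the infimum of the second factor over $N$ gives $H_{\min}(Y|B)_{\rho_{|(M(A)=X)}}$, using the quantitative guessing-probability interpretation of min-entropy recalled in \cref{sec:entropy}.

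The remaining step is to pull the infimum over $N$ inside and then take the infimum over $M$ of the resulting sum. I would argue that $\inf_M \inf_N (\cdots) = \inf_M (H_{\min}(X|M(A)) + \inf_N (\cdots))$, which holds because the first summand $H_{\min}(X|M(A))$ does not depend on $N$, so the inner $\inf_N$ acts only on the second summand and produces precisely $H_{\min}(Y|B)_{\rho_{|(M(A)=X)}}$. This yields the claimed identity.

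The main obstacle I anticipate is justifying the clean factorization of the doubly-conditioned trace and the identification of $\Tr[\rho_{\land(M(A)=X)}]$ with $2^{-H_{\min}(X|M(A))}$. This requires care about two points flagged in the excerpt itself: first, that $M$ acts nondestructively and then the outcome register is traced out, so one must check that the second conditioning on $B$ is unaffected by $M$'s action on $A$ (true since they act on disjoint registers); and second, that for a state classical on $X$, measuring $A$ with a POVM to guess $X$ and taking the success probability indeed realizes $2^{-H_{\min}(X|M(A))}$ — this is the standard operational meaning of the conditional min-entropy for cq states, but I would need to confirm it applies to the post-measurement (and possibly subnormalized) partial state $\rho_{|(M(A)=X)}$ as well. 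Once these identifications are in place, the rest is the elementary $-\lg(pq) = -\lg p - \lg q$ manipulation together with commuting the two infima.
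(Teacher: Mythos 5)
Your proposal is correct and follows essentially the same route as the paper's proof: unwind the definition, factor the doubly-conditioned trace as $\Tr\squ*{\rho_{\land(M(A)=X)}}\Tr\squ*{(\rho_{|(M(A)=X)})_{\land(N(B)=Y)}}$, convert the product to a sum via $-\lg$, and pull $\inf_N$ inside since the first summand is $N$-independent. The obstacle you flag is even milder than you anticipate, since $H_{\min}(X|M(A))_\rho=-\lg\Tr\squ*{\rho_{\land(M(A)=X)}}$ holds by the paper's definitional convention for fixed measurements, and the conditional state $\rho_{|(M(A)=X)}$ is normalized by construction, so the standard guessing-probability characterization of $H_{\min}(Y|B)$ applies directly.
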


Note the contrast between this entropic uncertainty relation and that found in \cite{TFKW13}. Most importantly, their relation considers the min-entropy of the same state on both terms, whereas ours uses different, albeit closely related, states. This avoids the shortcoming of their entropic uncertainty relation --- that the entropy can remain bounded for any dimension of Alice's space --- and thus allows us to make use of the full power of the MoE property in terms of an entropy.

\begin{proof}
	This follows immediately from the definition. We have
	\begin{align}
	\begin{split}
		H_{\min}(X|A;Y|B)&=\inf_{M,N}-\lg\Tr\squ*{(\rho_{\land(M(A)=X)})_{\land(N(B)=Y)}}\\
		&=\inf_{M,N}-\lg\Tr\squ*{\rho_{\land(M(A)=X)}}\Tr\squ*{(\rho_{|(M(A)=X)})_{\land(N(B)=Y)}}\\
		&=\inf_{M}\parens*{-\lg\Tr\squ*{\rho_{\land(M(A)=X)}}+\inf_N-\lg\Tr\squ*{(\rho_{|(M(A)=X)})_{\land(N(B)=Y)}}}\\
		&=\inf_{M}\parens*{H_{\min}(X|M(A))_\rho+H_{\min}(Y|B)_{\rho_{|(M(A)=X)}}}\qedhere
	\end{split}
	\end{align}
\end{proof}

Using the above proposition, we may express the \gamename MoE property as an entropic uncertainty relation.

\begin{corollary}[Leaky MoE entropic uncertainty relation]\label{cor:entropic-moe}
	For any measurement $M:T\rightarrow\mc{P}(AB)$ Bob makes in the leaky MoE game, we have
	\begin{align}
		H_{\min}(T|M(AB))_{\rho}+H_{\min}(T'|A'TC)_{\rho_{|(M(AB)=T)}}\geq(-\lg\cos\tfrac{\pi}{8})n-\tfrac{1}{2\ln2}.
	\end{align}
\end{corollary}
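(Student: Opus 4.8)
The plan is to read this off as a direct consequence of the sequential-entropy reformulation of the \gamename MoE property together with the decomposition of the sequential min-entropy in \cref{prop:sequential-to-entropic-uncertainty}; essentially all the work has already been done, and what remains is to chain the correct inequalities. First I would apply \cref{prop:sequential-to-entropic-uncertainty} with the substitution $X=T$, $A=AB$ (Bob's side) and $Y=T'$, $B=A'TC$ (Charlie's side), using the state $\rho=\rho_{AA'TT'BC}$ obtained after Bob and Charlie's splitting channel. Since $\rho$ is classical on both $T$ and $T'$ (these registers hold the uniformly sampled coset representatives), the hypotheses of the proposition are met, and it yields
\[
H_{\min}(T|AB;T'|A'TC)_\rho=\inf_{M}\left(H_{\min}(T|M(AB))_\rho+H_{\min}(T'|A'TC)_{\rho_{|(M(AB)=T)}}\right),
\]
where the infimum ranges over all POVMs $M:T\rightarrow\mc{P}(AB)$ representing Bob's guess.

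Next, because the right-hand side is an infimum over $M$, every individual term dominates the infimum. In particular, for the specific measurement $M$ fixed in the statement of the corollary,
\[
H_{\min}(T|M(AB))_\rho+H_{\min}(T'|A'TC)_{\rho_{|(M(AB)=T)}}\geq H_{\min}(T|AB;T'|A'TC)_\rho.
\]
Finally I would invoke \eqref{eq:sequential-leaky-moe}, which combines the identity $H_{\min}(T|AB;T'|A'TC)_\rho=-\lg\mfk{w}^\ast(n,A)$ with the bound $\mfk{w}^\ast(n,A)\leq\sqrt{e}(\cos\tfrac{\pi}{8})^n$ of \cref{thm:stronger-monogamy} to lower-bound the sequential min-entropy by $(-\lg\cos\tfrac{\pi}{8})n-\tfrac{1}{2\ln2}$ (using $\lg e=1/\ln 2$ to produce the additive constant). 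Chaining the two displayed inequalities gives the claim with no new estimate required.

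The only point requiring any care — and the closest thing to an obstacle — is the bookkeeping around the infimum. One must confirm that the measurement $M$ appearing in the corollary is exactly the object ranged over in \cref{prop:sequential-to-entropic-uncertainty}, and that the restriction to measurements diagonal on the classical registers (as noted in the discussion following \eqref{eq:sequential-leaky-moe}, where $M$ need not preserve $A$, nor the second measurement $A'T$) leaves the infimum unchanged, so that the per-measurement lower bound is genuinely valid for every $M$. Once this is checked, the corollary is immediate.
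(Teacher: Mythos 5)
Your proposal is correct and is exactly the paper's proof: the paper derives \cref{cor:entropic-moe} by combining \cref{thm:stronger-monogamy} with \cref{prop:sequential-to-entropic-uncertainty} via \cref{eq:sequential-leaky-moe}, precisely the chain you describe, and your handling of the infimum-over-POVMs bookkeeping (including the diagonal-on-classical-registers point) matches the paper's discussion; note only that \cref{eq:sequential-leaky-moe} is stated as the inequality $H_{\min}(T|AB;T'|A'TC)_\rho\geq-\lg\mfk{w}^\ast(n,A)$ rather than an identity, which is all your argument needs.
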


This follows immediately by combining \cref{thm:stronger-monogamy} with \cref{prop:sequential-to-entropic-uncertainty} via \cref{eq:sequential-leaky-moe}. This is the form of the bound that we make use of throughout the remainder of the paper. 

\section{Interactive Uncloneable Encryption}\label{sec:encryption}

In this section, we discuss our first application, introduced in \cref{sec:intro-encryption}. In \cref{sec:qecm-id}, we introduce the formalism used for interactive uncloneable encryption and discuss its security. In \cref{sec:encryption-construction}, we give a construction, given as \cref{prot:qecm-id}, and prove its security using the leaky MoE property of the previous section.

\subsection{QECMs with interactive decryption and their security} \label{sec:qecm-id}

We construct an uncloneable encryption scheme which requires only a communication assumption. That is, in order to decrypt a message, the sender Alice is required to have a short interaction with the receiver Bob. Note that, like uncloneable encryption, interactive uncloneable encryption does not assume an intended recipient, but once the interaction is started, only the party that initiated the interaction will be able to decrypt the message with high probability. First, in order to make sense of this interactive decryption, we extend the idea of a quantum encryption of classical messages of \cite{BL20}, by allowing the decryption to contain an interaction between the sender Alice and the receiver Bob. This allows for uncloneability via the \gamename MoE property, as it will permit Alice to check whether an eavesdropper has the ciphertext by checking whether Bob holds an uncorrelated piece of information. We present this formally.

\begin{definition}
	A \emph{quantum encryption of classical messages with interactive decryption (QECM-ID)} is a tuple $\ttt{Q}=\parens*{\ttt{Key},\ttt{Enc},\ttt{Dec}}$.
	\begin{itemize}
		\item $\ttt{Key}:\mc{D}(\{0\})\rightarrow\mc{D}(K)$ is the quantum channel representing the key-generation algorithm, where $K$ is the classical key register.
		\item $\ttt{Enc}:\mc{D}(KM)\rightarrow\mc{D}(KMC)$ is the quantum channel representing the encryption algorithm, where $M$ is the classical message register and $C$ is the quantum ciphertext register. $\ttt{Enc}$ preserves $KM$, \emph{i.e.} $\ttt{Enc}([km])=[km]\otimes\sigma^{km}_C$, where $\sigma^{km}_C$ is the quantum ciphertext.
		\item The decryption algorithm $\ttt{Dec}$ is an interaction between Alice and Bob that takes a state $\rho_{KMB}$ to $\rho_{KMF\hat{M}B'}$, where Alice holds $K$, $M$, and $F=\Z_2$ (a classical register that indicates whether Alice aborts (0) or accepts the decryption (1)); and Bob holds $\hat{M}$ (a classical register holding Bob's decryption of the message), and $B$ and $B'$  (additional quantum registers).
	\end{itemize}
	
	The scheme is \emph{$\varepsilon$-correct} if, for any classical state $\rho_M$, when Alice and Bob run $\ttt{Dec}$ as intended on $\rho_{KMC}=\ttt{Enc}(\ttt{Key}([0])\otimes\rho_M)$ for $B=C$, they get $\rho_{KMF\hat{M}}$ such that\footnote{We use this definition as it presents an operational way to simultaneously lower bound the probabilities of aborting and decrypting the correct message.} \begin{align}
		\norm{\rho_{M\hat{M}\land(F=1)}-\rho_{MM}}_{\Tr}\leq\varepsilon.
	\end{align}
\end{definition}

Note that this reduces to the original definition of a QECM if the decryption is a simple one-round interaction: Alice sends the key $k$ to Bob, who uses it to decrypt the ciphertext, and Alice always accepts the decryption. We extend the security properties of indistinguishable, uncloneable, and uncloneable-indistinguishable security of a QECM to this setting as well. Intuitively, the definitions are meant to replace the condition of Bob guessing correctly with Alice accepting the decryption.

First, we can describe the security properties by means of security games. The \emph{indistinguishable security game} is played by an adversary Bob against a challenger Alice.
\begin{enumerate}
	\item Bob prepares a cq state $\rho_{MS}$ and sends register $M$ to Alice, keeping hold of the side-information.
	
	\item Alice samples a bit $y$ uniformly at random. If $y=0$ she replaces $M$ with a fixed message~$m_0$; else she preserves $M$.
	
	\item Alice samples a key using $\ttt{Key}$ and encrypts the message. She then sends the ciphertext to Bob.
	
	\item Bob attempts to guess $y$. He wins if he guesses correctly.
\end{enumerate}
Indistinguishable security is achieved if the winning probability of this game is only slightly above~$\frac{1}{2}$. This is a standard property of encryption schemes.

Uncloneable security guarantees that, even if a colluding party decrypts, an eavesdropper can only guess the message as well as her side information allows. The \emph{uncloneable security game} is played by two cooperating adversaries Bob and Eve against a challenger Alice.
\begin{enumerate}
	\item Alice samples a message uniformly at random. She samples a key and encrypts the message. She sends the ciphertext to the adversaries.
	
	\item The adversaries split the state between them using a quantum channel, and then may no longer communicate.
	
	\item Alice and Bob decrypt with the interaction $\ttt{Dec}$, and Eve eavesdrops on their interactions.
	
	\item Eve attempts to guess the message. The adversaries win if Alice accepts the decryption ($f=1$) and Eve guesses correctly.
\end{enumerate}
Uncloneable security is achieved if the winning probability is only slightly above the probability of Alice accepting and Eve guessing the message given no information $\frac{\Pr[F=1]}{|M|}$.

Finally, uncloneable-indistinguishable security combines uncloneable and indistinguishable security: it guarantees that, even if a colluding party decrypts, an eavesdropper cannot distinguish between the encryptions of an intended message and a fixed message. The \emph{uncloneable-indistinguishable security game} is also played by two cooperating adversaries against a challenger.
\begin{enumerate}
	\item The adversaries prepare a cq state $\rho_{MS}$ and send register $M$ to Alice.
	
	\item Alice samples a bit $y$ uniformly at random. If $y=0$ she replaces $M$ with a fixed message~$m_0$; else she preserves $M$.
	
	\item Alice samples a key and encrypts the message. She sends the ciphertext to the adversaries.
	
	\item The adversaries split the state between them using a quantum channel, and then may no longer communicate.
		
	\item Alice and Bob decrypt with the interaction $\ttt{Dec}$, and Eve eavesdrops on their interactions.
	
	\item Eve tries to guess $y$. The adversaries win if Alice accepts the decryption and Eve guesses correctly.
\end{enumerate}
Uncloneable-indistinguishable security is achieved if the winning probability is only slightly above $\tfrac{1}{2}\Pr\squ{F=1}$, half the probability of accepting.

We now formalise the intuition of these security games in a way that is amenable to security proofs in the information-theoretic setting.

\begin{definition}
	Let $\ttt{Q}=\parens*{\ttt{Key},\ttt{Enc},\ttt{Dec}}$ be a QECM-ID. We say the scheme satisfies
	\begin{description}
		\item[$\varepsilon_1$-indistinguishable security] if
		\begin{align}
			\norm{\rho_{CS|(Y=0)}-\rho_{CS|(Y=1)}}_{\Tr}\leq\varepsilon_1,
		\end{align}
		for $\rho$ prepared as follows. Fix $m_0\in M$, and let $Y=\Z_2$ and $\rho_{MS}$ be any cq state. Alice prepares the state $\rho_{MSY}=\tfrac{1}{2}([m_0]\otimes\rho_S\otimes[0]+\rho_{MS}\otimes[1])$, then encrypts to get $\rho_{KMCSY}=(\ttt{Enc}\otimes\id_{SY})(\ttt{Key}([0])\otimes\rho_{MSY})$.
		
		\item[$\varepsilon_2$-uncloneable security] if
		\begin{align}
			\Pr\squ*{M=\check{M}\land F=1}_\rho\leq\frac{1}{|M|}\Pr\squ*{F=1}_\rho+\varepsilon_2,
		\end{align}
		for $\rho$ prepared as follows. Let $\rho_{M}=\mu_M$ be the maximally mixed state. Alice then encrypts~${\rho_{KMC}=\ttt{Enc}(\ttt{Key}([0])\otimes\rho_{M})}$ and an eavesdropper Eve acts with a quantum channel~${\Phi:\mc{L}(C)\rightarrow\mc{L}(BE)}$ to get $\rho_{KMBE}=(\id_{KM}\otimes\Phi)(\rho_{KMC})$. Then, after eavesdropping on all the interactions during $\ttt{Dec}$, Eve produces a guess $\check{M}$ of $M$.
		
		\item[$\varepsilon_3$-uncloneable-indistinguishable security] if
		\begin{align}
			\norm{\rho_{E'|(Y=0)\land(F=1)}-\rho_{E'|(Y=1)\land(F=1)}}_{\Tr}\leq\varepsilon_3,
		\end{align}
		for $\rho$ prepared as follows. Fix $m_0\in M$, and let $Y=\Z_2$ and $\rho_{MS}$ be any cq state. Alice prepares the state $\rho_{MSY}=\tfrac{1}{2}([m_0]\otimes\rho_S\otimes[0]+\rho_{MS}\otimes[1])$, then encrypts to get $\rho_{KMCSY}=(\ttt{Enc}\otimes\id_{SY})(\ttt{Key}([0])\otimes\rho_{MSY})$. Next, an eavesdropper Eve acts with a quantum channel $\Phi:\mc{L}(CS)\rightarrow\mc{L}(BE)$ to get $\rho_{KMBEY}=(\id_{KM}\otimes\Phi\otimes\id_Y)(\rho_{KMCSY})$ and after eavesdropping on all the interactions during $\ttt{Dec}$, Eve holds a register $E'$.
	\end{description}
\end{definition}

The security definitions are illustrated in \cref{fig:enc-defs}.

\begin{figure}[h!]
	\centering
	\begin{subfigure}{0.4\textwidth}
		\centering
		\begin{tikzpicture}
			\node (zero) at (0,0) {$0$};
			\node (key) at (2,0) {$K$};
			\node (message) at (2,-0.5) {$M$};
			\node (cipher) at (4,-1) {$C$};
			\node (mhat) at (6,-1) {$\hat{M}$};
			\node (message2) at (6,-0.5) {$M$};
			\node (flag) at (6,0.5) {$F$};

			\draw (zero) -- (key);
			\draw (3,-1) -- (cipher);
			\draw (cipher) -- (mhat);
			\draw (key) -- (5,0);
			\draw (message) -- (message2);
			\draw (flag) -- (5,0.5);
			\filldraw[fill=white] (1,0.2) node[above]{$\ttt{Key}$} -- (1.2,0.2) -- (1.2,-0.2) -- (0.8,-0.2) -- (0.8,0.2) -- cycle;
			\filldraw[fill=white] (3,0.1) node[above]{$\ttt{Enc}$} -- (3.2,0.1) -- (3.2,-1.1) -- (2.8,-1.1) -- (2.8,0.1) -- cycle;
			\filldraw[fill=white] (5,0.6) node[above]{$\ttt{Dec}$} -- (5.2,0.6) -- (5.2,-1.1) -- (4.8,-1.1) -- (4.8,0.6) -- cycle;
			\draw[dotted] (4.8,-0.75) -- (5.2,-0.75);
		\end{tikzpicture}
		\label{fig:enc-def-corr}
		\caption{Correctness}
	\end{subfigure}
	\hfill
	\begin{subfigure}{0.59\textwidth}
		\centering
		\begin{tikzpicture}[scale=0.9]
			\node (zero) at (0,0) {$0$};
			\node (key) at (2,0) {$K$};
			\node (message) at (2,-0.5) {$M$};
			\node at (1.3,-0.5) {$\mu_M\{$};
			\node (cipher) at (4,-1.25) {$C$};
			\node (bob) at (6,-1) {$B$};
			\node (eve) at (6,-1.5) {$E$};
			\node (mcheck) at (8,-1.5) {$\check{M}$};
			\node (message2) at (8,-0.5) {$M$};
			\node (flag) at (8,0.5) {$F$};

			\draw (zero) -- (key);
			\draw (3,-1.25) -- (cipher);
			\draw (cipher) -- (4.65,-1.25);
			\draw (5.35,-1) -- (bob);
			\draw (5.35,-1.5) -- (eve);
			\draw[-Latex] (bob) -- (7,-1) -- (7,-1.5);
			\draw (eve) -- (mcheck);
			\draw (key) -- (7,0);
			\draw (message) -- (message2);
			\draw (flag) -- (7,0.5);
			
			\filldraw[fill=white] (4.65,-0.9) rectangle (5.35,-1.6) node[pos=0.5]{$\Phi$};
			\filldraw[fill=white] (1,0.2) node[above]{$\ttt{Key}$} -- (1.2,0.2) -- (1.2,-0.2) -- (0.8,-0.2) -- (0.8,0.2) -- cycle;
			\filldraw[fill=white] (3,0.1) node[above]{$\ttt{Enc}$} -- (3.2,0.1) -- (3.2,-1.35) -- (2.8,-1.35) -- (2.8,0.1) -- cycle;
			\filldraw[fill=white] (7,0.6) node[above]{$\ttt{Dec}$} -- (7.2,0.6) -- (7.2,-1.1) -- (6.8,-1.1) -- (6.8,0.6) -- cycle;
			\draw[dotted] (6.8,-0.75) -- (7.2,-0.75);
		\end{tikzpicture}
		\label{fig:enc-def-unc}
		\caption{Uncloneability}
	\end{subfigure}

	\begin{subfigure}{0.4\textwidth}
		\centering
		\begin{tikzpicture}
			\node (zero) at (0,0) {$0$};
			\node (key) at (2,0) {$K$};
			\node (message) at (2,-1) {$M$};
			\node (sideinfo) at (2,-2) {$S$};
			\node (y) at (2,-0.5) {$Y$};
			\node at (1.6,-1.5) {$\Bigg\{$};
			\node at (1,-1.5) {$\rho_{MS}$};
			\node (cipher) at (5,-1.5) {$C$};
			\node (y2) at (5,-0.5) {$Y$};
			\node (side2) at (5,-2) {$S$};
			
			\draw (zero) -- (key);
			\draw (4,-1.5) -- (cipher);
			\draw (key) -- (4,0);
			\draw (message) -- (4,-1);
			\draw (y) -- (3.65,-0.5);
			\draw (4.35,-0.5) -- (y2);
			\draw (sideinfo) -- (side2);
			
			\filldraw[fill=white] (1,0.2) node[above]{$\ttt{Key}$} -- (1.2,0.2) -- (1.2,-0.2) -- (0.8,-0.2) -- (0.8,0.2) -- cycle;
			\filldraw[fill=white] (2.8,-0.4) rectangle (3.2,-1.1);
			\filldraw[fill=white] (4,0.1) node[above]{$\ttt{Enc}$} -- (4.2,0.1) -- (4.2,-1.6) -- (3.8,-1.6) -- (3.8,0.1) -- cycle;
		\end{tikzpicture}
		\label{fig:enc-def-indis}
		\caption{Indistinguishability}
	\end{subfigure}
	\hfill
	\begin{subfigure}{0.55\textwidth}
		\centering
		\begin{tikzpicture}
			\node (zero) at (0,0) {$0$};
			\node (key) at (2,0) {$K$};
			\node (message) at (2,-1) {$M$};
			\node (sideinfo) at (2,-2) {$S$};
			\node (y) at (2,-0.5) {$Y$};
			\node at (1.6,-1.5) {$\Bigg\{$};
			\node at (1,-1.5) {$\rho_{MS}$};
			\node (cipher) at (5,-1.5) {$C$};
			\node (y2) at (9,-0.5) {$Y$};
			\node (bob) at (7,-1.5) {$B$};
			\node (eve) at (7,-2) {$E$};
			\node (eve2) at (9,-2) {$E'$};
			\node (flag) at (9,0.5) {$F$};
					
			\draw (zero) -- (key);
			\draw (4,-1.5) -- (cipher);
			\draw (key) -- (8,0);
			\draw (message) -- (8,-1);
			\draw (y) -- (3.65,-0.5);
			\draw (4.35,-0.5) -- (7.65,-0.5);
			\draw (8.35,-0.5) -- (y2);
			\draw (sideinfo) -- (5.65,-2);
			\draw (cipher) -- (5.65,-1.5);
			\draw (5.65,-1.4) rectangle (6.35,-2.1) node[pos=0.5]{$\Phi$};
			\draw[-Latex] (6.35,-1.5) -- (bob) -- (8,-1.5) -- (8,-2);
			\draw (6.35,-2) -- (eve) -- (eve2);
			\draw (flag) -- (8,0.5);
			
			\filldraw[fill=white] (1,0.2) node[above]{$\ttt{Key}$} -- (1.2,0.2) -- (1.2,-0.2) -- (0.8,-0.2) -- (0.8,0.2) -- cycle;
			\filldraw[fill=white] (2.8,-0.4) rectangle (3.2,-1.1);
			\filldraw[fill=white] (4,0.1) node[above]{$\ttt{Enc}$} -- (4.2,0.1) -- (4.2,-1.6) -- (3.8,-1.6) -- (3.8,0.1) -- cycle;
			\filldraw[fill=white] (8,0.6) node[above]{$\ttt{Dec}$} -- (8.2,0.6) -- (8.2,-1.6) -- (7.8,-1.6) -- (7.8,0.6) -- cycle;
			\draw[dotted] (7.8,-1.25) -- (8.2,-1.25);
		\end{tikzpicture}
		\label{fig:enc-def-unc-indis}
		\caption{Uncloneability-indistinguishability}
	\end{subfigure}
	\caption{Schematics of the state constructions in the QECM-ID security definitions. Blocks represent operations, with interactions if they are split by a dotted line. Horizontal lines represent registers; they take part in the operations they touch. Vertical arrows represent eavesdropping.}
	\label{fig:enc-defs}
\end{figure}
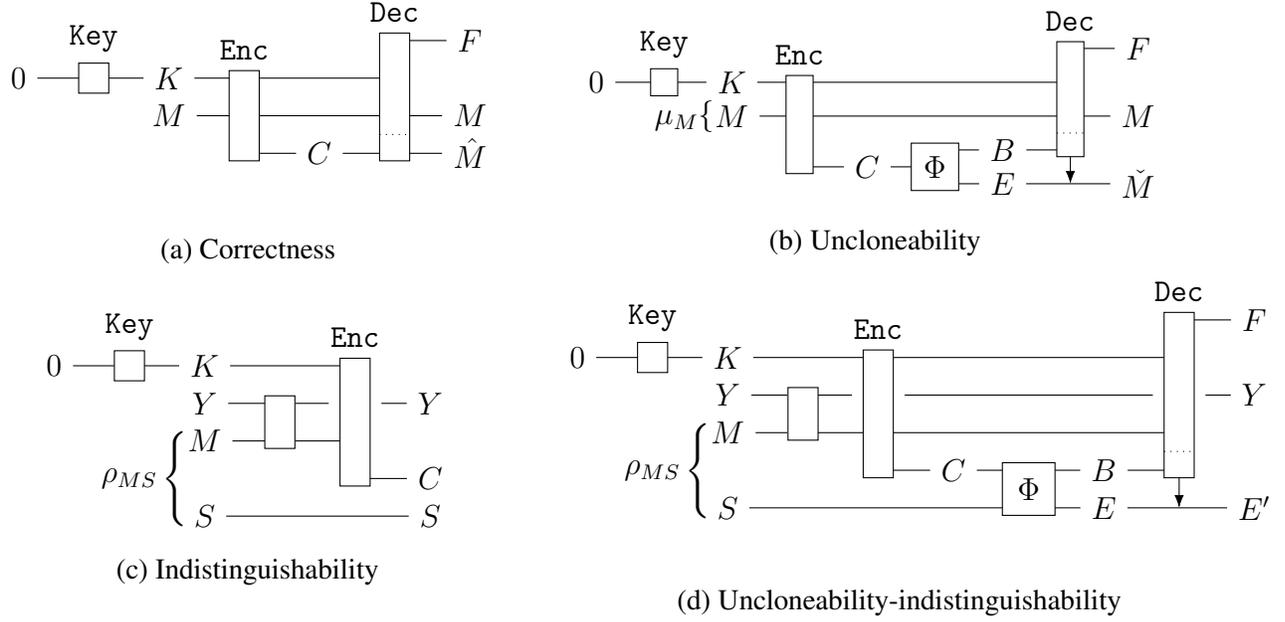

\subsection{General properties}

In this section, we show some relations on the uncloneable security properties for QECM-IDs, with the idea to generalise properties of classical encryption schemes. These extend and strengthen results known for QECMs.

First, we see that uncloneable security holds for non-uniform message distributions, generalising a property shown in \cite{BL20}.

\begin{lemma}
	Let $\ttt{Q}$ be an $\varepsilon$-uncloneable QECM-ID. Then, if the uncloneable security game is played with a classical state $\rho_M$ not necessarily uniform, the winning probability
	\begin{align}
		\Pr\squ*{M=\check{M}\land F=1}_\rho\leq2^{-H_{\min}(M)_\rho}\Pr[F=1]+|M|2^{-H_{\min}(M)_\rho}\varepsilon.
	\end{align}
\end{lemma}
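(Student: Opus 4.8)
The plan is to reduce the non-uniform game to the uniform one, using the fact that an adversary's operations never depend on the realised message, so a single strategy can be evaluated under either message distribution. Write $\pi(m)=\Pr[M=m]_\rho$ for the message distribution, so that $2^{-H_{\min}(M)_\rho}=\max_m\pi(m)=:p_{\max}$. First I would fix an arbitrary adversary for the non-uniform game---the splitting channel $\Phi$, Bob's operations during $\ttt{Dec}$, and Eve's guessing measurement---and record the message-conditional quantities
\begin{align}
	g_m=\Pr[\check M=m\land F=1\mid M=m],\qquad f_m=\Pr[F=1\mid M=m].
\end{align}
Since none of $\Phi$, Bob, or Eve ever sees $m$ directly, these are functions of the strategy alone and are identical whether the game is run with $\rho_M$ or with $\mu_M$; only the weighting of $m$ differs between the two games.

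Next I would expand the winning probability as $\Pr[M=\check M\land F=1]_\rho=\sum_m\pi(m)g_m$ and apply the pointwise bound $\pi(m)\le p_{\max}$ together with $g_m\ge0$:
\begin{align}
	\Pr[M=\check M\land F=1]_\rho\le p_{\max}\sum_m g_m=p_{\max}|M|\cdot\tfrac{1}{|M|}\sum_m g_m.
\end{align}
The final factor $\tfrac{1}{|M|}\sum_m g_m$ is exactly the winning probability of the \emph{same} strategy in the uniform game. As $\ttt{Q}$ is $\varepsilon$-uncloneable and the uncloneable-security inequality holds for every adversary, I would apply it to this strategy, obtaining $\tfrac{1}{|M|}\sum_m g_m\le\tfrac{1}{|M|}\Pr[F=1]_\mu+\varepsilon$, where $\Pr[F=1]_\mu=\tfrac{1}{|M|}\sum_m f_m$. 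Substituting gives
\begin{align}
	\Pr[M=\check M\land F=1]_\rho\le p_{\max}\Pr[F=1]_\mu+|M|p_{\max}\varepsilon,
\end{align}
which is the claimed inequality once we write $p_{\max}=2^{-H_{\min}(M)_\rho}$.

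The hard part will be the bookkeeping around the acceptance flag: the reduction naturally produces the acceptance probability $\Pr[F=1]_\mu$ of the \emph{uniform} game rather than of $\rho$, and these need not agree in general unless the flag is independent of the message under the fixed strategy. This is the quantity that $\Pr[F=1]$ should denote in the statement; it coincides with the acceptance probability for any message-independent decryption test, such as the one in our construction, where the classical part of the ciphertext is one-time-padded and hence reveals nothing about $m$ before the key is released. To make the argument airtight I would also be explicit that the uniform-security bound is a property of the scheme that applies verbatim to the strategy we extract, and that $g_m$ and $f_m$ are genuinely the same functions in both games because $\Phi$, Bob's $\ttt{Dec}$ operations, and Eve's measurement act identically regardless of the distribution placed on $M$.
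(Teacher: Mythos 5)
Your proof matches the paper's own argument essentially line for line: decompose the winning probability as $\sum_m \Pr[M=m]\,\Pr[\check M=m\land F=1\mid M=m]$, bound each weight by $\max_m\Pr[M=m]=2^{-H_{\min}(M)_\rho}$, recognise the remaining sum (divided by $|M|$) as the winning probability of the same strategy in the uniform game, and apply the $\varepsilon$-uncloneable bound. Your closing observation that $\Pr[F=1]$ must then be read as the acceptance probability of the \emph{uniform} game is a correct and slightly more careful reading than the paper, whose proof silently makes the same identification when it invokes the security definition for $\mu_M$.
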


\begin{proof}
	We relate this to the winning probability with $\rho_M=\mu_M$. In fact,
	\begin{align}
	\begin{split}
		\Pr\squ{M=\check{M}\land F=1}_\rho&=\sum_{m\in M}\Pr[M=m]\Pr\squ{\check{M}=m\land F=1|M=m}\\
		&\leq\max_m\Pr[M=m]\sum_m\Pr\squ{\check{M}=m\land F=1|M=m}\\
		&=|M|2^{-H_{\min}(M)_\rho}\Pr\squ{M=\check{M}\land F=1}_\mu\\
		&\leq2^{-H_{\min}(M)_\rho}\Pr[F=1]+|M|2^{-H_{\min}(M)_\rho}\varepsilon
	\end{split}
	\end{align}
\end{proof}

Next, we find an equivalence, up to scalar multiple of the parameters, between the uncloneable and uncloneable-indistinguishable security properties. One direction, uncloneable security implying uncloneable-indistinguishable security, generalises a similar property shown for QECMs in \cite{BL20}, while the other direction is new, and remains an open question for QECMs in the information-theoretic setting. The equivalence of these security properties is similar to the equivalence of semantic security and indistinguishability in classical encryption.

\begin{theorem}
	Let $\ttt{Q}$ be a perfectly indistinguishable QECM-ID.
	\begin{itemize}
		\item If $\ttt{Q}$ is $\varepsilon$-uncloneable secure then it is $|M|\varepsilon$-uncloneable-indistinguishable secure.
		
		\item If $\ttt{Q}$ is $\varepsilon$-uncloneable-indistinguishable secure then it is $\varepsilon$-uncloneable secure.
	\end{itemize}
\end{theorem}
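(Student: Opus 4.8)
The plan is to prove each implication by a game-based reduction that converts between a message-guessing adversary (for uncloneable security) and a distinguishing adversary (for uncloneable-indistinguishable security), exploiting the fact that both games share the same ciphertext, splitting channel $\Phi$, decryption interaction $\ttt{Dec}$, and final eavesdropper register $E'$. The asymmetry between the two directions---an exact $\varepsilon$ one way versus a factor $|M|$ the other---should emerge from the elementary relationship between guessing a uniformly random symbol and distinguishing two encryptions, with the hypothesis of perfect indistinguishability doing the work of decoupling the abort flag and side information from the challenge.

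For uncloneable-indistinguishable $\Rightarrow$ uncloneable, I would argue the contrapositive, starting from an adversary that wins the uncloneable game with $\Pr[M=\check M\land F=1]>\tfrac{1}{|M|}\Pr[F=1]+\varepsilon$ and building a distinguisher. The distinguisher chooses the message distribution $\rho_{MS}$ to be a uniform $M$ together with a classical copy of $M$ stored in $S$, runs the guessing adversary's $\Phi$ on the ciphertext while routing $S$ to Eve, and after $\ttt{Dec}$ outputs ``$Y=1$'' exactly when the guess $\check M$ matches the retained copy. Under $Y=1$ the ciphertext encrypts the uniform message and $S$ equals it, so this event coincides with $M=\check M$ and occurs, within $F=1$, with the full guessing probability; under $Y=0$ the message is $m_0$ while $S$ stays uniform and independent, so it occurs with probability exactly $\tfrac{1}{|M|}$ after conditioning on $F=1$. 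Since the reduction leaves Bob's register and hence $\Pr[F=1]$ unchanged from the uncloneable game, subtracting yields a distinguishing advantage of at least $\bigl(p-\tfrac{1}{|M|}\Pr[F=1]\bigr)/\Pr[F=1]\geq\varepsilon$, contradicting $\varepsilon$-uncloneable-indistinguishability.

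For uncloneable $\Rightarrow |M|\varepsilon$-uncloneable-indistinguishable, I would run the reduction in reverse, turning a distinguisher into a guesser for the uniform message. The core is the identity that, for a uniform message, measuring $E'$ with the optimal distinguisher between the encryptions of two fixed messages $m_0,m_1$ and outputting whichever the measurement indicates attains a guessing advantage of at least $\tfrac{1}{|M|}\norm{\rho^{m_0}_{E'}-\rho^{m_1}_{E'}}_{\Tr}$, since a uniform message lands in $\{m_0,m_1\}$ with probability $\tfrac{2}{|M|}$ and the distinguisher resolves exactly those two cases. Rearranged this gives $\norm{\rho^{m_0}_{E'}-\rho^{m_1}_{E'}}_{\Tr}\leq|M|\varepsilon$, which I would then extend from a pair of fixed messages to the general ``$m_0$ versus chosen $\rho_{MS}$'' form by convexity, reducing the distinguishing quantity to an average of pairwise terms.

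The main obstacle I anticipate is not the combinatorial core but the bookkeeping around the flag $F$ and the side register $S$. Because $F$ is produced by Alice during $\ttt{Dec}$ from her key and her copy of the message, I must show that conditioning on $F=1$ does not itself betray the world or the message: perfect indistinguishability, by equating $\rho_{CS|(Y=0)}$ with $\rho_{CS|(Y=1)}$, forces the ciphertext to be independent of both $S$ and the message before $\ttt{Dec}$, so that the acceptance probability and the pre-$\ttt{Dec}$ marginals seen by Bob and Eve agree across worlds. This is what lets the first reduction identify $\Pr[F=1\mid Y{=}1]$ with the uncloneable-game $\Pr[F=1]$, and what lets the second reduction treat $S$ as a fixed, message-independent state so the guesser can simulate it faithfully. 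Carefully tracking the normalisations $\Pr[F=1\mid Y]$ through the trace-distance definition, and verifying this decoupling, will be the delicate steps.
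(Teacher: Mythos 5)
Your proposal is correct, and its two halves relate to the paper's proof differently. For the direction uncloneable $\Rightarrow|M|\varepsilon$-uncloneable-indistinguishable, your argument is essentially the paper's, reorganised: the paper works contrapositively, first reducing a general $\rho_{MS}$ to a point mass $[m_1]$ by convexity of the trace norm (absorbing the side state $\rho^{m_1}_S$ into the splitting channel), then converting the optimal Helstrom measurement for $m_0$ versus $m_1$ into a guesser whose success, supported on the probability-$\tfrac{2}{|M|}$ event $M\in\{m_0,m_1\}$, beats the trivial bound --- exactly your ``pairwise bound plus convexity,'' including the same appeal to perfect indistinguishability to equate $\Pr[F=1\,|\,M\in\{m_0,m_1\}]$ with $\Pr[F=1]$. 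For the converse, your route is genuinely different: the paper argues directly, applying the uncloneable-indistinguishable bound with deterministic messages to get $\norm{\rho^{m_0}_{E'\land(F=1)}-\rho^{m}_{E'\land(F=1)}}_{\Tr}\leq\varepsilon$ for every $m$, and then passing to the hybrid $\tau_{ME'\land(F=1)}=\mu_M\otimes\rho^{m_0}_{E'\land(F=1)}$, on which $M$ and $E'$ are independent, so the guessing probability is controlled by closeness of $\rho$ to $\tau$; your contrapositive reduction instead instantiates the game with the correlated state $\rho_{MS}$ given by a uniform $M$ copied into $S$, and has Eve output $1$ exactly when her guess $\check{M}$ matches the copy. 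Both give the clean parameter $\varepsilon$: the paper's hybrid is shorter and only invokes the definition on deterministic-message instances, while yours exercises the definition's full generality (adversarially correlated side information) and makes the operational content --- guesser implies distinguisher --- explicit. Two small notes: under the paper's notation $\rho_{E'|(Y=y)\land(F=1)}$ is subnormalised with trace $\Pr[F=1\,|\,Y=y]$, so your final division by $\Pr[F=1]$ is unnecessary (though harmless, since it only increases the advantage); and your flagged delicate point --- that perfect indistinguishability forces the acceptance probability to agree across worlds and messages --- is precisely the step the paper also uses in both directions and justifies equally briefly, so your treatment sits at the same level of rigour as the original.
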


Note that this theorem means that, outside of some pathological cases, it is only necessary to show either uncloneable and uncloneable-indistinguishable security for QECM-IDs, not both. However, we nevertheless show both in the following section, as it allows us to work out better parameters.

\begin{proof}\hphantom{}
\begin{itemize}
	\item We proceed by contrapositive. Suppose there exists an attack for the uncloneable-indistinguishable security game that wins with advantage greater than $|M|\varepsilon$. An important observation we make to help simplify the proof is that we may always assume that $\rho_{MS}=[m_1]$ for some message $m_1\in M$ \cite{KT22arxiv}. This is because the trace norm is convex, so
	\begin{align}
		\norm*{\rho_{E'|(Y=0)\land(F=1)}-\rho_{E'|(Y=1)\land(F=1)}}_{\Tr}\leq\sum_{m\in M}p_m\norm*{\rho^m_{E'|(Y=0)\land(F=1)}-\rho^m_{E'|(Y=1)\land(F=1)}}_{\Tr},
	\end{align}
	and thus we can take $m_1$ to be the value whose term in this convex combination is maximal. Finally, we can remove the side information by redefining the splitting channel $\Phi'(\sigma)=\Phi(\sigma\otimes\rho^{m_1}_{S})$.
	
	With such an attack, we construct an attack against the uncloneable security game. The splitting operation and Bob act in the same way. To attempt to guess the message, Charlie makes the measurement that optimally distinguishes the cases $y=0$ and $y=1$, and guess $m_0$ or $m_1$, respectively. Then, the guessing probability
	\begin{align}
		\begin{split}
			\Pr\squ*{M=\check{M}\land F=1}&=\Pr\squ*{M=\check{M}\land F=1\land M\notin\{m_0,m_1\}}\\
			&+\Pr\squ*{M\in\{m_0,m_1\}}\Pr\squ*{M=\check{M}\land F=1|M\in\{m_0,m_1\}}\\
			&=\frac{2}{|M|}\Pr\squ*{M=\check{M}\land F=1|M\in\{m_0,m_1\}}
		\end{split}
	\end{align}
	Since $\Pr\squ*{M=\check{M}\land F=1|M\in\{m_0,m_1\}}$ is the probability of distinguishing messages $m_0$ and $m_1$, we have by hypothesis that this is greater than $\frac{\Pr[F=1|M\in\{m_0,m_1\}]+|M|\varepsilon}{2}$. Finally, as $\ttt{Q}$ is perfectly indistinguishable, $\Pr[F=1|M\in\{m_0,m_1\}]=\Pr[F=1]$ --- otherwise Bob could distinguish the messages without access to the key. Putting this together,
	\begin{align}
		\Pr\squ*{M=\check{M}\land F=1}>\frac{\Pr[F=1]}{|M|}+\varepsilon.
	\end{align}

	\item Let $\rho_{ME'\land(F=1)}=\expec_{m\in M}[m]\otimes\rho^m_{E'\land(F=1)}$ be the final state in the uncloneable security game. Since we have by hypothesis that $\ttt{Q}$ is uncloneable-indistinguishable secure, $\norm{\rho^{m_0}_{E'\land(F=1)}-\rho^m_{E'\land(F=1)}}_{\Tr}\leq\varepsilon$ for all $m\in M$. Setting the state $\tau_{ME'\land(F=1)}=\mu_M\otimes\rho^{m_0}_{E'\land(F=1)}$, we have that
	\begin{align}
		\norm{\tau_{ME'\land(F=1)}-\rho_{ME'\land(F=1)}}_{\Tr}=\expec_{m\in M}\norm{\rho^{m_0}_{E'\land(F=1)}-\rho^m_{E'\land(F=1)}}_{\Tr}\leq\varepsilon.
	\end{align}
	Because the registers $M$ and $E'$ are independent on $\tau$, the guessing probability $\Pr[M=\check{M}\land F=1]_\tau\leq\frac{\Pr[F=1]_\tau}{|M|}$. Finally, because $\tau$ is only $\varepsilon$ away from $\rho$ in trace norm and $\Pr[F=1]_\tau=\Pr[F=1|M=m_0]_\rho=\Pr[F=1]_\rho$ by perfect indistinguishability, we get that $\Pr[M=\check{M}\land F=1]_\rho\leq\frac{\Pr[F=1]_\rho}{|M|}+\varepsilon$. \qedhere
\end{itemize}
\end{proof}

\subsection{Instantiation and security proofs}\label{sec:encryption-construction}

Now, we give a construction of a QECM-ID. Let $e:\Z_2^{n/2}\times R\rightarrow\Z_2^\ell$ be a quantum-proof $(\kappa,\varepsilon)$-strong extractor and let $A$ be the set of all subspaces of $V=\Z_2^n$ of dimension $n/2$.

\begin{mdframed}
\begin{protocol}[Coset state QECM-ID]\label{prot:qecm-id}\hphantom{}

\begin{description}
	\item[Key generation] Let $T=T'=\Z_2^{n/2}$ and $H=\Z_2^\ell$ and take $K=ATT'RH$. The channel
	\begin{align}
		\ttt{Key}([0])=\expec_{a,t,t',r,h}[att'rh].
	\end{align}
	
	\item[Encryption] Let $M=\bar{M}=\Z_2^\ell$ and $C=\bar{M}V$. Take
	\begin{align}
		\ttt{Enc}([att'rh]\otimes[m])=[att'rh]\otimes[m]\otimes[m+e(t',r)+h]\otimes\ketbra{a_{t,t'}}.
	\end{align}
	
	\item[Decryption] $\ttt{Dec}$ proceeds as follows. First, Alice sends $a$ to Bob. Then, Bob measures $V$ in the coset state basis to get measurements $\hat{t},\hat{t}'$ of $t,t'$. Bob sends $\hat{t}$ to Alice: if $\hat{t}=t$, Alice sets $f=1$, else she sets $f=0$ and aborts. Alice sends $r$ and $h$ to Bob. Bob computes $\hat{m}=\bar{m}+e(\hat{t}',r)+h$.
\end{description}
\end{protocol}
\end{mdframed}

\begin{proposition}
    \cref{prot:qecm-id} is perfectly correct, \emph{i.e.} $0$-correct.
\end{proposition}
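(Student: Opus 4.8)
The plan is to simply track the honest execution of $\ttt{Dec}$ on the ciphertext produced by $\ttt{Enc}\circ\ttt{Key}$ and verify that, with certainty, Alice accepts ($f=1$) and Bob recovers $\hat m=m$. Once this is established, the partial state $\rho_{M\hat M\land(F=1)}$ coincides exactly with $\rho_{MM}$ (two perfectly correlated copies of $M$), so the trace distance in the correctness definition is literally $0$.

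The one fact I would invoke is the one recorded in \cref{sec:coset}: for any fixed subspace $a$ of dimension $n/2$, the coset states $\set*{\ket{a_{t,t'}}}{t,t'\in\Z_2^{n/2}}$ form an orthonormal basis of $\mc{H}_V$. Consequently, when Bob, holding the description of $a$ that Alice sends him, measures the ciphertext register $V$ (which carries exactly $\ket{a_{t,t'}}$) in this basis, he obtains the outcome $(\hat t,\hat t')=(t,t')$ \emph{deterministically}. This has two immediate consequences. First, $\hat t=t$, so Alice always sets $f=1$ and never aborts. Second, since $\hat t'=t'$, Bob's final computation yields
\begin{align}
	\hat m=\bar m+e(\hat t',r)+h=\big(m+e(t',r)+h\big)+e(t',r)+h=m,
\end{align}
where the last equality uses that arithmetic is carried out over $\Z_2^\ell$, so the two copies of $e(t',r)$ and the two copies of $h$ each cancel.

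Because both $f=1$ and $\hat m=m$ hold with probability one, for every $m$ in the support of the classical state $\rho_M$ (and the general case follows by linearity over the classical mixture $\rho_M=\sum_m p_m[m]$), the partial state $\rho_{M\hat M\land(F=1)}$ is exactly the perfectly correlated state $\rho_{MM}$. Hence $\norm{\rho_{M\hat M\land(F=1)}-\rho_{MM}}_{\Tr}=0$, which is precisely $0$-correctness. I do not expect a genuine obstacle here: the argument is pure bookkeeping, and the only two points needing care are the orthonormality of the coset basis (which makes the measurement outcome deterministic, guaranteeing both non-abort and the correct $\hat t'$) and the $\Z_2$ arithmetic that makes the repeated extractor outputs and one-time-pad terms cancel.
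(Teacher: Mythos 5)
Your proposal is correct and follows essentially the same route as the paper's proof: both track the honest execution, use the fact that for fixed $a$ the coset states $\set*{\ket{a_{t,t'}}}{t,t'\in\Z_2^{n/2}}$ form an orthonormal basis of $\mc{H}_V$ (so Bob's measurement deterministically returns $(\hat t,\hat t')=(t,t')$, whence $f=1$ always), and conclude via the $\Z_2^\ell$ cancellation $\hat m=\bar m+e(t',r)+h=m$ that $\rho_{M\hat M\land(F=1)}=\rho_{MM}$ exactly. The only cosmetic difference is that the paper writes out each intermediate state $\rho_{KMC}$, $\rho_{ATT'RHM\bar M\hat T\hat T'}$, etc.\ explicitly, while you summarize the same bookkeeping; nothing of substance is missing.
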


\begin{proof}
    First, writing $\rho_M=\sum_{m}p_m[m]$,
    \begin{align}
        \rho_{KMC}=\rho_{ATT'RHM\bar{M}V}=\expec_{a,t,t',r,h}\sum_mp_m[att'rh]\otimes[m]\otimes[m+e(t',r)+h]\otimes\ketbra{a_{t,t'}}.
    \end{align}
    To begin the decryption, Bob measures in the coset state basis and gets
    \begin{align}
        \rho_{ATT'RHM\bar{M}\hat{T}\hat{T}'}=\expec_{a,t,t',r,h}\sum_mp_m[att'rh]\otimes[m]\otimes[m+e(t',r)+h]\otimes[tt'].
    \end{align}
    Sending $\hat{t}=t$ to Alice, she always sets $F=1$, and then gives $r$ and $h$ to Bob. Then, the state become
    \begin{align}
        \rho_{ATT'RHMF\bar{M}\hat{T}'}=\expec_{a,t,t',r,h}\sum_mp_m[att'rh]\otimes[m]\otimes[1]\otimes[m+e(t',r)+h]\otimes[t'].
    \end{align}
    Finally, Bob computes $\hat{m}=\bar{m}+e(\hat{t}',r)+h=m$, getting
    \begin{align}
        \rho_{KMF\hat{M}}=\expec_{a,t,t',r,h}\sum_mp_m[att'rh]\otimes[m]\otimes[1]\otimes[m].
    \end{align}
    Thus, $\rho_{M\hat{M}\land(F=1)}=\sum_mp_m[m]\otimes[m]=\rho_{MM}$.
\end{proof}
\newpage
\begin{proposition}
    \cref{prot:qecm-id} is perfectly indistinguishable.
\end{proposition}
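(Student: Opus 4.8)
The plan is to show that the two conditional ciphertext states $\rho_{CS|(Y=0)}$ and $\rho_{CS|(Y=1)}$ are both equal to the same fixed product state $\mu_{\bar{M}}\otimes\mu_V\otimes\rho_S$, so that their trace distance vanishes and we may take $\varepsilon_1=0$. The crux is that the register $H=\Z_2^\ell$ acts as a one-time pad on the classical part of the ciphertext; notably, no property of the extractor $e$ is needed for indistinguishability (it is only the uncloneability that will require it).

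First I would write out $\rho_{KMCSY}$ explicitly from the definitions of $\ttt{Key}$ and $\ttt{Enc}$. In the $Y=0$ branch the message register is $[m_0]$, whereas in the $Y=1$ branch it is the cq state $\rho_{MS}=\sum_m p_m[m]\otimes\rho_S^m$; in both cases $C=\bar{M}V$ holds $[m+e(t',r)+h]\otimes\ketbra{a_{t,t'}}$ with $a,t,t',r,h$ uniform and independent, and $\ttt{Enc}$ acts trivially on $S$. The key step is then to trace out the key $K=ATT'RH$, which I would carry out in two stages. Averaging over $h\in\Z_2^\ell$, for any fixed $m,t',r$ the map $h\mapsto m+e(t',r)+h$ is a bijection, so $\expec_h[m+e(t',r)+h]=\mu_{\bar{M}}$; this erases all dependence of the $\bar{M}$ register on the message and makes the value $e(t',r)$ irrelevant. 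Averaging over $a,t,t'$, I would invoke the fact from \cref{sec:coset} that for each fixed $a$ the coset states $\set*{\ket{a_{t,t'}}}{t,t'\in\Z_2^{n/2}}$ form an orthonormal basis of $\mc{H}_V$, whence $\expec_{t,t'}\ketbra{a_{t,t'}}=2^{-n}\Id_V=\mu_V$, and therefore $\expec_{a,t,t'}\ketbra{a_{t,t'}}=\mu_V$ as well. Combining these, both branches collapse to $\mu_{\bar{M}}\otimes\mu_V\otimes\rho_S$, where $\rho_S=\Tr_M(\rho_{MS})$ is the same reduced side-information state in each case.

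There is no genuine obstacle here, as the statement is essentially a one-time-pad argument; the only points requiring care are bookkeeping ones. I would verify explicitly that the $Y=0$ and $Y=1$ branches yield the identical reduced state $\rho_S$ on the side information, and confirm that the two uniform averages over $h$ and over $(a,t,t')$ decouple the ciphertext entirely from the message. Concluding that $\norm{\rho_{CS|(Y=0)}-\rho_{CS|(Y=1)}}_{\Tr}=0$ then establishes perfect indistinguishability.
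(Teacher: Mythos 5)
Your proposal is correct and matches the paper's proof in essence: both write out $\rho_{KMCSY}$ explicitly, use the uniform average over $h$ as a one-time pad to turn the classical ciphertext register into $\mu_{\bar{M}}$, and use the fact that $\expec_{t,t'}\ketbra{a_{t,t'}}=\mu_V$ for each fixed $a$ to conclude $\rho_{CSY}=\mu_C\otimes\rho_S\otimes\mu_Y$, hence $\rho_{CS|(Y=0)}=\rho_{CS|(Y=1)}$. Your explicit check that both branches reduce to the same $\rho_S$ is the same bookkeeping the paper performs when collapsing $\sum_m\parens*{\delta_{m,m_0}\rho_S\otimes[0]+p_m\rho^m_S\otimes[1]}$ to $\rho_S\otimes\mu_Y$ (up to normalisation).
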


\begin{proof}
    Writing $\rho_{MS}=\sum_mp_m[m]\otimes\rho^m_S$, we see that
    \begin{align}
    \begin{split}
        \rho_{KMCSY}&=\frac{1}{2}\parens*{\ttt{Enc}(\ttt{Key}(0)\otimes[m_0])\otimes\rho_S\otimes[0]+(\ttt{Enc}\otimes\id_S)(\ttt{Key}(0)\otimes\rho_{MS})\otimes[1]}\\
        &=\frac{1}{2}\sum_m\ttt{Enc}(\ttt{Key}(0)\otimes[m])\otimes\parens*{\delta_{m,m_0}\rho_S\otimes[0]+p_m\rho^m_S\otimes[1]}\\
        &=\frac{1}{2}\sum_m\expec_{a,t,t',r,h}[att'rhm]\otimes[m+e(t',r)+h]\otimes\ketbra{a_{t,t'}}\otimes\parens*{\delta_{m,m_0}\rho_S\otimes[0]+p_m\rho^m_S\otimes[1]}.
    \end{split}
    \end{align}
    Hence,
    \begin{align}
    \begin{split}
        \rho_{CSY}&=\frac{1}{2}\sum_m\expec_{a,t,t',r,h}[m+e(t',r)+h]\otimes\ketbra{a_{t,t'}}\otimes\parens*{\delta_{m,m_0}\rho_S\otimes[0]+p_m\rho^m_S\otimes[1]}\\
        &=\frac{1}{2}\expec_{a,t,t'}\mu_{\bar{M}}\otimes\ketbra{a_{t,t'}}\otimes\sum_m\parens*{\delta_{m,m_0}\rho_S\otimes[0]+p_m\rho^m_S\otimes[1]}\\
        &=\frac{1}{2}\mu_{C}\otimes\parens*{\rho_S\otimes[0]+\rho_S\otimes[1]}=\mu_C\otimes\rho_S\otimes\mu_Y.
    \end{split}
    \end{align}
    Thus, $\rho_{CS|(Y=0)}=\rho_{CS|(Y=1)}$.
\end{proof}

\begin{theorem}
    Suppose $\kappa\geq \frac{-\lg\cos\frac{\pi}{8}}{2}n-\frac{1}{4\ln 2}$. Then, \cref{prot:qecm-id} is $\max\{\varepsilon,e^{1/4}(\cos\tfrac{\pi}{8})^{n/2}\}$-uncloneable.
\end{theorem}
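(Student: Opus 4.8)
The plan is to bound the single quantity appearing in the definition of uncloneable security, $\Pr[M=\check M\land F=1]_\rho$, by combining the leaky MoE entropic uncertainty relation (\cref{cor:entropic-moe}) with the strong extractor $e$. Here $\rho$ is the state of the uncloneable game: Alice encrypts $\rho_M=\mu_M$ as $\rho_{KMC}=\ttt{Enc}(\ttt{Key}([0])\otimes\mu_M)$, Eve splits the ciphertext with $\Phi\colon\mc L(C)\to\mc L(BE)$, and during $\ttt{Dec}$ she eavesdrops $a$, Bob's guess $\hat t$, and $(r,h)$. The first move is to exploit the one-time-pad structure of the ciphertext. Since $M$ is uniform and $\bar M=M+e(T',R)+H$ with $H$ uniform and independent, any guess $\check M$ that Eve computes from her final register is equivalent to the guess $\check M+\bar M+H$ of $e(T',R)$, so that $\Pr[M=\check M\land F=1]=\Pr[e(T',R)=\check M+\bar M+H\land F=1]$. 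It therefore suffices to show that, on the part of the state where Alice accepts, $e(T',R)$ is close to uniform and independent of everything Eve holds; then $M=\bar M+e(T',R)+H$ is close to uniform and independent of Eve's view, giving a guessing probability at most $1/|M|$ up to the error terms.

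The heart of the argument is to feed the MoE property into this. I would identify Bob's action in $\ttt{Dec}$---receiving $a$ and measuring $V$ in the coset basis to produce $\hat t$---with the measurement $M\colon T\to\mc P(AB)$ of the leaky game, so that the acceptance event $F=1$ is exactly $M(AB)=T$ and $\Pr[F=1]=2^{-H_{\min}(T|M(AB))_\rho}$. Writing $C=E$ for Eve's register and $A',T$ for the copy of $a$ and the value of $t$ she learns (the latter available precisely because, on $F=1$, Bob's leaked guess equals $t$), \cref{cor:entropic-moe} then gives
\begin{align}
	H_{\min}(T'|A'TC)_{\rho_{|(F=1)}}\geq\parens*{-\lg\cos\tfrac\pi8}n-\tfrac1{2\ln2}+\lg\Pr[F=1].
\end{align}
The delicate point is that Eve's register is produced by splitting the whole ciphertext $\bar MV$, and $\bar M=M+e(T',R)+H$ is a priori correlated with $T'$, which would forbid both the identification above and the use of $\bar M$ as side information. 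I would resolve this with the one-time pad: before $H$ is revealed, $\bar M$ is uniform and in tensor product with $V$ and with $(a,t,t')$, so $\bar M$ may be treated as independent noise generated inside the splitting channel. This makes the state agree with an honest leaky-MoE strategy whose register $C$ carries $\bar M$, so the bound persists with $\bar M$ (and, being independent, $H$) adjoined to the conditioning.

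Finally, I would convert the entropy bound into the claimed security via the extractor, using a case analysis on $p:=\Pr[F=1]$ that produces the maximum. If $p\geq e^{1/4}(\cos\tfrac\pi8)^{n/2}$, then $\lg p\geq \tfrac1{4\ln2}-\tfrac n2(-\lg\cos\tfrac\pi8)$, and the displayed inequality forces the conditional min-entropy of $T'$ to be at least $\tfrac{-\lg\cos(\pi/8)}{2}n-\tfrac1{4\ln2}$, which is exactly the hypothesised threshold on $\kappa$; taking the extractor at this threshold and applying it with seed $R$ and side information $A'TC\bar MH$ shows $e(T',R)$ is $\varepsilon$-close to uniform and independent of Eve's view conditioned on $F=1$, whence $\Pr[M=\check M\land F=1]\leq\tfrac1{|M|}\Pr[F=1]+\varepsilon$. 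If instead $p<e^{1/4}(\cos\tfrac\pi8)^{n/2}$, then trivially $\Pr[M=\check M\land F=1]\leq\Pr[F=1]<e^{1/4}(\cos\tfrac\pi8)^{n/2}$. Combining the two cases yields $\Pr[M=\check M\land F=1]\leq\tfrac1{|M|}\Pr[F=1]+\max\{\varepsilon,e^{1/4}(\cos\tfrac\pi8)^{n/2}\}$, as required. The step I expect to be most delicate is the one-time-pad decoupling of $\bar M$, since it is what simultaneously legitimises treating Eve's register as a leaky-MoE register and prevents the classical part of the ciphertext from leaking information about $t'$; the bookkeeping around conditioning on the possibly rare acceptance event, which is where the second term of the maximum is born, is the secondary technical point.
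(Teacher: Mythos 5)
Your proposal is correct and follows essentially the same route as the paper's proof: the same leaky-MoE entropic uncertainty relation with the same case split (your dichotomy on $\Pr[F=1]$ versus $e^{1/4}(\cos\frac{\pi}{8})^{n/2}$ is equivalent to the paper's either/or split of the two entropy terms, since $H_{\min}(T|N(AB))=-\lg\Pr[F=1]$), the same one-time-pad decoupling of $\bar{M}$ (the paper conditions on $\bar{M}=\bar{m}$ for each value and recombines, where you absorb $\bar{M}$ into the splitting channel and Charlie's register --- interchangeable), and the same extractor step, with your reduction of guessing $M$ to guessing $e(T',R)=M+\bar{M}+H$ standing in for the paper's explicit computation that the final state is $\varepsilon^{\ast}$-close to $\mu_{MRH}\otimes\rho_{A'TE\land(F=1)}$ (note that your claim that $H$ may be adjoined to Eve's side information for free is correct but itself rests on a second one-time pad, the uniformity of $M$). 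One remark: as in the paper's own proof, the extractor step requires $H_{\min}(T'|A'TE\bar{M}H)\geq\kappa$, hence $\kappa\leq\frac{-\lg\cos\frac{\pi}{8}}{2}n-\frac{1}{4\ln 2}$, so the ``$\geq$'' in the theorem statement is evidently a typo for ``$\leq$'' (compare the analogous hypotheses in the commitment and QKD theorems), and your reading matches the intended one.
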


\begin{proof}
	We have the state before decryption
	\begin{align}
		\rho_{ATT'RHMBE}=\expec_{a,t,t',r,h,m}[att'rh]\otimes[m]\otimes\Phi([m+e(t',r)+h]\otimes\ketbra{a_{t,t'}}).
	\end{align}
	To begin the decryption, Alice shares $a$, and Bob makes a measurement $N$ on $B$ to determine a guess $\hat{t}$ of $t$. Fix $\bar{m}\in\bar{M}$. Then, taking $\sigma\mapsto\Phi([\bar{m}]\otimes\sigma)$ to be the cloning channel in the \gamename MoE game, we get by the \gamename MoE property that $H_{\min}(T|AB;T'|A'TE)_{\rho_{|(\bar{M}=\bar{m})}}\geq(-\lg\cos\tfrac{\pi}{8})n-\frac{1}{2\ln 2}$, where $A'$ is a copy of $A$. Thus, we must have either $H_{\min}(T|N(AB))_{\rho_{|(\bar{M}=\bar{m})}}\geq\frac{-\lg\cos\frac{\pi}{8}}{2}n-\frac{1}{4\ln 2}$ or $H_{\min}(T'|A'TE)_{\rho_{|(N(AB)=T\land \bar{M}=\bar{m})}}\geq \frac{-\lg\cos\frac{\pi}{8}}{2}n-\frac{1}{4\ln 2}$. In the former case, as $AB$ is the register Bob has access to by that point, we have
	\begin{align}
		\Pr[F=1]=\Pr[\hat{T}=T]=\Pr[N(AB)=T]\leq e^{1/4}(\cos\tfrac{\pi}{8})^{n/2}.
	\end{align}
	In the latter case, we have by hypothesis and the strong extractor property,
	\begin{align}
		\begin{split}
			&\norm{\rho_{e(T',R)RA'TE|(F=1\land \bar{M}=\bar{m})}-\mu_{\tilde{M}}\otimes\mu_R\otimes\rho_{A'TE|(F=1\land M=m)}}_{\Tr}\\
			&=\norm{\rho_{e(T',R)RA'TE|(N(AB)=T\land \bar{M}=\bar{m})}-\mu_{\tilde{M}}\otimes\mu_R\otimes\rho_{A'TE|(N(AB)=T\land\bar{M}=\bar{m})}}_{\Tr}\leq\varepsilon,
		\end{split}
	\end{align}
	where $\tilde{M}=\Z_2^\ell$ is the register containing $e(T',R)$. Combining the two cases,
	\begin{align}
		\begin{split}
			&\norm{\rho_{e(T',R)RA'TE\land(F=1)|(\bar{M}=\bar{m})}-\mu_{\tilde{M}}\otimes\mu_R\otimes\rho_{A'TE\land(F=1)|(\bar{M}=\bar{m})}}_{\Tr}\\
			&=\Pr[F=1]_\rho\norm{\rho_{e(T',R)RA'TE|(F=1\land \bar{M}=\bar{m})}-\mu_{\tilde{M}}\otimes\mu_R\otimes\rho_{A'TE|(F=1\land\bar{M}=\bar{m})}}_{\Tr}\\
			&\leq\varepsilon^\ast,
		\end{split}
	\end{align}
	where we set $\varepsilon^\ast=\max\{\varepsilon,e^{1/4}(\cos\tfrac{\pi}{8})^{n/2}\}$. This implies that, as $m$ and $\bar{m}=m+e(t',r)+h$ are uniformly distributed and independent,
	\begin{align}
		\begin{split}
			\rho_{M\bar{M}e(T',R)RA'TE\land(F=1)}&=\expec_{m,\bar{m}}[m\bar{m}]\otimes\rho_{e(T',R)RA'TE\land(F=1)|(\bar{M}=\bar{m})}\\
			&\approx_{\varepsilon^\ast}\expec_{m,\bar{m}}[m\bar{m}]\otimes\mu_{\tilde{M}}\otimes\mu_R\otimes\rho_{A'TE\land(F=1)|(\bar{M}=\bar{m})},
		\end{split}
	\end{align}
	hence $\norm{\rho_{\tilde{M}RM\bar{M}A'TE\land(F=1)}-\mu_{\tilde{M}RM}\otimes\rho_{\bar{M}A'TE\land(F=1)}}_{\Tr}\leq\varepsilon^\ast$. Supposing $f=1$, the decryption continues and Eve also gets $h=m+\bar{m}+\tilde{m}$ and tries to guess $m$. As classical computations are CPTP maps, we see that
	\begin{align}
		\begin{split}
			&\norm{\rho_{R\tilde{M}M\bar{M}A'TE\land(F=1)}-\mu_{R\tilde{M}M}\otimes\rho_{\bar{M}A'TE\land(F=1)}}_{\Tr}\\
			&\geq\norm{\rho_{R\tilde{M}M(M+\bar{M}+\tilde{M})\bar{M}A'TE\land(F=1)}-\mu_R\otimes\sigma_{\tilde{M}M(M+\bar{M}+\tilde{M})\bar{M}A'TE\land(F=1)}}_{\Tr}\\
			&\geq\norm{\rho_{RM(M+\bar{M}+\tilde{M})A'TE\land(F=1)}-\mu_R\otimes\sigma_{M(M+\bar{M}+\tilde{M})A'TE\land(F=1)}}_{\Tr},
		\end{split}
	\end{align}
	where $\sigma_{\tilde{M}M\bar{M}A'TE\land(F=1)}=\mu_{\tilde{M}M}\otimes\rho_{MA'TE\land(F=1)}$, so
	\begin{align}
		\begin{split}
			\sigma_{M(M+\bar{M}+\tilde{M})A'TE\land(F=1)}&=\expec_{m,\bar{m},\tilde{m}}[m(m+\bar{m}+\tilde{m})]\otimes\rho_{A'TE\land(F=1)|(\bar{M}=\bar{m})}\\
			&=\mu_{MH}\otimes\rho_{A'TE\land(F=1)}.
		\end{split}
	\end{align}
	
	During the decryption, all the information Eve receives is contained in $E'=RHA'TE$. Let the subnormalised state $\tau_{ME'}=\mu_{MRH}\otimes\rho_{A'TE\land(F=1)}$. By the above, we have that $\norm{\rho_{ME'\land(F=1)}-\tau_{ME'}}_{\Tr}\leq\varepsilon^\ast$. As such, if the shared state were $\tau$, $M$ is independent from $E'$, and therefore $\Pr\squ*{M=\check{M}\land F=1}_\tau\leq\frac{\Tr\tau}{|M|}=\frac{\Pr[F=1]_\rho}{|M|}$. This implies that the probability of guessing $M$ given $E'$ of $\rho_{ME'\land(F=1)}$ is at most
	\begin{align}
		\Pr[M=\check{M}\land F=1]_\rho\leq\Pr\squ*{M=\check{M}\land F=1}_\tau+\varepsilon^\ast\leq\frac{\Pr[F=1]_\rho}{|M|}+\varepsilon^\ast,
	\end{align}
	as wanted.
\end{proof}

\begin{theorem}
    Suppose $\kappa\geq \frac{-\lg\cos\frac{\pi}{8}}{2}n-\frac{1}{4\ln 2}$. Then, \cref{prot:qecm-id} is $\max\{2\varepsilon,2e^{1/4}(\cos\tfrac{\pi}{8})^{n/2}\}$ -indistinguishable-uncloneable.
\end{theorem}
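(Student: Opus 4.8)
The plan is to mirror the proof of the preceding uncloneable-security theorem: I would extract the same decoupling of $e(T',R)$ from Eve's view, via the leaky MoE property and the strong extractor, and then use it to argue that the ciphertext component $\bar M = M + e(T',R) + H$ acts as a one-time pad that hides the message — and hence the challenge bit $Y$ — from Eve. I would deliberately avoid routing through the general equivalence theorem proved above, since that would lose a factor of $|M|$, whereas here we want only a factor of two over the uncloneable parameter $\varepsilon^\ast=\max\{\varepsilon,e^{1/4}(\cos\tfrac{\pi}{8})^{n/2}\}$.

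First I would write down the pre-decryption state $\rho_{KMBEY}$ as in the indistinguishability proof and fix a value $\bar m$ of the padded register $\bar M$. The key structural observation is that, conditioned on $\bar M=\bar m$, the coset-state registers $(A,T,T',V)$ remain uniform and independent of $(M,R,S)$ — the uniform pad $H$ absorbs the message — so that, bundling the fixed value $\bar m$ and the side information $S$ into the splitting channel, the registers $(A,A',T,T',B,E)$ are exactly in the configuration of the leaky MoE game. Applying \cref{cor:entropic-moe} with Bob's guessing measurement $N$ then gives, for each $\bar m$, the same dichotomy as before: either $H_{\min}(T\mid N(AB))$ is large, forcing $\Pr[F=1\mid\bar M=\bar m]\leq e^{1/4}(\cos\tfrac{\pi}{8})^{n/2}$, or $H_{\min}(T'\mid A'TE)\geq\kappa$ on the state conditioned on $N(AB)=T$, whence the extractor yields that $e(T',R)$ is $\varepsilon$-close to uniform and independent of $(R,A',T,E)$.

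With this in hand I would show that, on the partial state conditioned on $F=1$, Eve's final register $E'=RHA'TE$ is $\varepsilon^\ast$-close to a fixed state $\tau_{E'}$ that does \emph{not} depend on the challenge branch. Concretely, since Eve learns $H$ and $R$ during decryption, her knowledge of $\bar M=M+e(T',R)+H$ reduces to $M+e(T',R)$; as $e(T',R)$ is (close to) uniform and independent of her side information — including the part of $E$ produced from $S$ — this quantity is uniform and independent of $M$, so the message, and with it the correlation between $M$ and $S$ that separates $Y=0$ from $Y=1$, is washed out. Because the resulting ideal state is the same in both branches, the triangle inequality gives
\begin{align}
	\norm{\rho_{E'|(Y=0)\land(F=1)}-\rho_{E'|(Y=1)\land(F=1)}}_{\Tr}\leq 2\varepsilon^\ast .
\end{align}

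The step I expect to be the main obstacle is the bookkeeping around the event $F=1$. As in the uncloneable proof, the dichotomy is obtained per value of $\bar m$, and in the low-entropy branch the relevant state is conditioned on a possibly rare event; the clean way to combine the branches is to argue at the level of the subnormalised partial states $\rho_{E'\land(F=1)}^{(Y=b)}$, where the smallness of $\Pr[F=1\mid\bar M=\bar m]$ compensates for the absence of an extractor bound, exactly as the factor $\Pr[F=1]$ is used to pass from a conditional to a partial bound in the uncloneable proof. A secondary point to verify is that $F$ is independent of $Y$ — which holds because the channel input $(\bar M,V,S)$ has the same distribution in both branches once $H$ has randomised the pad — so that the two partial states carry equal weight and the triangle inequality indeed produces $2\varepsilon^\ast$ rather than a branch-dependent constant.
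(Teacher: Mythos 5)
Your proposal follows essentially the same route as the paper's proof: fixing $\bar m$ (and $m$ in the $Y=1$ branch), invoking the leaky MoE entropic uncertainty relation with the extractor dichotomy per branch, combining the two cases at the level of subnormalised partial states on $F=1$, and concluding via the triangle inequality once the residual state $\rho_{A'TE|(Y=b)\land(F=1)}$ is shown branch-independent. Your closing observation that the channel input has the same distribution in both branches once $H$ randomises the pad is exactly how the paper establishes this last point (as an equality $\rho_{A'T\bar MVS|(Y=0)}=\rho_{A'T\bar MVS|(Y=1)}$ pushed through a trace non-increasing channel), so there is no substantive gap.
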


\begin{proof}
    With $\rho_{MS}=\sum_mp_m[m]\otimes\rho^m_S$, we have again
    \begin{align}
        &\rho_{KMCSY}=\rho_{ATT'RHM\bar{M}VSY}\\
        &=\frac{1}{2}\sum_m\expec_{a,t,t',r,h}[att'rhm]\otimes[m+e(t',r)+h]\otimes\ketbra{a_{t,t'}}\otimes\parens*{\delta_{m,m_0}\rho_S\otimes[0]+p_m\rho^m_S\otimes[1]},\nonumber
    \end{align}
    so given the cloning attack $\Phi:\mc{L}(\bar{M}VS)\rightarrow\mc{L}(BE)$, the state before decryption is
    \begin{align}
        \rho_{KMBEY}=\frac{1}{2}\sum_m\!\!\expec_{a,t,t',r,h}\!\!\![att'rhm]\otimes\big(&\delta_{m,m_0}\Phi([m+e(t',r)+h]\otimes\ketbra{a_{t,t'}}\otimes\rho_S)\otimes[0]\\[-1em]
        &+p_m\Phi([m+e(t',r)+h]\otimes\ketbra{a_{t,t'}}\otimes\rho^m_S)\otimes[1]\big).\nonumber
    \end{align}
    On $\rho_{|(Y=0\land \bar{M}=\bar{m})}$, the cloning attack is $\sigma\mapsto\Phi([\bar{m}]\otimes\sigma\otimes\rho_S)$, so we have
    \begin{align}
    	H_{\min}(T|AB;T'|A'TE)_{\rho_{|(Y=0\land \bar{M}=\bar{m})}}\geq(-\lg\cos\tfrac{\pi}{8})n-\tfrac{1}{2\ln 2},
    \end{align}
	where $A'$ is a copy of $A$, and hence as above
    \begin{align}
        \norm{\rho_{e(T',R)RA'TE|(Y=0\land\bar{M}=\bar{m})\land(F=1)}-\mu_{\tilde{M}}\otimes\mu_R\otimes\rho_{A'TE|(Y=0\land\bar{M}=\bar{m})\land(F=1)}}_{\Tr}\leq\varepsilon^\ast,
    \end{align}
    and then, in order to include $M$ and $\bar{M}$,
    \begin{align}
    \begin{split}
        \rho_{Me(T',R)RA'TE\bar{M}|(Y=0)\land(F=1)}&=\expec_{\bar{m}}[m_0]\otimes\rho_{e(T',R)RA'TE|(Y=0\land\bar{M}=\bar{m})\land(F=1)}\otimes[\bar{m}]\\
        &\approx_{\varepsilon^\ast}\expec_{\bar{m}}[m_0]\otimes\mu_{\tilde{M}R}\otimes\rho_{A'TE|(Y=0\land\bar{M}=\bar{m})\land(F=1)}\otimes[\bar{m}]\\
        &=[m_0]\otimes\mu_{\tilde{M}R}\otimes\rho_{A'TE\bar{M}|(Y=0)\land(F=1)}.
    \end{split}
    \end{align}
	As $H=\tilde{M}+M+\bar{M}$, this gives that
	\begin{align*}
		\norm{\rho_{MRHA'TE|(Y=0)\land(F=1)}-[m_0]\otimes\mu_{RH}\otimes\rho_{A'TE|(Y=0)\land(F=1)}}_{\Tr}\leq\varepsilon^\ast
	\end{align*}
    In the same way, we get that on $\rho_{|(Y=1\land M=m\land\bar{M}=\bar{m})}$, the cloning attack is $\sigma\mapsto\Phi([\bar{m}]\otimes\sigma\otimes\rho_S^m)$, so the entropy $H_{\min}(T|AB;T'|A'TE)_{\rho_{|(Y=1\land M=m\land\bar{M}=\bar{m})}}\geq(-\lg\cos\tfrac{\pi}{8})n-\tfrac{1}{2\ln 2}$, and hence as above
    \begin{align}
        \norm{\rho_{e(T',R)RATE|(Y=1\land M=m)\land(F=1)}-\mu_{\tilde{M}}\otimes\mu_R\otimes\rho_{ATE|(Y=1\land M=m)\land(F=1)}}_{\Tr}\leq\varepsilon^\ast.
    \end{align}
    To include $M$ and $\bar{M}$,
    \begin{align}
    \begin{split}
        \rho_{e(T',R)RA'TEM\bar{M}|(Y=1)\land(F=1)}&=\sum_m\expec_{\bar{m}}p_m\rho_{e(T',R)RA'TE|(Y=1\land M=m\land \bar{M}=\bar{m})\land(F=1)}\otimes[m\bar{m}]\\
        &\approx_{\varepsilon^\ast}\sum_m\expec_{\bar{m}}p_m\mu_{\tilde{M}R}\otimes\rho_{A'TE|(Y=1\land M=m)\land(F=1)}\otimes[m\bar{m}]\\
        &=\mu_{\tilde{M}R}\otimes\rho_{A'TEM\bar{M}|(Y=1)\land(F=1)}.
    \end{split}
    \end{align}
    This gives that, again using $H=\tilde{M}+M+\bar{M}$,
    \begin{align}
        \norm{\rho_{RHMA'TE|(Y=1)\land(F=1)}-\mu_{RH}\otimes\rho_{MA'TE|(Y=1)\land(F=1)}}_{\Tr}\leq\varepsilon^\ast.
    \end{align}
    As $E'=RHA'TE$, this implies that
    \begin{align}
    \begin{split}
        &\norm{\rho_{E'|(Y=0)\land(F=1)}-\rho_{E'|(Y=1)\land(F=1)}}_{\Tr}\\
        &\leq\norm{\mu_{RH}\otimes\rho_{A'TE|(Y=0)\land(F=1)}-\mu_{RH}\otimes\rho_{A'TE|(Y=1)\land(F=1)}}_{\Tr}+2\varepsilon^\ast\\
        &=\norm{\rho_{A'TE|(Y=0)\land(F=1)}-\rho_{A'TE|(Y=1)\land(F=1)}}_{\Tr}+2\varepsilon^\ast.
    \end{split}
    \end{align}
    To finish the proof, we study the state $\rho_{A'TEY\land(F=1)}$. The cloning attack and the first decryption step takes $\rho_{A'T\bar{M}VSY}$ to $\rho_{A'TEY\land(F=1)}$ via a trace non-increasing channel. Therefore, if we have $\rho_{A'T\bar{M}VS|(Y=0)}=\rho_{A'T\bar{M}VS|(Y=1)}$, then $\rho_{A'TE|(Y=0)\land(F=1)}=\rho_{A'TE|(Y=1)\land(F=1)}$. To that end,
    \begin{align}
        \rho_{A'T\bar{M}VSY}&=\frac{1}{2}\sum_m\expec_{a,t,t',r,h}[at]\otimes[m+e(t',r)+h]\otimes\ketbra{a_{t,t'}}\otimes\parens*{\delta_{m,m_0}\rho_S\otimes[0]+p_m\rho^m_S\otimes[1]}\nonumber\\
        &=\frac{1}{2}\expec_{a,t,t'}[at]\otimes\mu_{\bar{M}}\otimes\ketbra{a_{t,t'}}\otimes\parens*{\rho_S\otimes[0]+\rho_S\otimes[1]}\\
        &=\rho_{A'T\bar{M}V}\otimes\rho_S\otimes\mu_Y\nonumber,
    \end{align}
    so $\rho_{A'T\bar{M}VS|(Y=0)}=\rho_{A'T\bar{M}VS|(Y=1)}$, giving the result.
\end{proof}
\begin{comment}

\end{comment} 

\section{Uncloneable Bit Commitment}\label{sec:commitment}

In this section, we discuss our second application, introduced in \cref{sec:intro-commitment}. 
In \cref{sec:commit-defs}, we define uncloneable commitments and provide a construction, given as \cref{prot:urbc}. Finally, in \cref{sec:commit-security}, we prove security of our construction.

\subsection{Motivation and definitions}\label{sec:commit-defs}

We want to extend bit commitment protocols to make them uncloneable --- that is that only the intended recipient can successfully reveal a commitment. First, we recall a usual definition of bit commitment, as in \cite{KWW12}. The form of commitment we use allows for strings, not just single bits, to be committed. Also, it supposes that, in the honest case, a uniformly random string is chosen to be committed; this however is not a restriction on the general case.

\begin{definition}
	A \emph{$(\ell,\varepsilon_1,\varepsilon_2,\varepsilon_3)$-randomised bit string commitment (RBC) scheme} is a pair of interactive protocols between two parties Alice and Bob: a protocol $\ttt{commit}$ that creates a state $\rho_{YAB}$, and a protocol $\ttt{reveal}$ that creates a state $\rho_{YA'\hat{Y}FB'}$. Here $Y=\Z_2^\ell$ is a classical register holding the committed string; $\hat{Y}=\Z_2^\ell$ is a classical register holding the revealed string; $F=\Z_2$ is a classical register that indicates whether Bob accepts (1) or rejects (0) the reveal; and $A,A'$ and $B,B'$ are additional quantum registers that Alice and Bob hold, respectively. The scheme additionally satisfies
	\begin{description}
		\item[$\varepsilon_1$-correctness] If Alice and Bob are honest, then $\norm{\rho_{Y\hat{Y}F}-\sigma_{YYF}}_{\Tr}\leq\varepsilon_1$, for $\sigma_{YF}=\mu_Y\otimes[1]$.
		
		\item[$\varepsilon_2$-hiding] If Alice is honest, then after $\ttt{commit}$, $\norm{\rho_{YB}-\mu_{Y}\otimes\rho_B}_{\Tr}\leq\varepsilon_2$.
		
		\item[$\varepsilon_3$-binding] If Bob is honest, there exists a state $\sigma_{YAB}$ such that $\norm{\rho_{YAB}-\sigma_{YAB}}_{\Tr}\leq\varepsilon_3$, and if $\ttt{reveal}$ is run to get $\sigma_{YA'\hat{Y}FB'}$, $\Pr\squ{Y\neq\hat{Y}\land F=1}_\sigma\leq\varepsilon_3$.
	\end{description}
\end{definition}

Bit commitment is not possible with no additional assumptions \cite{BS16}, so we need a model with, \emph{e.g.}, computational or storage assumptions in order for this definition to not be vacuous. Notwithstanding, we can extend the definition to handle uncloneability as well. We do so by adding an eavesdropper Eve, from whom Alice wishes to hide her commitment. In order to check for cloning, the protocol will have an additional $\ttt{check}$ step which is used to verify whether it is in fact Bob who received the commitment. The separation of the check step also allows us to consider various models: Eve can be allowed to freely communicate with Bob prior to that step, but not afterwards, as Bob could in that case simply give his register that passed the check to her.

\begin{definition}
	A \emph{$(\ell,\varepsilon_1,\varepsilon_2,\varepsilon_3,\delta)$-uncloneable randomised bit string commitment (URBC) scheme} is a triple of protocols between two parties Alice and Bob, eavesdropped by an eavesdropper Eve: a protocol $\ttt{commit}$ that creates a state $\rho_{YABE}$, a protocol $\ttt{check}$ that creates a state $\rho_{YGA'B'E'}$, and a protocol $\ttt{reveal}$ that creates a state $\rho_{YGA''\hat{Y}FB''E''}$. Here, $Y=\Z_2^\ell$ is a classical register holding the committed string; $\hat{Y}=\Z_2^\ell$ is a classical register holding the revealed string; $G=\Z_2$ is a classical register that indicates whether Alice accepts (1) or rejects (0) the check; $F=\Z_2$ is a classical register that indicates whether Bob accepts (1) or rejects (0) the reveal; and $A,A',A''$, $B,B',B''$, and $E,E',E''$ are additional quantum registers that Alice, Bob, and Eve hold, respectively. The scheme additionally satisfies
	\begin{description}
		\item[$\varepsilon_1$-correctness] If Alice and Bob are honest, and Eve does not act, then $\norm{\rho_{YG\hat{Y}F}-\sigma_{YGYF}}_{\Tr}\leq\varepsilon_1$, where $\sigma_{YGF}=\mu_{Y}\otimes[1]\otimes[1]$.
		
		\item[$\varepsilon_2$-hiding] If Alice is honest, then after $\ttt{commit}$, $\norm*{\rho_{YBE}-\mu_Y\otimes\rho_{BE}}_{\Tr}\leq\varepsilon_2$, and after $\ttt{check}$, $\norm*{\rho_{YB'E'}-\mu_Y\otimes\rho_{B'E'}}_{\Tr}\leq\varepsilon_2$.
		
		\item[$\varepsilon_3$-binding] If Bob is honest, there exists a state $\sigma_{YABE}$ such that $\norm{\rho_{YABE}-\sigma_{YABE}}_{\Tr}\leq\varepsilon_3$ and $\Pr\squ{Y\neq\hat{Y}\land F=1}_\sigma\leq\varepsilon_3$.
		
		\item[$\delta$-uncloneability] If Alice is honest, $\norm*{\rho_{YE''\land(G=1)}-\mu_Y\otimes\rho_{E''\land(G=1)}}_{\Tr}\leq\delta$.
	\end{description}
\end{definition}

From this definition, we see that uncloneability holds for any malicious Bob, even one who colludes with Eve, as long as they do not communicate after the check. Similarly to interactive uncloneable encryption, the commitment can be seen as not having an intended recipient prior to the check step --- in particular, Bob and Eve may have arbitrary communication before then. This illustrates an important aspect of the uncloneability, as only Bob will be able to open despite a lack of an agreement between him an Alice, such as a pre-shared secret key.

\begin{remark}
	Note that the above definitions do not hold as given in the computational setting. However, it is straightforward to adapt them by replacing the supremum in the trace norm $\norm{A}_{\Tr}=\sup_{0\leq P\leq\Id}\Tr(PA)$ with the distinguishing advantage corresponding to a computationally-bounded guessing strategy. This allows adaptation to a wide range of computational settings where different computational assumptions that give rise to commitments can be considered. For simplicity, we use the trace norm definition to prove security of our URBC construction, but the proofs work as well in such computational settings simply because the trace norm upper bounds any seminorm given as a supremum over fewer operators. Nevertheless, in our instantiation, the information-theoretic nature of the uncloneability property may be preserved as this does not depend on the choice of commitment assumption.
\end{remark}

Now, we can define a candidate URBC scheme. We do so by taking an RBC scheme and turning it into an uncloneable one on polynomially shorter bit strings using the \gamename MoE property, implicitly working under the assumptions that are required for the commitment.

Let $c=(\ttt{commit}_0,\ttt{reveal}_0)$ be a $(k+\ell,\varepsilon_1,\varepsilon_2,\varepsilon_3)$-RBC scheme, let $A$ be the set of all subspaces of $V=\Z_2^n$ of dimension $n/2$, let $e:\Z_2^{n/2}\times\Z_2^k\rightarrow\Z_2^\ell$ be a quantum-proof $(\kappa,\varepsilon')$-strong extractor, and let $C\subseteq\Z_2^{n/2}$ be an $(n/2,n/2-s,d)$-linear error-correcting code with syndrome $\syn:\Z_2^{n/2}\rightarrow\Z_2^s$.
\begin{mdframed}
\begin{protocol}[Uncloneable bit string commitment]\label{prot:urbc}\hphantom{}

\begin{description}
	\item[Commit] Let $R=\Z_2^k$, $H=\Z_2^\ell$, and $T=T'=\Z_2^{n/2}$. Alice and Bob commit to $(r,h)\in R\times H$ using $c$. Then, Alice samples $a\in A$, $t\in T$, and $t'\in T'$ uniformly at random, after which she prepares the state $\ket{a_{t,t'}}$ and sends it to Bob. Alice stores $t,t',a$ and Bob stores $\ket{a_{t,t'}}$, and they both store what is needed to reveal the commitment of $(r,h)$.
	
	\item[Check] Alice sends Bob $a$ and he measures in the coset state basis to get measurements $\hat{t},\hat{t}'$ of $t,t'$, then sends $\hat{t}$ to Alice. If $\hat{t}=t$, Alice sets $g=1$, else she sets $g=0$. Alice stores $t'$ and Bob stores $\hat{t}'$, and they both store what is needed to reveal the commitment of $(r,h)$.
	
	\item[Reveal] Bob selects a random subset $j\subseteq\{1,\ldots,n/2\}$ of cardinality $\eta n/2$ and sends it to Alice. She replies with $\syn(t')$ and $t'_j$. Then, they reveal the commitment $c$ to get $(\hat{r},\hat{h})$. If $\mathrm{syn}(\hat{t}')=\mathrm{syn}(t')$, $t'_j=\hat{t}'_j$, and $\ttt{reveal}_0$ accepts ($f_0=1$), Bob sets $f=1$; else he sets $f=0$. Alice's output is $e(t',r)+h$ and Bob's output is $e(\hat{t}',\hat{r})+\hat{h}$.
\end{description}
\end{protocol}
\end{mdframed}

This protocol is illustrated in \cref{fig:urbc}.

\begin{figure}[h!]
	\centering
	\begin{subfigure}{\textwidth}
		\centering
		\begin{tikzpicture}
			\draw (-5.8,-0.5) rectangle (-4.2,0.5) node[pos=0.5] {Alice};
			\draw (4.2,-0.5) rectangle (5.8,0.5) node[pos=0.5] {Bob};
			\draw[-Latex] (-4.2,0) -- (4.2,0) node[below, pos=0.5] {$\ket{a_{t,t'}}$};
			\draw (-0.4,0.7) rectangle (0.4,1.5) node[pos=0.5]{$r,h$};
			\draw[dashed,-Latex] (-4.2,0) -- (-0.4,1.1);
			\draw[dotted] (4.2,0) -- (0.4,1.1);
		\end{tikzpicture}
		\label{fig:urbc-commit}
		\caption{The protocol $\ttt{commit}$}
	\end{subfigure}
	
	\begin{subfigure}{\textwidth}
		\centering
		\begin{tikzpicture}
			\draw (-5.8,-0.5) rectangle (-4.2,0.5) node[pos=0.5] {Alice};
			\draw (4.2,-0.5) rectangle (5.8,0.5) node[pos=0.5] {Bob};
			\draw[double,-Latex] (-4.2,0.2) -- (4.2,0.2) node[above, pos=0.5] {$a$};
			\draw[double,Latex-] (-4.2,-0.2) -- (4.2,-0.2) node[below, pos=0.5] {$\hat{t}$};
		\end{tikzpicture}
		\label{fig:urbc-check}
		\caption{The protocol $\ttt{check}$}
	\end{subfigure}
	
	\begin{subfigure}{\textwidth}
		\centering
		\begin{tikzpicture}
			\draw (-5.8,-0.5) rectangle (-4.2,0.5) node[pos=0.5] {Alice};
			\draw (4.2,-0.5) rectangle (5.8,0.5) node[pos=0.5] {Bob};
			\draw (-0.4,0.9) rectangle (0.4,1.7) node[pos=0.5]{$r,h$};
			\draw[dotted] (-4.2,0.2) -- (-0.4,1.3);
			\draw[Latex-,dashed] (4.2,0.2) -- (0.4,1.3);
			\draw[double,Latex-] (-4.2,0.2) -- (4.2,0.2) node[above, pos=0.5] {$j$};
			\draw[double,-Latex] (-4.2,-0.2) -- (4.2,-0.2) node[below, pos=0.5] {$\syn(t'),t'_j$};
			\draw[double,-Latex] (-5,-0.5) -- (-5,-1.5) node[below] {$e(t',r)+h$};
			\draw[double,-Latex] (5,-0.5) -- (5,-1.5) node[below] {$e(\hat{t}',\hat{r})+\hat{h}$};
		\end{tikzpicture}
		\label{fig:urbc-reveal}
		\caption{The protocol $\ttt{reveal}$}
	\end{subfigure}
	\caption{Illustration of the commitment protocol \cref{prot:urbc}. Solid arrows represent transmission of quantum states, double arrows represent transmission of classical information, dashed arrows represent commitment and opening, and dotted lines represent other interactions involved in the commitment without transmission of relevant information.}
	\label{fig:urbc}
\end{figure}
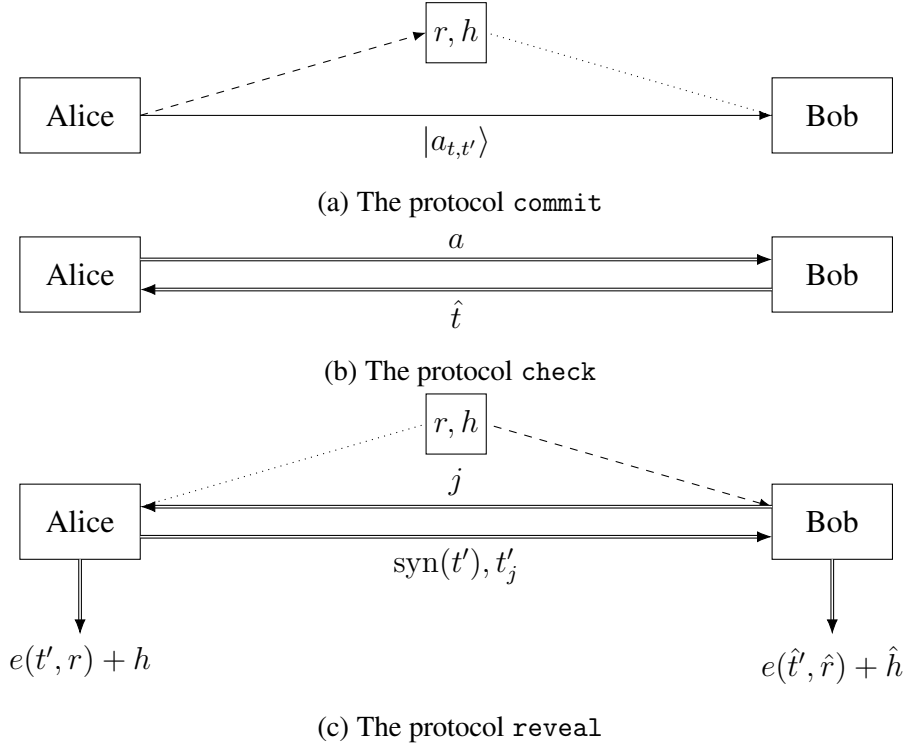

\subsection{Security proofs}\label{sec:commit-security}

\begin{proposition}
    \cref{prot:urbc} is $\varepsilon_1$-correct.
\end{proposition}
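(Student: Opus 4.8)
The plan is to trace the honest, Eve-free execution of \cref{prot:urbc} step by step and observe that every coset-state and error-correcting-code operation is exact in this case, so that the entire deviation from the ideal state $\sigma_{YGYF}$ is inherited from the $\varepsilon_1$-correctness of the underlying RBC scheme $c$. Concretely, I would first record the exact relations. Since the coset states $\set*{\ket{a_{t,t'}}}{t,t'}$ form an orthonormal basis of $\mc{H}_V$ (see \cref{sec:coset}), Bob's measurement of the honestly prepared $\ket{a_{t,t'}}$ in the coset basis returns $(\hat{t},\hat{t}')=(t,t')$ with certainty. Consequently the check passes deterministically, so $G=1$; and in the reveal step the two coset-derived tests $\syn(\hat{t}')=\syn(t')$ and $t'_j=\hat{t}'_j$ both hold with certainty because $\hat{t}'=t'$. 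Hence $F$ equals the underlying reveal flag $f_0$, and Bob's output is $\hat{Y}=e(\hat{t}',\hat{r})+\hat{h}=e(t',\hat{r})+\hat{h}$ while Alice's committed string is $Y=e(t',r)+h$.

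Having isolated these relations, the second step is a data-processing reduction to $c$. Denote by $(r,h)$ the string committed by $c$, by $(\hat{r},\hat{h})$ the revealed string, and by $f_0$ its reveal flag; the $\varepsilon_1$-correctness of $c$ states that the joint classical state of $\big((r,h),(\hat{r},\hat{h}),f_0\big)$ is $\varepsilon_1$-close in trace norm to the ideal one in which $(r,h)$ is uniform, $(\hat{r},\hat{h})=(r,h)$, and $f_0=1$. I would then exhibit a fixed classical post-processing (a CPTP map) $\Lambda$ that computes the URBC outputs from the underlying ones together with the perfectly correlated coset data $t'$, namely $Y=e(t',r)+h$, $\hat{Y}=e(t',\hat{r})+\hat{h}$, $G=1$, and $F=f_0$ (using $\hat{t}'=t'$). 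Applying $\Lambda$ to the real underlying state produces $\rho_{YG\hat{Y}F}$, while applying it to the ideal underlying state produces exactly $\sigma_{YGYF}$: there $(\hat{r},\hat{h})=(r,h)$ forces $\hat{Y}=e(t',r)+h=Y$, and $f_0=1$ forces $F=1$, alongside $G=1$. Monotonicity of the trace norm under $\Lambda$ then transfers the bound, giving $\norm{\rho_{YG\hat{Y}F}-\sigma_{YGYF}}_{\Tr}\leq\varepsilon_1$.

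The remaining check is that the marginal on $Y$ is uniform, as required by $\sigma_{YGF}=\mu_Y\otimes[1]\otimes[1]$: since $h$ is sampled uniformly on $\Z_2^\ell$ by $c$ and is independent of $(t',r)$, it acts as a one-time pad, so $Y=e(t',r)+h$ is uniform, matching $\mu_Y$. I do not expect any genuine obstacle here, since the honest parties with perfect coset states execute every quantum and code step without error; the only care needed is to phrase the reduction so that the data-processing inequality applies cleanly, i.e.\ to make $\Lambda$ act solely on the underlying commitment registers together with the deterministic coset data $t'$, so that the closeness of the underlying real and ideal states is preserved verbatim.
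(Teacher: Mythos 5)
Your proposal is correct and follows essentially the same route as the paper's proof: the coset-basis measurement is exact in the honest case, so $\hat{t}=t$, $\hat{t}'=t'$, the check and syndrome/subset tests pass deterministically giving $G=1$ and $F=F_0$, and the $\varepsilon_1$ bound is transferred from the correctness of $c$ by viewing the output computation as a quantum channel (the paper phrases your map $\Lambda$ as ``classical computations are quantum channels''), with the uniformity of $Y=e(t',r)+h$ on the ideal state coming from the uniform $h$ acting as a one-time pad, exactly as in the paper.
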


\begin{proof}
    We suppose Alice and Bob are honest, and Eve does not act. First, Alice and Bob run $\ttt{commit}_0$ to get $\rho_{RHA_0B_0}$. Then, in the commit and check phases, Alice sends $\ket{a_{t,t'}}$ and $a$ to Bob, and he is able to measure $t,t'$ exactly, so $\hat{t}=t$ and $\hat{t}'=t'$. Bob sends $\hat{t}$ to Alice, and she sets $g=1$. At that point, the shared state has the form $\rho_{RHA_0B_0T'\hat{T}'G}=\rho_{RHA_0B_0}\otimes\sigma_{T'T'}\otimes[1]$ for $\sigma_{T'}=\mu_{T'}$. Next, in the reveal phase, we have that $\mathrm{syn}(\hat{t}')=\mathrm{syn}(t')$ and $\hat{t}_j'=t_j'$, so Bob's flag $f=f_0$. When Alice and Bob run $\ttt{reveal}_0$, the shared state becomes $\rho_{RHA_0'\hat{R}\hat{H}B_0'F_0FT'\hat{T'}G}=\rho_{RHA_0'\hat{R}\hat{H}B_0'F_0F_0}\otimes\sigma_{T'T'}\otimes[1]$, where we know by correctness of $c$ that $\norm{\rho_{RH\hat{R}\hat{H}F_0}-\sigma_{RHRHF_0}}_{\Tr}\leq\varepsilon_1$ for $\sigma_{RHF_0}=\mu_{RH}\otimes[1]$. Thus, for $\sigma_{T'RHF_0}=\mu_{T'RH}\otimes[1]$, we see that
    \begin{align}
    \begin{split}
        &\norm{\rho_{T'\hat{T}'RH\hat{R}\hat{H}F_0F}-\sigma_{T'T'RHRHF_0F_0}}_{\Tr}\leq\norm{\sigma_{T'T'}\otimes(\rho_{RH\hat{R}\hat{H}F_0}-\sigma_{RHRHF_0})}_{\Tr}\leq\varepsilon_1.
    \end{split}
    \end{align}
   	We see that $\sigma_{(e(T',R)+H)F}=\sigma_{YF}=\mu_Y\otimes[1]$, as $\sigma_H=\mu_H$ is hashed. Then, as classical computations are quantum channels,
   	\begin{align}
        \norm{\rho_{(e(T',R)+H)G(e(\hat{T}',\hat{R})+\hat{H})F}-\sigma_{YGYF}}_{\Tr}\leq\norm{\rho_{T'\hat{T}'RH\hat{R}\hat{H}F_0F}-\sigma_{T'T'RHRHF_0F_0}}_{\Tr}\leq\varepsilon_1.
    \end{align}
\end{proof}

\begin{proposition}
    \cref{prot:urbc} is $\varepsilon_2$-hiding.
\end{proposition}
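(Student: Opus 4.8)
The plan is to show that the committed string $Y=e(T',R)+H$ stays close to uniform and independent of the adversary's registers, both after $\ttt{commit}$ and after $\ttt{check}$, by combining the hiding of the underlying RBC scheme $c$ with the observation that $H$ is a one-time pad whose commitment is never opened during these two phases. Concretely, $\ttt{commit}$ runs $\ttt{commit}_0$ on $(r,h)$, which are sampled independently of the fresh $a,t,t'$ used to build $\ket{a_{t,t'}}$. Since Bob and Eve may communicate freely before the check, I would treat the joint Bob--Eve system as the malicious receiver of $c$, so that the $\varepsilon_2$-hiding of $c$ gives $\norm{\omega_{(RH)B_0}-\mu_{RH}\otimes\omega_{B_0}}_{\Tr}\leq\varepsilon_2$, where $B_0$ collects all registers held by Bob and Eve after $\ttt{commit}_0$. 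The transmission of the coset state is a channel independent of $(R,H)$ that produces the register $T'$ (Alice's) together with the new adversarial registers, so the global state factorizes as a tensor product of a ``commitment part'' carrying $(R,H)$ and a ``coset part'' carrying $(T',V,\dots)$. As trace distance is preserved under tensoring with a fixed state and non-increasing under channels, this upgrades to
\[
\norm{\rho_{(RH)T'BE}-\mu_{RH}\otimes\rho_{T'BE}}_{\Tr}\leq\varepsilon_2.
\]

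The second step is the one-time-pad argument. Applying the channel that computes $Y=e(T',R)+H$ and discards $R,H,T'$ is trace-distance non-increasing; on the real state it yields $\rho_{YBE}$, while on $\mu_R\otimes\mu_H\otimes\rho_{T'BE}$ I would write $\rho_{T'BE}=\sum_{t'}q_{t'}[t']\otimes\rho^{t'}_{BE}$ and use that $h\mapsto e(t',r)+h$ is a bijection for each fixed $t',r$, so the uniform average over $h$ collapses the $Y$-register to $\mu_Y$, producing exactly $\mu_Y\otimes\rho_{BE}$. This gives $\norm{\rho_{YBE}-\mu_Y\otimes\rho_{BE}}_{\Tr}\leq\varepsilon_2$, the required bound after $\ttt{commit}$.

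For hiding after $\ttt{check}$, the key point I would stress is that every operation in the check phase---Alice broadcasting $a$, Bob returning $\hat t$, Alice setting $g$, and any local processing by the adversary---is independent of $(R,H)$ and, crucially, does not open the commitment to $(r,h)$. Hence the commitment part $\omega_{(RH)B_0}$ is untouched, and the same factorization yields $\norm{\rho_{(RH)T'B'E'}-\mu_{RH}\otimes\rho_{T'B'E'}}_{\Tr}\leq\varepsilon_2$. Since Alice still holds $T'$ classically after the check (the protocol stipulates she stores $t'$), the identical one-time-pad step yields $\norm{\rho_{YB'E'}-\mu_Y\otimes\rho_{B'E'}}_{\Tr}\leq\varepsilon_2$.

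I expect the main obstacle to be conceptual rather than computational. One must argue carefully that revealing $a$ during $\ttt{check}$---which lets the adversary learn $t'$ and therefore $e(T',R)$---does not break hiding, precisely because the pad $H$ remains secret inside the unopened commitment $c$. The delicate bookkeeping is in justifying the tensor factorization: checking that the commitment registers are genuinely decoupled from the coset registers, and that treating Bob and Eve as a single RBC receiver is legitimate given their unrestricted communication before the check. Once this decoupling is established, both hiding statements follow uniformly from the one-time-pad collapse.
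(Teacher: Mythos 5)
Your proposal is correct and takes essentially the same route as the paper's proof: invoke the $\varepsilon_2$-hiding of the underlying commitment on $(R,H)$, use that the coset-state registers $ATT'V$ are sampled independently so the global state factorizes as $\rho_{RHB_0}\otimes\rho_{ATT'V}$, and conclude via monotonicity of the trace distance under the channel computing $Y=e(T',R)+H$ together with the one-time-pad collapse from the uniform $H$. The paper handles the post-check statement exactly as you do, by noting that Bob's and Eve's registers after both phases are obtained by channels acting on $AVB_0$, which never touch the unopened commitment.
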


\begin{proof}
    As Alice is honest, the commitment $c$ is hiding in the sense that $\norm{\rho_{RHB_0}-\mu_{RH}\otimes\rho_{B_0}}_{\Tr}\leq\varepsilon_2$. Consider the state $\sigma_{RHATT'VB_0}=\mu_{RH}\otimes\rho_{ATT'VB_0}$. As $H$ is uniformly random, for each $t'\in T'$ and $r\in R$, $e(t',r)+H$ is uniformly random. Hence,
    \begin{align}
    \begin{split}
        \sigma_{(e(T',R)+H)AVB_0}&=\expec_{a,t,t',r,h}[(e(t',r)+h)a]\otimes\ketbra{a_{t,t'}}\otimes\rho_{B_0}=\mu_Y\otimes\rho_{AVB_0}.
    \end{split}
    \end{align}
    As Bob and Eve's registers after $\ttt{commit}$ and $\ttt{check}$ are given by quantum channels acting on $AVB_0$, we get, noting that $\rho_{RHB_0ATT'V}=\rho_{RHB_0}\otimes\rho_{ATT'V}$
    \begin{align}
    \begin{split}
    	\norm{\rho_{YBE}-\mu_Y\otimes\rho_{BE}}_{\Tr}&\leq\norm{\rho_{(e(T',R)+H)AVB_0}-\mu_Y\otimes\rho_{AVB_0}}_{\Tr}\\
    	&=\norm{\rho_{(e(T',R)+H)AVB_0}-\sigma_{(e(T',R)+H)AVB_0}}_{\Tr}\\
    	&\leq\norm{\rho_{RHATT'VB_0}-\sigma_{RHATT'VB_0}}_{\Tr}\\
    	&=\norm{\rho_{ATT'V}\otimes(\rho_{RHB_0}-\mu_{RH}\otimes\rho_{B_0})}_{\Tr}\leq\varepsilon_2.
    \end{split}
    \end{align}
    In the same way $\norm{\rho_{YB'E'}-\mu_Y\otimes\rho_{B'E'}}_{\Tr}\leq\varepsilon_2$.
\end{proof}

\begin{proposition}
    \cref{prot:urbc} is $\varepsilon_3+\parens*{1-\frac{2d}{n}}^{\eta\frac{n}{2}}$-binding.
\end{proposition}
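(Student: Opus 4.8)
The plan is to reduce binding of \cref{prot:urbc} to the $\varepsilon_3$-binding of the underlying RBC scheme $c=(\ttt{commit}_0,\ttt{reveal}_0)$ together with the minimum-distance guarantee of the code $C$. Since after $\ttt{commit}_0$ the pair $(r,h)$ is committed, binding of $c$ yields a post-$\ttt{commit}_0$ state $\sigma^0$ with $\norm{\rho^0-\sigma^0}_{\Tr}\leq\varepsilon_3$ and $\Pr[(R,H)\neq(\hat R,\hat H)\land F_0=1]_{\sigma^0}\leq\varepsilon_3$. I would then define the post-$\ttt{commit}$ state $\sigma_{YABE}$ by running \cref{prot:urbc} exactly as for $\rho$ but with the output of $\ttt{commit}_0$ replaced by $\sigma^0$, the coset-state preparation and transmission being applied identically. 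Fixing the (possibly malicious) Alice's strategy, the rest of $\ttt{commit}$ together with the honest-Bob evolution of $\ttt{check}$ and $\ttt{reveal}$ is a quantum channel on the post-commit registers, so monotonicity of the trace norm gives $\norm{\rho_{YABE}-\sigma_{YABE}}_{\Tr}\leq\varepsilon_3$, and running $\ttt{reveal}_0$ inside our $\ttt{reveal}$ keeps $\Pr[(R,H)\neq(\hat R,\hat H)\land F_0=1]_{\sigma}\leq\varepsilon_3$. As $\varepsilon_3\leq\varepsilon_3+(1-\tfrac{2d}{n})^{\eta n/2}$, the closeness clause of the binding definition already holds, so it remains to bound the break probability on $\sigma$.

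Next I would split the break event according to whether the inner commitment opened correctly, using that $F=1$ forces $F_0=1$:
\begin{align*}
\Pr[Y\neq\hat Y\land F=1]_\sigma\leq\Pr[(R,H)\neq(\hat R,\hat H)\land F_0=1]_\sigma+\Pr[Y\neq\hat Y\land F=1\land(R,H)=(\hat R,\hat H)]_\sigma .
\end{align*}
The first term is at most $\varepsilon_3$. On the event $(R,H)=(\hat R,\hat H)$ we have $Y=e(t',R)+H$ and $\hat Y=e(\hat t',R)+H$, so $Y\neq\hat Y$ forces $t'\neq\hat t'$, where $t'$ is Alice's committed value (classical after $\ttt{commit}$) and $\hat t'$ is Bob's coset-basis measurement produced during $\ttt{check}$; both are fixed before $\ttt{reveal}$.

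It then remains to bound the second term using the code. Acceptance $F=1$ requires $\syn(\hat t')=\syn(t')$, and since $\syn$ is linear this means $t'-\hat t'\in\ker\syn=C$; combined with $t'\neq\hat t'$ this makes $t'-\hat t'$ a nonzero codeword, so its disagreement set $D=\{i:t'_i\neq\hat t'_i\}$ has $\abs{D}\geq d$ by the minimum distance. Acceptance also requires $t'_j=\hat t'_j$, i.e.\ the set $j$ of size $\eta n/2$ avoids $D$. As Bob draws $j$ uniformly and independently of the fixed pair $(t',\hat t')$, conditioning on any such pair gives
\begin{align*}
\Pr_j[\,j\cap D=\emptyset\,]=\frac{\binom{n/2-\abs{D}}{\eta n/2}}{\binom{n/2}{\eta n/2}}=\prod_{i=0}^{\eta n/2-1}\frac{n/2-\abs{D}-i}{n/2-i}\leq\Big(1-\tfrac{\abs{D}}{n/2}\Big)^{\eta n/2}\leq\Big(1-\tfrac{2d}{n}\Big)^{\eta n/2},
\end{align*}
where the first inequality is the termwise estimate $\tfrac{n/2-\abs{D}-i}{n/2-i}\leq\tfrac{n/2-\abs{D}}{n/2}$ and the second uses $\abs{D}\geq d$. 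Averaging over $(t',\hat t')$ bounds the second term by $(1-\tfrac{2d}{n})^{\eta n/2}$, and adding the two terms yields $\Pr[Y\neq\hat Y\land F=1]_\sigma\leq\varepsilon_3+(1-\tfrac{2d}{n})^{\eta n/2}$, as claimed.

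The main obstacle I anticipate is the bookkeeping of the first step rather than the combinatorics: one must make precise that Alice's reveal messages $\syn(t'),t'_j$ are determined by her committed $t'$, classical after $\ttt{commit}$, so that the disagreement set $D$ is fixed \emph{before} $j$ is drawn, which is exactly what rules out trivial equivocation and legitimises treating $j$ as independent of $(t',\hat t')$. One must also verify that replacing the $\ttt{commit}_0$ output by $\sigma^0$ commutes with the coset-state preparation, so that $Y$ and all the reveal checks are computed by identical channels on $\rho$ and $\sigma$, ensuring both the trace-distance bound and the transfer of the $F_0$-break probability. Once this independence is in place, the distance-and-subset estimate is routine.
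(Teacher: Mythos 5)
Your proposal is correct and follows essentially the same route as the paper's proof: both reduce to the $\varepsilon_3$-binding of the inner commitment by propagating the nearby state $\sigma$ through the (trace-norm-monotone) commit channel, split the break event into an inner-commitment break ($RH\neq\hat{R}\hat{H}\land F_0=1$, bounded by $\varepsilon_3$) and a code-based event ($t'\neq\hat{t}'$ with matching syndrome and matching sample), and bound the latter by $\binom{n/2-d}{\eta n/2}/\binom{n/2}{\eta n/2}\leq\parens*{1-\frac{2d}{n}}^{\eta n/2}$ using the distance of the linear code and the uniform choice of $j$. Your slight variations --- conditioning explicitly on $(R,H)=(\hat{R},\hat{H})$ rather than the paper's disjunction, and the termwise estimate $\frac{n/2-|D|-i}{n/2-i}\leq\frac{n/2-|D|}{n/2}$ in place of the paper's factorwise bound --- are cosmetic, and the independence of $j$ from the pre-reveal pair $(t',\hat{t}')$ that you flag is likewise implicit in the paper's argument.
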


\begin{proof}
	Since $c$ is $\varepsilon_3$-binding, we consider the state $\sigma_{RHA_0B_0}$ such that $\norm{\sigma_{RHA_0B_0}-\rho_{RHA_0B_0}}_{\Tr}\leq\varepsilon_3$. As all the actions undertaken are quantum channels, we know that at the end of the commit phase, $\norm{\sigma_{YABE}-\rho_{YABE}}_{\Tr}\leq\varepsilon_3\leq\varepsilon_3+\parens*{1-\frac{2d}{n}}^{\eta\frac{n}{2}}$. Now, we continue the argument, implicitly assuming that the state is $\sigma$. In the reveal phase, Bob sets $f=1$ if and only if $\mathrm{syn}(t')=\mathrm{syn}(\hat{t}')$, $t'_j=\hat{t}'_j$, and $f_0=1$. Thus,
    \begin{align}
    \begin{split}
        \Pr\squ*{Y\neq\hat{Y}\land F=1}&=\Pr\squ*{e(T',R)+H\neq e(\hat{T}',\hat{R})+\hat{H}\land\mathrm{syn}(T')=\mathrm{syn}(\hat{T}')\land T'_J=\hat{T}'_J\land F_0=1}\\
        &\quad\leq\Pr\squ*{(T'\neq\hat{T}'\lor RH\neq\hat{R}\hat{H})\land\mathrm{syn}(T')=\mathrm{syn}(\hat{T}')\land T'_J=\hat{T}'_J\land F_0=1}\\
        &\quad\leq\Pr\squ*{RH\neq\hat{R}\hat{H}\land F_0=1}+\Pr\squ*{T'\neq \hat{T}'\land\mathrm{syn}(T')=\mathrm{syn}(\hat{T}')\land T'_J=\hat{T}'_J}.
    \end{split}
    \end{align}
    First, as $c$ is binding, $\Pr\squ*{RH\neq\hat{R}\hat{H}\land F_0=1}\leq\varepsilon_3$. Next, suppose that $\mathrm{syn}(t')=\mathrm{syn}(\hat{t}')$ but $t'\neq \hat{t}'$. Then as the code $C$ has distance $d$, the Hamming distance $d(t',\hat{t}')\geq d$. But, as $j$ is a subset of $\eta\frac{n}{2}$ indices chosen uniformly at random, the probability that $t'_j=\hat{t}'_j$ is no more than $\frac{\binom{n/2-d}{\eta n/2}}{\binom{n/2}{\eta n/2}}$. Simplifying,
    \begin{align}
    \begin{split}
        \Pr\big[T'\neq \hat{T}'\land\mathrm{syn}(T')=&\mathrm{syn}(\hat{T}')\land T'_J=\hat{T}'_J\big]\leq\Pr\squ*{T'\neq \hat{T}'\land d(T',\hat{T}')\geq d\land T'_J=\hat{T}'_J}\\
        &\leq\frac{\binom{n/2-d}{\eta n/2}}{\binom{n/2}{\eta n/2}}=\frac{(n/2-d)\cdots(n/2-d-\eta n/2+1)}{(n/2)\cdots (n/2-\eta n/2+1)}\\
        &=\parens*{1-\frac{d}{n/2}}\parens*{1-\frac{d}{n/2-1}}\cdots\parens*{1-\frac{d}{n/2-\eta n/2+1}}\\
        &\leq\parens*{1-\frac{2d}{n}}^{\eta\frac{n}{2}},
    \end{split}
    \end{align}
    which gives the result.
\end{proof}

\begin{theorem}
    Suppose $\kappa\leq\tfrac{-\lg\cos\frac{\pi}{8}}{2}n-\tfrac{1}{4\ln 2}-s-\eta\tfrac{n}{2}$. Then, \cref{prot:urbc} is $\max\{\varepsilon',e^{1/4}(\cos\tfrac{\pi}{8})^{n/2}\}$ -uncloneable.
\end{theorem}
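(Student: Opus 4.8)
The plan is to follow the structure of the uncloneable-security proof of \cref{prot:qecm-id}, now reading the $\ttt{check}$ step as the interaction on which Eve eavesdrops. Since Alice samples $(r,h)$ independently of $(a,t,t')$ and the commitment $c$ is hiding, the registers Bob and Eve hold after $\ttt{commit}$ are independent of $(r,h)$; consequently the splitting of the coset state $\ket{a_{t,t'}}$ between a (possibly malicious) Bob holding $B$ and the eavesdropper Eve holding $E$ can be folded into a single channel $\Phi:\mc{L}(V)\to\mc{L}(BE)$ and analysed by the leaky MoE game. The target quantity is $\norm{\rho_{YE''\land(G=1)}-\mu_Y\otimes\rho_{E''\land(G=1)}}_{\Tr}$, where $Y=e(t',r)+h$ and $E''$ collects everything Eve learns: her share of the coset state, the eavesdropped $a$ and $\hat t$, and the reveal-phase data $\syn(t')$, $t'_j$, and the opened $(r,h)$.

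First I would invoke the leaky MoE entropic uncertainty relation. Let $N:T\to\mc{P}(AB)$ be the measurement by which Bob produces his guess $\hat t$ during $\ttt{check}$, and let $A'$ be a copy of $a$. Applying \cref{cor:entropic-moe} gives $H_{\min}(T|N(AB))_\rho+H_{\min}(T'|A'TE)_{\rho_{|(N(AB)=T)}}\geq(-\lg\cos\tfrac{\pi}{8})n-\tfrac{1}{2\ln2}$, so at least one of the two terms is at least $\tfrac{-\lg\cos\frac{\pi}{8}}{2}n-\tfrac{1}{4\ln2}$. In the first case, since $G=1$ exactly when $N(AB)=T$, the guessing interpretation of the min-entropy yields $\Pr[G=1]=\Pr[N(AB)=T]\leq e^{1/4}(\cos\tfrac{\pi}{8})^{n/2}$, which already forces the partial state $\rho_{\land(G=1)}$ to have small trace.

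In the complementary case I would charge the reveal leakage to the min-entropy budget. Disclosing $\syn(t')$ costs at most $s$ bits and $t'_j$ at most $\eta\tfrac n2$ bits, so, treating $j$ as independent of $t'$, conditioning on these gives $H_{\min}(T'|A'TE\,\syn(T')\,T'_J)_{\rho_{|(G=1)}}\geq\tfrac{-\lg\cos\frac{\pi}{8}}{2}n-\tfrac{1}{4\ln2}-s-\eta\tfrac n2\geq\kappa$ by hypothesis. Since $R$ is uniform and independent of $T'$ and of all of Eve's pre-reveal side information (by independent sampling together with the hiding of $c$), the strong extractor $e$ applies with seed $R$: $e(T',R)$ is $\varepsilon'$-close to $\mu_{\tilde M}$ and independent of $R$ and the rest of Eve's data. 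As $Y=e(t',r)+h$ is obtained from $e(T',R)$ by the shift $h$, which is itself part of $E''$, this independence transfers to $Y$, yielding $\norm{\rho_{YE''\land(G=1)}-\mu_Y\otimes\rho_{E''\land(G=1)}}_{\Tr}\leq\varepsilon'$ on the partial state.

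Finally I would merge the two cases exactly as in the encryption proof: in the first case the bound $\Pr[G=1]\leq e^{1/4}(\cos\tfrac{\pi}{8})^{n/2}$ controls the whole subnormalised difference, and in the second the extractor gives $\varepsilon'$ after multiplying by $\Pr[G=1]\leq1$, so the uncloneability parameter is $\delta=\max\{\varepsilon',e^{1/4}(\cos\tfrac{\pi}{8})^{n/2}\}$. The main obstacle I anticipate is bookkeeping rather than any new inequality: isolating the coset-state attack from the commitment interaction so that the leaky MoE property genuinely applies, verifying the extractor's seed-independence (this is where the hiding of $c$ is implicitly needed), and correctly charging exactly $s+\eta\tfrac n2$ bits of reveal leakage so that the hypothesis on $\kappa$ is precisely what is used. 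A delicate point to justify carefully is that the subset $j$, although sent by a possibly malicious Bob and eavesdropped by Eve, carries no usable correlation with $t'$ under the no-communication-after-check assumption; and the shift-by-$h$ step is clean only because $e(T',R)$ is genuinely uniform, not merely high-entropy.
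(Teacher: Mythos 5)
Your proposal matches the paper's proof essentially step for step: apply the leaky MoE entropic uncertainty relation (\cref{cor:entropic-moe}) to the measurement Bob uses to produce $\hat{t}$ in the check phase, split into the two cases at half the bound, in the first case conclude $\Pr[G=1]=\Pr[M(AB)=T]\leq e^{1/4}(\cos\tfrac{\pi}{8})^{n/2}$, in the second charge exactly $s+\eta\tfrac{n}{2}$ bits for $\syn(t')$ and $t'_j$ via the chain rule so that the hypothesis on $\kappa$ lets the strong extractor give $\varepsilon'$, and recombine over the partial state $\land(G=1)$ to obtain $\max\{\varepsilon',e^{1/4}(\cos\tfrac{\pi}{8})^{n/2}\}$. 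Your additional bookkeeping --- folding the commit-phase registers into the splitting channel, the shift by $h$, and the seed-independence of $R$ from Eve's eavesdropped commitment transcript --- is if anything more explicit than the paper, which passes over these points silently.
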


\begin{proof}
    Due to the \gamename MoE property, we must have $H_{\min}(T|AB;T'|A'TE)\geq(-\lg\cos\tfrac{\pi}{8})n-\tfrac{1}{2\ln2}$ when Bob guesses $t$ during the check phase. This implies that, for any measurement $M$ Bob might have made to get $\hat{t}$, either
    $H_{\min}(T|M(AB))_{\rho}\geq\frac{-\lg\cos\frac{\pi}{8}}{2}n-\frac{1}{4\ln 2}$ or $H_{\min}(T'|A'TE)_{\rho_{|(M(AB)=T)}}\geq \frac{-\lg\cos\frac{\pi}{8}}{2}n-\frac{1}{4\ln 2}$. In the former case, the probability that $\hat{t}=t$, and hence that $g=1$, is at most $e^{1/4}(\cos\tfrac{\pi}{8})^{n/2}$. In the latter case, the additional information that Eve gets about $t'$ during the reveal phase is $\mathrm{syn}(t')$ and $t'_j$, so knowing that her final register $E''=A'TE\mathrm{syn}(T')T'_JJ$,
    \begin{align}
    \begin{split}
        H_{\min}(T'|E'')_{\rho_{|(M(AB)=T)}}&\geq H_{\min}(T'|A'TE)_{\rho_{|(M(AB)=T)}}-\lg|\mathrm{syn}(T')|-|J|\\
        &\geq \tfrac{-\lg\cos\frac{\pi}{8}}{2}n-\tfrac{1}{4\ln 2}-s-\eta\tfrac{n}{2}.
    \end{split}
    \end{align}
    Then, by hypothesis on the extractor, $\norm*{\rho_{YE''|(M(AB)=T)}-\mu_Y\otimes\rho_{E''|(M(AB)=T)}}_{\Tr}\leq\varepsilon'$. Thus, combining the two cases and noting that the events $M(AB)=T$ and $G=1$ are equivalent,
    \begin{align}
    \begin{split}
    	&\norm*{\rho_{YE''\land(G=1)}-\mu_Y\otimes\rho_{E''\land(G=1)}}_{\Tr}\\
    	&\qquad=\Pr\squ*{M(AB)=T}\norm*{\rho_{YE''|(M(AB)=T)}-\mu_Y\otimes\rho_{E''|(M(AB)=T)}}_{\Tr}\\
    	&\qquad\leq\max\{\varepsilon',e^{1/4}(\cos\tfrac{\pi}{8})^{n/2}\}.
    \end{split}
    \end{align}
\end{proof} 

\section{Receiver-Independent Quantum Key Distribution}\label{sec:qkd}

In this section, we discuss our final application, introduced in \cref{sec:intro-qkd}. In \cref{sec:robust-leaky}, we prove a version of the leaky MoE property that is robust against errors, given as \cref{thm:robust-stronger-monogamy}, and discuss its expression as an entropic uncertainty relation, given as \cref{cor:robust-entropic-moe}. In \cref{sec:qkd-construction}, we present receiver-independent QKD and provide a construction, given as \cref{prot:qkd}. Finally, in \cref{sec:qkd-security}, we recall the QKD security definitions and prove security for our construction.

\subsection{Robust \gamename MoE property}\label{sec:robust-leaky}

We first need a robust version of the \gamename MoE property, analogous to the game with imperfect guessing in \cite{TFKW13}. To do so, we fix $U,U'\subseteq\Z_2^{n/2}$ to be neighbourhoods of $0$, and modify the \gamename MoE game winning condition by saying that Alice accepts if Bob's answer is in $t+U$ and Charlie's is in $t'+U'$. To warrant the name ``leaky'', we suppose that Charlie gets Bob's potentially erroneous guess of $t$ --- but never the actual value of $t$ chosen by Alice --- before making his guess. In the case of $U=U'=\{0\}$, this reduces to the original \gamename MoE game. We formalise this.

\begin{definition}
	 Let $A$ to be a set of subspaces of $\Z_2^n$ of dimension $n/2$, and $U,U'\subseteq\Z_2^{n/2}$ be neighbourhoods of $0$. A \emph{strategy} $\ttt{S}$ for the $(n,A,U,U')$-robust \gamename monogamy-of-entanglement game is simply a strategy for the $(n,A)$-\gamename MoE game. The \emph{winning probability} of $\ttt{S}$ is
	 \begin{align}
	 	\mfk{w}_{n,A,U,U'}(\ttt{S})=\expec_{a\in A}\expec_{t,t'\in\Z_2^{n/2}}\sum_{u\in U,u'\in U'}\Tr\squ*{(B^a_{t+u}\otimes C^{a,t+u}_{t'+u'})\Phi(\ketbra{a_{t,t'}})}.
	 \end{align}
	 The \emph{optimal winning probability} of $\ttt{G}$ is $\mfk{w}^\ast(n,A,U,U')=\sup_{\ttt{S}}\mfk{w}_{n,A,U,U'}(\ttt{S})$.
\end{definition}

We show an upper bound in a context relevant to QKD, where the errors correspond to independent bit flip errors. We define some standard objects: the Hamming norm of $x\in\Z_2^n$ is the number of non-zero terms, written $|x|$, and the corresponding metric, the Hamming distance, is written $d(x,y)=|x+y|$; the unit ball in of radius $m$ in $\Z_2^n$ is $B(n,m):=\set*{x\in \Z_2^n}{|x|\leq m}$; and the binary entropy function is $h:[0,1]\rightarrow\R$ defined as $h(x)=-x\lg x-(1-x)\lg(1-x)$. We have the very useful bound on the volume of this ball: if $m\leq n/2$, $|B(n,m)|\leq 2^{nh(m/n)}$.

\begin{theorem}\label{thm:robust-stronger-monogamy}
	Let $A$ be the set of register subspaces of $\Z_2^n$ of dimension $n/2$. Then, for $m,m'\leq n/4$
	\begin{align}
		\mfk{w}^\ast(n,A,B(n/2,m),B(n/2,m'))\leq\sqrt{e}2^{\frac{n}{2}h(\frac{2m}{n})+\frac{n}{4}h(\frac{2m'}{n})}\parens*{\cos\tfrac{\pi}{8}}^n.
	\end{align}
\end{theorem}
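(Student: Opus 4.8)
The plan is to follow the proof of \cref{thm:stronger-monogamy} verbatim at the outer level, replacing only the coset-overlap estimate \cref{lem:coset-overlap} by a robust version that carries the ball-size factors. For a strategy $\ttt{S}$ (projective without loss of generality by Naimark), I would first record the whole winning condition as a single operator. Writing $s=t+u$ for Bob's actual guess and $s'=t'+u'$ for Charlie's, the Choi--Jamio\l{}kowski argument of \cref{thm:stronger-monogamy} gives $\mfk{w}_{n,A,U,U'}(\ttt{S})\le\norm{\expec_a P^a}$ with
\[ P^a=\sum_{s,s'}Q^a_{s,s'}\otimes B^a_s\otimes C^{a,s}_{s'},\qquad Q^a_{s,s'}=\sum_{u\in U,\,u'\in U'}\ketbra{a_{s+u,s'+u'}}. \]
The first thing to verify is that $P^a$ is still a projector: although the $Q^a_{s,s'}$ overlap for nearby $(s,s')$, orthogonality of $B^a$ and of each $C^{a,s}$ (as PVMs) kills all cross terms, so $(P^a)^2=P^a$. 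Since $A$ is the set of register subspaces, \cref{lem:sum-bound} applies directly with no averaging over bases, reducing the bound to $\expec_s\max_\gamma\norm{P^{\spn\gamma}P^{\spn\pi_s(\gamma)}}$ for the same mutually orthogonal permutation family. The combinatorial sum is then inherited unchanged, so everything comes down to a robust replacement for $\norm{P^aP^b}$.

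The heart of the argument is to prove $\norm{P^aP^b}\le|U|\sqrt{|U'|}\sqrt{|a\cap b|/2^{n/2}}$. I would first pass to projector upper bounds that decouple Bob from Charlie. Dropping $B^a_s\le\Id_B$ gives $P^a\le X:=\sum_{s,s'}Q^a_{s,s'}\otimes\Id_B\otimes C^{a,s}_{s'}$, still a projector. Dropping $C^{b,r}_{r'}\le\Id_C$ and summing out $r'$ turns $\sum_{r'}Q^b_{r,r'}$ into exactly $|U'|$ copies of the union-of-cosets projector $\Pi^b_r:=\sum_{w\in U}\Pi_{b+(r+w)_b}$, so that $P^b\le|U'|\,Y'$ with $Y'=\sum_r\Pi^b_r\otimes B^b_r\otimes\Id_C$ again a projector. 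Range containment then yields $P^b=Y'P^b$ and $P^a=P^aX$, whence $\norm{P^aP^b}\le\norm{XY'}$ and the constant $|U'|$ is harmlessly discarded. Orthogonality of the $B^b_r$ collapses $\norm{XY'}$ to $\max_r\norm{N}$ with $N=\sum_{s,s'}Q^a_{s,s'}\Pi^b_r\otimes C^{a,s}_{s'}$, and the $C^\ast$ identity reduces this to $\max_r\norm{N^\dagger N}^{1/2}$, where $N^\dagger N=\sum_{s,s'}\Pi^b_r Q^a_{s,s'}\Pi^b_r\otimes C^{a,s}_{s'}$.

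The main obstacle, and the step that must be done carefully to land exactly on $|U|\sqrt{|U'|}$, is bounding $\norm{N^\dagger N}$ despite the coupling between Bob's guess and Charlie's measurement setting. I would expand $Q^a_{s,s'}$ and reindex to the true labels $p=s+u$, $p'=s'+u'$, so that $N^\dagger N=\sum_{p,p'}\Pi^b_r\ketbra{a_{p,p'}}\Pi^b_r\otimes\sum_{u\in U,\,u'\in U'}C^{a,p+u}_{p'+u'}$. Grouping by $p$ is now clean: distinct $p$ label disjoint cosets $a+p_a$, so the $V$-supports are orthogonal and $\norm{N^\dagger N}=\max_p\norm{\tilde N_p}$. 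Within a block I would rewrite the Charlie operators in terms of the setting $s\in p+U$ and the outcome $c\in p'+U'$, apply a triangle inequality over the $|U|$ settings, use that $\{C^{a,s}_c\}_c$ is a PVM (orthogonal in $c$) for each fixed $s$, and bound the residual $\sum_{p'\in c+U'}\Pi^b_r\ketbra{a_{p,p'}}\Pi^b_r$ by $|U'|\max_{p'}\braket{a_{p,p'}}{\Pi^b_r}{a_{p,p'}}$. Finally the union-of-cosets estimate $\braket{a_{p,p'}}{\Pi^b_r}{a_{p,p'}}\le|U|\,|a\cap b|/2^{n/2}$ (one factor $|a\cap b|/2^{n/2}$ per coset in $\Pi^b_r$) assembles these into $\norm{\tilde N_p}\le|U|^2|U'|\,|a\cap b|/2^{n/2}$, i.e.\ $\norm{P^aP^b}\le|U|\sqrt{|U'|}\sqrt{|a\cap b|/2^{n/2}}$. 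Tracking the origin of the factors is instructive: both Bob-balls (one from $\Pi^b_r$, one from the $|U|$ settings) survive in full, while only one Charlie-ball survives, which explains the asymmetric exponents.

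Plugging this estimate into the combinatorial sum of \cref{thm:stronger-monogamy} simply multiplies its bound by the constant $|U|\sqrt{|U'|}$, giving $\mfk{w}^\ast(n,A,U,U')\le|U|\sqrt{|U'|}\,\sqrt{e}\,(\cos\tfrac{\pi}{8})^n$. Since $m,m'\le n/4$, the volume bound $|B(n/2,m)|\le 2^{(n/2)h(2m/n)}$ applies to both balls, and substituting $|U|\le 2^{(n/2)h(2m/n)}$ and $\sqrt{|U'|}\le 2^{(n/4)h(2m'/n)}$ yields the claimed inequality. I expect the only genuine difficulty to be the bookkeeping in the third paragraph, namely keeping the coupled Charlie measurements under control while ensuring neither ball factor is over- nor under-counted; the reduction to $\norm{P^aP^b}$ and the combinatorial sum carry over essentially unchanged from \cref{thm:stronger-monogamy}.
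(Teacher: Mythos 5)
Your outer architecture is the paper's: the Choi--Jamio\l{}kowski reduction to $\norm{\expec_aP^a}$, the same orthogonal-permutation family and combinatorial sum (correctly noting that no basis-averaging is needed for register subspaces), a robust overlap lemma carrying the factor $|U|\sqrt{|U'|}$, and the closing volume bound. Your factor bookkeeping also agrees with the paper's \cref{lem:robust-coset-overlap} (which keeps the finer count $\abs*{(a+b+t)\cap U_b}$ and only relaxes it to $|B(n/2,m)|$ at the very end, via $B(k,m)\le B(n/2,m)$, so your uniform per-pair bound yields the identical final inequality). However, two of your operator identities are false, and they sit at the load-bearing step. First, $X=\sum_{s,s'}Q^a_{s,s'}\otimes\Id_B\otimes C^{a,s}_{s'}$ is \emph{not} a projector: for distinct pairs with $s_1+s_2\in U+U$ and $s_1'+s_2'\in U'+U'$ the products $Q^a_{s_1,s_1'}Q^a_{s_2,s_2'}$ do not vanish (the $Q$'s share coset states), and Charlie's PVMs for different settings $s_1\neq s_2$ are not mutually orthogonal, so the cross terms survive and $X^2\neq X$. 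Consequently $P^a=P^aX$ is unjustified: $P^a\le X$ only gives $\ker X\subseteq\ker P^a$, not that $X$ acts as the identity on the range of $P^a$ (contrast with your $P^b=Y'P^b$, which is fine because $Y'$ \emph{is} a projector). Second, for the same reason your formula $N^\dagger N=\sum_{s,s'}\Pi^b_rQ^a_{s,s'}\Pi^b_r\otimes C^{a,s}_{s'}$ is wrong as an identity --- the cross terms between different $(s,s')$ do not cancel, and this incompatibility of Charlie's measurements across nearby leaked guesses is precisely the difficulty the robust game introduces. Taken literally, your chain $\norm{P^aP^b}\le\norm{XY'}=\max_r\norm{N}=\max_r\norm{N^\dagger N}^{1/2}$ is broken at both ends.

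The repair is the conjugation order the paper uses, and it changes none of your bookkeeping. Since $P^a,P^b$ are projectors (your Naimark argument for this is correct) and $P^b\le|U'|Y'$ with $Y'$ a genuine projector, range containment gives $P^b=Y'P^b=P^bY'$, and hence
\begin{align}
	\norm{P^aP^b}^2=\norm{P^bP^aP^b}\leq\norm{P^bXP^b}=\norm{P^b\parens*{Y'XY'}P^b}\leq\norm{Y'XY'}.
\end{align}
Now $Y'XY'$ collapses over the orthogonal $B^b_r$ into blocks that are \emph{exactly} the operator you mislabelled $N^\dagger N$, namely $\sum_{p,p'}\Pi^b_r\ketbra{a_{p,p'}}\Pi^b_r\otimes\sum_{u\in U,u'\in U'}C^{a,p+u}_{p'+u'}$ --- so the quantity you proceed to bound is the right one, reached by the wrong route, and the square root is supplied by the display above rather than by a spurious $C^\ast$-identity. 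From that point on, your block analysis is correct and is in substance the paper's: disjoint cosets $a+p_a$ give orthogonal supports over $p$; the triangle inequality over the $|U|$ settings, PVM orthogonality in the outcome $c$, the triangle inequality over $p'\in c+U'$, and the union-of-cosets estimate $\braket{a_{p,p'}}{\Pi^b_r}{a_{p,p'}}\le|U|\,|a\cap b|/2^{n/2}$ yield $\norm{Y'XY'}\le|U|^2|U'|\,|a\cap b|/2^{n/2}$, hence $\norm{P^aP^b}\le|U|\sqrt{|U'|}\sqrt{|a\cap b|/2^{n/2}}$, and the assembly into the combinatorial sum gives the theorem. The paper's \cref{lem:robust-coset-overlap} avoids both of your slips by working with $\norm*{\sqrt{P^a}\sqrt{P^b}}^2=\norm*{\sqrt{P^b}P^a\sqrt{P^b}}$ and inserting the one-sided bounds $P^a\le X$ and $\sqrt{P^b}\le Y'$ inside the conjugation, where operator monotonicity is safe; you should adopt that ordering.
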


Note that this bound is not particularly tight. We try to stick with the tightest possible expression throughout the proof before passing to this simple closed-form expression at the very end.

The proof proceeds similarly to \cref{thm:stronger-monogamy}. First, we need a robust generalisation of \cref{lem:coset-overlap}.

\begin{lemma}\label{lem:robust-coset-overlap}
	Let $a,b\subseteq\Z_2^n$ be subspaces of dimension $n/2$, and $U,U'\subseteq\Z_2^{n/2}$ be neighbourhoods of $0$. Then,
    \begin{align}
        \norm*{\sqrt{P^a}\sqrt{P^b}}\leq\max_{t\in\Z_2^n}\parens[\Big]{\abs*{(a+b+t)\cap U_b}|U||U'|\frac{|a\cap b|}{|a|}}^{1/2},
    \end{align}
    where $P^a=\sum_{t,t'\Z_2^{n/2}}\sum_{u\in U,u'\in U'}\ketbra{a_{t,t'}}\otimes B^a_{t+u}\otimes C^{a,t+u}_{t'+u'}$ and $U_b=\set{x_b}{x\in U}\subseteq\Z_2^n$ for $x_b$ as defined in \cref{sec:coset}.
\end{lemma}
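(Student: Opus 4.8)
The plan is to mirror the proof of \cref{lem:coset-overlap}, tracking the extra combinatorial factors introduced by the error neighbourhoods $U,U'$ and by the fact that Charlie now receives Bob's (possibly erroneous) guess $t+u$ rather than $t$. As in the proof of \cref{thm:stronger-monogamy}, I would first pass to projective measurements by Naimark's theorem. With $B^a,C^{a,t}$ projective, one checks directly that $P^a$ and $P^b$ are \emph{themselves} projectors: in $(P^b)^2$ the cross terms survive only when $B^b_{w+u}B^b_{w+v}=\delta_{u,v}B^b_{w+u}$ forces the two Charlie operators to share a superscript, whence $C^{b,w+u}_{w'+u'}C^{b,w+u}_{w'+v'}=\delta_{u',v'}C^{b,w+u}_{w'+u'}$, and the sum collapses back to $P^b$. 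Hence $\sqrt{P^a}=P^a$, $\sqrt{P^b}=P^b$, and it suffices to bound $\norm{P^aP^b}$.

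Next I would produce upper bounds by discarding the ``wrong'' player's operator from each projector. Discarding Bob's operator term-by-term gives $P^a\leq R^a:=\sum_{t,t',u,u'}\ketbra{a_{t,t'}}\otimes\Id_B\otimes C^{a,t+u}_{t'+u'}$; crucially the sum over Bob's error $u$ cannot be collapsed, because $u$ also indexes Charlie's superscript, so $R^a$ is \emph{not} a projector. Discarding Charlie's operator, by contrast, lets me absorb the $u'$-sum into $\sum_{u'}C^{b,w+u}_{w'+u'}\leq\Id_C$, yielding the genuine projector $P^b\leq\tilde R^b:=\sum_{w,u}\Pi_{b+w_b}\otimes B^b_{w+u}\otimes\Id_C=\sum_s\Lambda^b_s\otimes B^b_s\otimes\Id_C$, where $\Lambda^b_s=\sum_{u\in U}\Pi_{b+(s+u)_b}$ is a projector onto $|U|$ distinct $b$-cosets (distinct since $x\mapsto x_b$ is injective modulo $b$). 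Since $P^a,P^b$ are projectors and $P^a\leq R^a$, $P^b\leq\tilde R^b$, one gets $\norm{P^aP^b}^2\leq\norm{\tilde R^bR^a\tilde R^b}$, and orthogonality of the $B^b_s$ block-decomposes this as $\max_s\norm{Y_s}$ with $Y_s=\sum_{t,t',u,u'}\Lambda^b_s\ketbra{a_{t,t'}}\Lambda^b_s\otimes C^{a,t+u}_{t'+u'}$ on $VC$.

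The heart of the argument is bounding $\norm{Y_s}$, and this is where the main obstacle lies: the coupling between Bob's error $u$ and Charlie's superscript means that terms with the same $t$ but different $u$ share a $V$-coset yet carry unrelated Charlie PVMs, so they cannot be separated. I would handle this by a triangle inequality over $u$, $\norm{Y_s}\leq|U|\max_u\norm{Y_{s,u}}$, which is lossy but produces the factor $|U|$ and sidesteps the cleaner projector bound available in \cref{lem:coset-overlap}. For fixed $u$, the operator $Y_{s,u}$ does split into mutually orthogonal blocks indexed by $(t,\tau')$ with $\tau'=t'+u'$: distinct $t$ give orthogonal $a$-cosets on $V$, while equal $t$ and distinct $\tau'$ give orthogonal outcomes of the single PVM $C^{a,t+u}$. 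Within a block the $|U'|$ values of $u'$ giving a fixed $\tau'$ pile up on the same Charlie operator, leaving $\Lambda^b_s\Theta_{t,\tau'}\Lambda^b_s$ with $\Theta_{t,\tau'}=\sum_{u'}\ketbra{a_{t,\tau'+u'}}$ a rank-$|U'|$ projector; bounding $\norm{C}\leq1$ and then $\norm{\Lambda^b_s\Theta_{t,\tau'}\Lambda^b_s}\leq\Tr(\Theta_{t,\tau'}\Lambda^b_s)$ (a trace rather than an operator-norm estimate) is exactly what produces the remaining factor $|U'|$.

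Finally I would evaluate the trace. Because $\braket{a_{t,t''}}{\Pi_{b+(s+u)_b}}{a_{t,t''}}=|(a+t_a)\cap(b+(s+u)_b)|/|a|$ is independent of $t''$ and equals either $0$ or $|a\cap b|/|a|$, summing over the $|U'|$ vectors and over $u\in U$ gives $\Tr(\Theta_{t,\tau'}\Lambda^b_s)=|U'|\tfrac{|a\cap b|}{|a|}\,\#\{u\in U:(a+t_a)\cap(b+(s+u)_b)\neq\emptyset\}$. The intersection is nonempty exactly when $(s+u)_b\in a+b+t_a$, i.e.\ when $u_b\in U_b$ lies in the coset $a+b+t_a+s_b$, so the count is $|U_b\cap(a+b+t_a+s_b)|\leq\max_{t\in\Z_2^n}|(a+b+t)\cap U_b|$. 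Assembling the factors $|U|$, $|U'|$, the overlap count, and $|a\cap b|/|a|$ yields the claimed bound. Getting the orthogonality bookkeeping right so that precisely $|U|$ (from the $u$-coupling) and $|U'|$ (from the $u'$-pile-up) appear, and no more, is the crux; the specialisation $U=U'=\{0\}$ recovers \cref{lem:coset-overlap} as a sanity check.
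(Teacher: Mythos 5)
Your proposal is correct and follows essentially the same route as the paper's proof: drop Bob's operator from $P^a$ (keeping the $u$-coupling to Charlie's superscript), collapse the $u'$-sum to upper-bound $P^b$ by the projector $\sum_s\Pi_{\bigcup_{v\in U}(b+(s+v)_b)}\otimes B^b_s\otimes\Id$, reduce to a single $s$-block via orthogonality of the $B^b_s$, pay a factor $|U|$ by a triangle inequality over Bob's error, split the fixed-$u$ operator into orthogonal $(t,\tau')$-blocks, and count the nonempty coset intersections as $|U_b\cap(a+b+t_a+s_b)|$, each of size $|a\cap b|$. The only deviations are cosmetic: you note that $P^a,P^b$ are themselves projectors after Naimark dilation so $\sqrt{P}=P$ (the paper instead keeps $\sqrt{P^b}$ and invokes operator monotonicity of the square root against the projector upper bound), and you bound the block norm by $\Tr(\Theta_{t,\tau'}\Lambda^b_s)$ where the paper uses a triangle inequality over $u'$ together with rank-one norms $\braket{a_{t,t'+u'}}{\Lambda^b_s}{a_{t,t'+u'}}$ --- numerically the identical quantity.
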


\begin{proof}
	Since $\sum_{v'\in U'}C^{b,s+v}_{s'+v'}\leq\Id$ for any $s,s',v\in\Z_2^{n/2}$, we get the bound
	\begin{align}
	\begin{split}
		P^b&\leq\sum_{s,s'\in\Z_2^{n/2};v\in U}\ketbra{b_{s,s'}}\otimes B^b_{s+v}\otimes\Id=\sum_{s\in\Z_2^{n/2};v\in U}\Pi_{b+s_b}\otimes B^b_{s+v}\otimes\Id\\
		&=\sum_{s\in\Z_2^{n/2}}\Pi_{\bigcup_{v\in U}(b+(s+v)_b)}\otimes B^b_{s}\otimes\Id.
	\end{split}
	\end{align}
	Since the right hand side is a projector, we have by monotonicity of the square root that it is also a bound on $\sqrt{P^b}$. We also bound
	\begin{align}
		P^a\leq\sum_{t,t',u,u'}\ketbra{a_{t,t'}}\otimes\Id\otimes C^{a,t+u}_{t'+u'}=\sum_{t,t',u,u'}\ketbra{a_{t+u,t'+u'}}\otimes\Id\otimes C^{a,t}_{t'}.
	\end{align}
	Using these,
	\begin{align}
	\begin{split}
		&\norm*{\sqrt{P^a}\sqrt{P^b}}=\norm{\sqrt{P^b}P^a\sqrt{P^b}}^{1/2}\\
		&\leq\norm[\Big]{\sum_{\substack{t,t',s\in\Z_2^{n/2}\\u\in U,u'\in U'}}\Pi_{\bigcup_{v\in U}(b+(s+v)_b)}\ketbra{a_{t+u,t'+u'}}\Pi_{\bigcup_{v\in U}(b+(s+v)_b)}\otimes B^b_{s}\otimes C^{a,t}_{t'}}^{1/2}\\
		&\leq\max_{s\in\Z_2^{n/2}}\norm[\Big]{\sum_{t,t',u,u'}\Pi_{\bigcup_{v\in U}(b+(s+v)_b)}\ketbra{a_{t+u,t'+u'}}\Pi_{\bigcup_{v\in U}(b+(s+v)_b)}\otimes C^{a,t}_{t'}}^{1/2}.
	\end{split}
	\end{align}
	Next, using the triangle inequality,
	\begin{align}
	\begin{split}
		&\norm*{\sqrt{P^a}\sqrt{P^b}}\leq\max_{s}\parens[\Big]{\sum_u\norm[\Big]{\sum_{t,t',u'}\Pi_{\bigcup_v(b+(s+v)_b)}\ketbra{a_{t+u,t'+u'}}\Pi_{\bigcup_v(b+(s+v)_b)}\otimes C^{a,t}_{t'}}}^{1/2}.
	\end{split}
	\end{align}
	Now, as the terms $\sum_{u'}\Pi_{\bigcup_v(b+(s+v)_b)}\ketbra{a_{t+u,t'+u'}}\Pi_{\bigcup_v(b+(s+v)_b)}\otimes C^{a,t}_{t'}$ of the sum are Hermitian operators with orthogonal supports, we can bound
	\begin{align}
	\begin{split}
		\norm*{\sqrt{P^a}\sqrt{P^b}}&\leq\max_{s}\parens[\Big]{\sum_u\max_{t,t'}\norm[\Big]{\sum_{u'}\Pi_{\bigcup_v(b+(s+v)_b)}\ketbra{a_{t+u,t'+u'}}\Pi_{\bigcup_v(b+(s+v)_b)}\otimes C^{a,t}_{t'}}}^{1/2}\\
		&\leq\max_{s}\parens[\Big]{|U|\max_{t,t'}\norm[\Big]{\sum_{u'}\Pi_{\bigcup_v(b+(s+v)_b)}\ketbra{a_{t,t'+u'}}\Pi_{\bigcup_v(b+(s+v)_b)}}}^{1/2}.
	\end{split}
	\end{align}
	For each of these terms,
	\begin{align}
	\begin{split}
		&\norm[\Big]{\sum_{u'}\Pi_{\bigcup_v(b+(s+v)_b)}\ketbra{a_{t,t'+u'}}\Pi_{\bigcup_v(b+(s+v)_b)}}\\
		&\leq\sum_{u'}\norm[\Big]{\Pi_{\bigcup_v(b+(s+v)_b)}\ketbra{a_{t,t'+u'}}\Pi_{\bigcup_v(b+(s+v)_b)}}\\
		&=\sum_{u'}\braket{a_{t,t'+u'}}{\Pi_{\bigcup_v(b+(s+v)_b)}}{a_{t,t'+u'}}=|U'|\frac{\abs*{(a+t_a)\cap\bigcup_v(b+(s+v)_b)}}{|a|}.
	\end{split}
	\end{align}
	The cardinality of the intersection may be written as
	\begin{align}
	\begin{split}
		\abs[\Big]{(a+t_a)\cap\bigcup_v(b+(s+v)_b)}&=|a\cap b|\abs*{\set{v\in U}{(a+t_a)\cap(b+(s+v)_b)\neq\varnothing}}\\
		&=|a\cap b|\abs*{\set{v\in U}{t_a+s_b+v_b\in a+b}}\\
		&=|a\cap b|\abs*{(a+b+t_a+s_b)\cap U_b}.
	\end{split}
	\end{align}
	This gives the wanted bound
	\begin{align}
	\begin{split}
		\norm*{\sqrt{P^a}\sqrt{P^b}}&\leq\max_{s}\parens[\Big]{|U|\max_{t,t'}|U'||a\cap b|\abs*{(a+b+t_a+s_b)\cap U_b}\frac{|a\cap b|}{|a|}}^{1/2}\\
		&\leq\max_{t\in\Z_2^n}\parens[\Big]{|U||U'|\abs*{(a+b+t)\cap U_b}\frac{|a\cap b|}{|a|}}^{1/2}.
	\end{split}
	\end{align}
\end{proof}

Now, we proceed to the proof of the theorem.

\begin{proof}[Proof of \cref{thm:robust-stronger-monogamy}]
	Write $U=B(n/2,m)$ and $U'=B(n/2,m')$. First, we bound the winning probability by an operator norm
	\begin{align}
		\mfk{w}_{n,A,U,U'}(\ttt{S})\leq\norm[\Big]{\expec_{a\in A}P^a},
	\end{align}
	so that we can apply \cref{lem:sum-bound} using the same permutations $\pi_s:S\rightarrow S$ as in \cref{thm:stronger-monogamy}, giving
	\begin{align}
		\mfk{w}_{n,A,U,U'}(\ttt{S})\leq\expec_{s\in S}\max_{\gamma\in S}\norm*{\sqrt{P^{\spn\gamma}}\sqrt{P^{\spn\pi_s(\gamma)}}}.
	\end{align}
	We use \cref{lem:robust-coset-overlap} to write the overlap $\norm*{\sqrt{P^a}\sqrt{P^b}}$ in terms $\dim(a\cap b)$. Suppose $a=\spn\gamma$ and $b=\spn\eta$. Then $U_b=\set*{u\in\Z_2^n}{u_\eta=0,|u|\leq m}$. Thus, as $a+b=\spn(\eta\cup\gamma)$, for any $t\in\Z_2^n$,
	\begin{align}
		(a+b+t)\cap U_b=\set*{u\in\Z_2^n}{u_{\eta^c\cap\gamma^c}=t_{\eta^c\cap\gamma^c},u_\eta=0,|u|\leq m}.
	\end{align}
	To maximise the cardinality of this set, we take $t_{\eta^c\cap\gamma^c}=0$, so
	\begin{align}
	\begin{split}
		|(a+b+t)\cap U_b|&=\abs*{\set*{u\in\Z_2^n}{u_{\eta\cup\gamma^c}=0,|u|\leq m}}\\
		&=|B(|\eta^c\cap\gamma|,m)|=|B(n/2-\dim(a\cap b),m)|.
	\end{split}
	\end{align}
	This gives $\norm*{\sqrt{P^a}\sqrt{P^b}}\leq\sqrt{|B(n/2,m)||B(n/2,m')||B(n/2-\dim(a\cap b),m)|}2^{\dim(a\cap b)/2-n/4}$. Putting this into the bound on the winning probability,
	\begin{align}
		\mfk{w}_{n,A,U,U'}(\ttt{S})\leq\frac{1}{\binom{n}{n/2}}\sum_{k=0}^{n/2}\binom{n/2}{k}^2\sqrt{|B(n/2,m)||B(n/2,m')||B(k,m)|2^{-k}}.
	\end{align}
	We can bound $B(k,m)\leq B(n/2,m)$ and therefore
	\begin{align}
	\begin{split}
		\mfk{w}_{n,A,U,U'}(\ttt{S})&\leq\frac{|B(n/2,m)|\sqrt{|B(n/2,m')|}}{\binom{n}{n/2}}\sum_{k=0}^{n/2}\binom{n/2}{k}^22^{-k/2}\\
		&\leq\sqrt{e}|B(n/2,m)|\sqrt{|B(n/2,m')|}(\cos\tfrac{\pi}{8})^n.
	\end{split}
	\end{align}
	Using the bound on the volume of a ball $B(n/2,m)\leq 2^{\frac{n}{2}h(\frac{2m}{n})}$ gives the result.
\end{proof}

It will prove useful to express the winning probability of this game as a sequential min-entropy as well.

\begin{corollary}
	Fix a strategy for the $(n,A,U,U')$-robust leaky monogamy game with $U=B(n/2,m)$ and $U'=0$. Let the state
	\begin{align}
		\sigma_{AA'TT'BC}=\expec_{a,t,t',u,u'}[aatt']\otimes\Phi(\ketbra{a_{t+u,t'+u'}}).
	\end{align}
	Then, the sequential min-entropy
	\begin{align}
		H_{\min}(T|AB;T'|A'TC)_{\sigma}\geq\parens*{-\lg\cos\tfrac{\pi}{8}-\tfrac{1}{(2\ln 2)n}}n.
	\end{align}
\end{corollary}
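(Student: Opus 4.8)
The plan is to mirror the derivation of \cref{eq:sequential-leaky-moe} from the (non-robust) leaky MoE property, while carefully tracking the extra averaging over the error $u$ that appears in $\sigma$. First I would unfold the definition of the sequential min-entropy: for fixed measurements $M$ (used to guess $T$ from $AB$) and $N$ (used to guess $T'$ from $A'TC$), the relevant quantity is $\Tr\squ*{(\sigma_{\land(M(AB)=T)})_{\land(N(A'TC)=T')}}$, and $2^{-H_{\min}(T|AB;T'|A'TC)_\sigma}$ is the supremum of this over $M$ and $N$. Since $A$, $A'$ and $T$ are classical and are never reused, as in the discussion following \cref{prop:sequential-to-entropic-uncertainty} we may restrict to measurements that are block-diagonal on these registers, that is, to a family $\set{B^a}{a\in A}$ of POVMs on $B$ and a family $\set{C^{a,t}}{a\in A,t\in\Z_2^{n/2}}$ on $C$ --- exactly the POVMs of a strategy $\ttt{S}$ with the given splitting channel $\Phi$.

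The key computation evaluates this trace on $\sigma$. Since $B^a_t$ acts on $B$ and $C^{a,t}_{t'}$ on $C$, they commute, so the two nondestructive measurements collapse and
\[
	\Tr\squ*{(\sigma_{\land(M(AB)=T)})_{\land(N(A'TC)=T')}}=\expec_{a,t,t',u}\Tr\squ*{(B^a_t\otimes C^{a,t}_{t'})\Phi(\ketbra{a_{t+u,t'}})},
\]
where the expectation carries the factor $\tfrac{1}{|U|}\sum_{u\in U}$ inherited from $\sigma$. Substituting $s=t+u$ (so $t=s+u$ and $\ket{a_{t+u,t'}}=\ket{a_{s,t'}}$), for each fixed $u$ the sum over $t$ simply reindexes the sum over $s$, yielding
\[
	\tfrac{1}{|U|}\expec_{a,s,t'}\sum_{u\in U}\Tr\squ*{(B^a_{s+u}\otimes C^{a,s+u}_{t'})\Phi(\ketbra{a_{s,t'}})}=\tfrac{1}{|U|}\mfk{w}_{n,A,U,\{0\}}(\ttt{S}).
\]
Conditioning on the exact guess $M(AB)=T$ in $\sigma$ thus corresponds precisely to Bob guessing within $U$ of the physical coset representative $s$, which is the robust winning condition. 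Taking the supremum over $B^a,C^{a,t}$ (with $\Phi$ fixed) gives $2^{-H_{\min}(T|AB;T'|A'TC)_\sigma}\leq\tfrac{1}{|U|}\mfk{w}^\ast(n,A,U,\{0\})$.

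The final and most delicate step is to feed in the robust MoE bound so that the $\tfrac{1}{|U|}$ cancels cleanly. For this I would use not the simplified closed form of \cref{thm:robust-stronger-monogamy} but the tighter intermediate estimate established in its proof, namely $\mfk{w}^\ast(n,A,U,\{0\})\leq\sqrt{e}\,|U|(\cos\tfrac{\pi}{8})^n$ (taking $|U'|=1$, before bounding $|B(n/2,m)|$ by $2^{\frac{n}{2}h(2m/n)}$). Dividing by $|U|$ makes the ball-volume factor vanish entirely, so $2^{-H_{\min}(T|AB;T'|A'TC)_\sigma}\leq\sqrt{e}(\cos\tfrac{\pi}{8})^n$, and hence, using $\tfrac{1}{2}\lg e=\tfrac{1}{2\ln 2}$,
\[
	H_{\min}(T|AB;T'|A'TC)_\sigma\geq-\tfrac{1}{2}\lg e-n\lg\cos\tfrac{\pi}{8}=\parens*{-\lg\cos\tfrac{\pi}{8}-\tfrac{1}{(2\ln 2)n}}n.
\]
The main obstacle is exactly this bookkeeping: ensuring the $\tfrac{1}{|U|}$ normalization coming from the error-averaging in $\sigma$ exactly offsets the $|U|$ factor in the winning-probability bound. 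Plugging in the simplified $2^{\frac{n}{2}h(2m/n)}$ estimate instead would leave a residual $\lg|U|-\tfrac{n}{2}h(2m/n)\leq 0$ term and destroy the clean statement, so keeping the exact $|U|$ throughout is essential.
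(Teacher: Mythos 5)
Your proposal is correct and matches the paper's proof in all essentials: the paper likewise reindexes the robust winning probability as $|U||U'|$ times the exact-guessing probability on $\sigma$, invokes the intermediate bound $\mfk{w}_{n,A,U,U'}(\ttt{S})\leq\sqrt{e}\,|U|\sqrt{|U'|}(\cos\tfrac{\pi}{8})^n$ from the proof of \cref{thm:robust-stronger-monogamy} with $|U'|=1$ so that the $|U|$ factors cancel, and then applies the definition of the sequential min-entropy (with the same block-diagonal reduction for the unreused classical registers). Your closing observation --- that the simplified $2^{\frac{n}{2}h(2m/n)}$ form would leave a residual $\tfrac{n}{2}h(2m/n)-\lg|U|\geq 0$ loss and must be avoided --- is exactly the bookkeeping the paper's proof relies on.
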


Note that we pack the approximate guessing into the state, so we can derive a result on the sequential min-entropy of that state.

\begin{proof}
	 First, the winning probability may be rewritten as
	\begin{align}
	\begin{split}
		\mfk{w}_{n,A,U,U'}(\ttt{S})&=\expec_{a,t,t'}\Tr\squ[\Big]{(B^a_t\otimes C^{a,t}_{t'})\Phi\parens[\Big]{\sum_{u,u'}\ketbra{a_{t+u,t'+u'}}}}\\
		&=|U||U'|\expec_{a,t,t'}\Tr\squ*{(B^a_t\otimes C^{a,t}_{t'})\sigma^{a,t,t'}_{BC}},
	\end{split}
	\end{align}
	Thus, using the bound $\mfk{w}_{n,A,U,U'}(\ttt{S})\leq\sqrt{e}|U|\sqrt{|U'|}\parens*{\cos\tfrac{\pi}{8}}^n$ of \cref{thm:robust-stronger-monogamy} with $|U'|=1$, $\expec_{a,t,t'}\Tr\squ*{(B^a_t\otimes C^{a,t}_{t'})\sigma^{a,t,t'}_{BC}}\leq\sqrt{e}\parens*{\cos\tfrac{\pi}{8}}^n$. Since this takes a similar form to the winning probability of the original leaky MoE game, we can apply the definition of sequential min-entropy to get the wanted result
	\begin{align}
		H_{\min}(T|AB;T'|A'TC)_{\sigma}\geq\parens*{-\lg\cos\tfrac{\pi}{8}-\tfrac{1}{(2\ln 2)n}}n.
	\end{align}
\end{proof}

Finally, we get an entropic uncertainty relation.

\begin{corollary}[Robust leaky MoE entropic uncertainty relation]\label{cor:robust-entropic-moe}
	For any measurement made by Bob $M:T\rightarrow\mc{P}(AB)$ in the robust leaky MoE game, we have
	\begin{align}
		H_{\min}(T|M(AB))_{\sigma}+H_{\min}(T'|A'TC)_{\sigma_{|(M(AB)=T)}}\geq\parens*{-\lg\cos\tfrac{\pi}{8}-\tfrac{1}{(2\ln2)n}}n.
	\end{align}
\end{corollary}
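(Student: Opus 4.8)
The plan is to obtain this relation as an immediate consequence of the sequential min-entropy bound established just above together with \cref{prop:sequential-to-entropic-uncertainty}, exactly mirroring how \cref{cor:entropic-moe} was derived from \cref{thm:stronger-monogamy} in the non-robust case. First I would apply \cref{prop:sequential-to-entropic-uncertainty} to the sequential min-entropy $H_{\min}(T|AB;T'|A'TC)_\sigma$, taking $X=T$ with guessing register $AB$ (Bob's copy of the subspace together with his quantum system) and $Y=T'$ with guessing register $A'TC$ (Charlie's copy of the subspace, the leaked value of $T$, and his quantum system). Since the proposition treats the classical registers $A$ and $A'T$ as special cases of quantum side-information registers, no modification is needed, and it yields the identity
\[
H_{\min}(T|AB;T'|A'TC)_\sigma=\inf_{M:T\rightarrow\mc{P}(AB)}\parens*{H_{\min}(T|M(AB))_\sigma+H_{\min}(T'|A'TC)_{\sigma_{|(M(AB)=T)}}}.
\]

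Next I would invoke the bound from the preceding corollary, namely $H_{\min}(T|AB;T'|A'TC)_\sigma\geq\parens*{-\lg\cos\tfrac{\pi}{8}-\tfrac{1}{(2\ln2)n}}n$, in which the approximate guessing has already been absorbed into the modified state $\sigma=\expec X_T^u\rho X_T^u$. Because the right-hand side of the identity above is an infimum over all POVMs $M$, every individual summand is at least the infimum, hence at least this bound. Thus for any measurement $M:T\rightarrow\mc{P}(AB)$ that Bob makes,
\[
H_{\min}(T|M(AB))_\sigma+H_{\min}(T'|A'TC)_{\sigma_{|(M(AB)=T)}}\geq H_{\min}(T|AB;T'|A'TC)_\sigma\geq\parens*{-\lg\cos\tfrac{\pi}{8}-\tfrac{1}{(2\ln2)n}}n,
\]
which is precisely the claimed relation.

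There is essentially no hard step here: all the real content lives in \cref{thm:robust-stronger-monogamy} and its reformulation as the sequential min-entropy bound, so the main obstacle has already been cleared upstream. The only point warranting a remark — the same one noted after \cref{eq:sequential-leaky-moe} — is that the POVMs appearing in the definition of the sequential min-entropy need not preserve the classical registers $A$ and $A'T$; but since these registers are not reused, only their diagonal blocks contribute, so one may assume the measurements are diagonal on them, at no loss. Consequently the derivation is a direct application of \cref{prop:sequential-to-entropic-uncertainty} to the robust sequential bound, and I would present it as such.
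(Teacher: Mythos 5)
Your proposal is correct and takes essentially the same route as the paper: \cref{cor:robust-entropic-moe} is indeed obtained by applying \cref{prop:sequential-to-entropic-uncertainty} to the sequential min-entropy bound $H_{\min}(T|AB;T'|A'TC)_{\sigma}\geq\parens*{-\lg\cos\tfrac{\pi}{8}-\tfrac{1}{(2\ln2)n}}n$ of the immediately preceding corollary, with the approximate guessing already absorbed into the modified state $\sigma$, exactly mirroring the derivation of \cref{cor:entropic-moe} from \cref{thm:stronger-monogamy}. Your caveat about assuming the POVMs diagonal on the unreused classical registers is the same observation the paper makes after \cref{eq:sequential-leaky-moe}, so nothing is missing.
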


\subsection{Motivation and construction}\label{sec:qkd-construction}

We consider QKD in a model where neither Bob's classical nor his quantum devices are trusted, and may even have been provided by the eavesdropper, but his communication is trusted. This is a stronger model than the one-sided device independent model considered in \cite{TFKW13}. To illustrate this, we give first an attack against that scheme in this model, which allows the eavesdropper to gain the secret key.

First, we recall the one-sided device-independent QKD protocol given as Figure 1 of \cite{TFKW13}, with one small difference: they considered an entanglement-based model whereas we will work directly in the usual and more practical prepare-and-measure model, knowing that security in the former model implies security in the latter.

\begin{mdframed}
\begin{protocol}[one-sided device independent QKD of \cite{TFKW13}]\hphantom{}

\begin{description}
	\item[State preparation] Alice samples $x\in X=\Z_2^n$ and $\theta\in\Theta=\Z_2^n$ uniformly at random and sends the state $\ket{x^\theta}$ to Bob.
	
	\item[Measurement] Bob confirms receipt of the state, then Alice sends $\theta$ to Bob. He measures to get a string $y$.
	
	\item[Parameter estimation] Alice samples a random subset $T\subseteq\{1,\ldots,n\}$ of size $t$ and sends $T,x_T$ to Bob. If the Hamming distance $d(x_T,y_T)>\gamma n$, Bob aborts.
	
	\item[Error correction] Alice sends an error-correction syndrome $\syn(x_{T^c})$ and a random hash function $F\in\mc{F}$ to Bob. Bob corrects $y_{T^c}$ using the syndrome to get $\hat{x}_{T^c}$
	
	\item[Privacy amplification] Alice computes the output $k=F(x_{T^c})$ and Bob computes $\hat{k}=F(\hat{x}_{T^c})$.
\end{description}	
\end{protocol}
\end{mdframed}

In our model, the security of this QKD scheme can be broken, because we cannot trust Bob's classical device to honestly do parameter estimation. Bob would simply control the communication to and from the device, and receive the message $\hat{k}$ or an abort message once the protocol finishes. Consider the following attack involving a malicious device provided by an eavesdropper Eve. When Alice sends the state $\ket{x^\theta}$, Eve intercepts it and holds on to it, and sends Bob's device $\ket{0^n}$. Then, Eve intercepts every message Alice sends and is able to compute Bob's intended output $\hat{k}$, while Bob's device simply outputs a uniformly random string to him. Neither Alice nor Bob have learned that an attack has happened. In this way, Eve succeeds in completely breaking the security of the one-sided device-independent QKD protocol in the receiver-independent model.

To avoid this sort of attack, we need a QKD protocol where only Bob's communication is trusted but none of his devices are. We present the protocol. Let $e:\Z_2^{n/2}\times R\rightarrow\Z_2^\ell$ be a quantum-proof $(\kappa,\varepsilon)$-strong extractor and $C\subseteq\Z_2^{n/2}$ be a $(n/2,n/2-s,d)$-linear error correcting code with syndrome $\syn:\Z_2^{n/2}\rightarrow\Z_2^s$.

%\newpage %hardcoded pagebreak

\begin{mdframed}
\begin{protocol}[receiver-independent QKD]\label{prot:qkd}\hphantom{}

\begin{description}
    \item[State preparation] Alice chooses $a\in A$, and $t,t'\in\Z_2^{n/2}$ uniformly at random, then sends the state $\ket{a_{t,t'}}$ to Bob.

    \item[Parameter estimation] Alice sends $a$, and Bob replies with a measurement $\hat{t}$ of $t$. If the distance~${d(\hat{t},t)>\gamma\frac{n}{2}}$, Alice aborts the protocol.

    \item[Error correction] Bob makes a measurement $\hat{t}'$ of $t'$, and sends $\syn(\hat{t}')$ to Alice. She uses it to correct\tablefootnote{The correction here is not simply the natural one of the error-correcting code. Rather, Alice sets $\bar{t}'$ to be the string that corrects to the same point in $C$ as $t'$ but has syndrome $\syn(\hat{t}')$.} $t'$ and get $\bar{t}'$

    \item[Information reconciliation] Alice sends $j\subseteq\{1,\ldots,\tfrac{n}{2}\}$ of cardinality $\eta\tfrac{n}{2}$ to Bob, and he replies with $\hat{t}_{j}$. If $\hat{t}'_j\neq\bar{t}'_j$, Alice aborts.

    \item[Privacy amplification] Alice sends uniformly random $r\in R$ to Bob. Alice outputs ${k=e(\bar{t}',r)}$ and Bob outputs $\hat{k}=e(\hat{t}',r)$.
\end{description}
\end{protocol}
\end{mdframed}

We note that, unlike usual QKD, Alice has full control over whether to abort the protocol. This allows us to consider the case where the checks that Bob makes are untrusted.

Since Bob's classical computations are untrusted, the idea of correctness must also be altered. Neither Alice nor Bob can in general check that Bob's final key matches Alice's, since Bob's device can always, once all the checks have been passed, output a uniformly random string to Bob. As such, all Alice can assure herself of is that Bob's device has all the necessary information allowing it to compute the key. So, we only require correctness to hold for the device's computed key, though Bob may not actually receive it.

\subsection{QKD security}\label{sec:qkd-security}

First, following \cite{MR22arxiv,Ren05}, we give the security definiton of QKD.

\begin{definition}
    A \emph{receiver-independent QKD protocol} is an interaction between Alice, who is trusted, and Bob, who has trusted communication but untrusted quantum and classical devices, and which is eavesdropped by an eavesdropper Eve. The interaction produces the state $\rho_{FK\hat{K}E}$ where  $F=\Z_2$ holds a flag set to $1$ if the protocol accepts and $0$ otherwise, $K=\Z_2^\ell$ holds Alice's outputted key, $\hat{K}=\Z_2^\ell$ holds Bob's device's key, and $E$ is Eve's side information. The protocol is
    \begin{itemize}
        \item \emph{$\varepsilon_1$-correct} if $\Pr\squ*{K\neq\hat{K}\land F=1}\leq\varepsilon_1$.

        \item \emph{$\varepsilon_2$-secret} if $\norm{\rho_{KE\land(F=1)}-\mu_K\otimes\rho_{E\land(F=1)}}_{\Tr}\leq\varepsilon_2$.

        \item \emph{$(\Phi,\varepsilon_3)$-complete} if, when Eve acts as the channel $\Phi$ and Bob's device works as intended, $\Pr\squ*{F=0}\leq\varepsilon_3$.
    \end{itemize}
\end{definition}

A subtle but important difference between this and the usual QKD definition is in Bob's key~$\hat{k}$. Here, the key is produced by Bob's device, but as the device is untrusted, Alice cannot be sure that the key is actually given to Bob at the end of the protocol.

We now show that \cref{prot:qkd} satisfies these security properties under some conditions on the parameters.

\begin{proposition}\label{prop:qkd-correct}
	\cref{prot:qkd} is $\parens*{1-\tfrac{2d}{n}}^{\eta\frac{n}{2}}$-correct.
\end{proposition}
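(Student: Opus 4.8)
The plan is to reduce correctness to the information reconciliation check, mirroring the binding proof of \cref{prot:urbc}. Alice's and Bob's keys are $K=e(\bar t',r)$ and $\hat K=e(\hat t',r)$ with the common seed $r$ that Alice sends during privacy amplification, so since $e$ is a deterministic function, the event $K\neq\hat K$ forces $\bar t'\neq\hat t'$. Moreover, the protocol accepts ($F=1$) only if the information reconciliation step does not abort, i.e. only if $\bar t'_j=\hat t'_j$. Hence I would first bound
\begin{align}
	\Pr\squ*{K\neq\hat K\land F=1}\leq\Pr\squ*{\bar t'\neq\hat t'\land \bar t'_j=\hat t'_j}.
\end{align}

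The key structural fact is that $\bar t'$ and $\hat t'$ always share a syndrome. By Alice's construction in the error correction step (the footnote in \cref{prot:qkd}), $\bar t'$ is built precisely so that $\syn(\bar t')=\syn(\hat t')$, whatever syndrome value Bob's device has sent. Since $\syn$ is linear with kernel $C$, this gives $\bar t'+\hat t'\in C$, so if $\bar t'\neq\hat t'$ then $\bar t'+\hat t'$ is a nonzero codeword and therefore $d(\bar t',\hat t')\geq d$, the minimum distance of $C$.

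It then remains to bound the probability that the uniformly random test set $j$ of cardinality $\eta\frac n2$ fails to detect a discrepancy of Hamming weight at least $d$. Because $\bar t'$ and $\hat t'$ are both fixed before $j$ is revealed, the set $j$ is independent of the pair, and
\begin{align}
	\Pr\squ*{\bar t'_j=\hat t'_j\mid d(\bar t',\hat t')\geq d}\leq\frac{\binom{n/2-d}{\eta n/2}}{\binom{n/2}{\eta n/2}}\leq\parens*{1-\tfrac{2d}{n}}^{\eta\frac n2},
\end{align}
which is exactly the combinatorial estimate already carried out in the binding proof for \cref{prot:urbc}. Combining the three displays yields the claimed bound.

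I expect no genuine obstacle here; the only points that require care are that the syndrome equality $\syn(\bar t')=\syn(\hat t')$ holds by Alice's construction and not by any assumption on Bob's (untrusted) device, and that $j$ is sampled only after $\bar t'$ and $\hat t'$ are determined, so the independence needed for the tail estimate is legitimate. Everything else is a direct reuse of the error-detecting-code argument from the binding proof.
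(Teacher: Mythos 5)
Your proposal is correct and follows essentially the same route as the paper's proof: the same reduction of $\Pr[K\neq\hat K\land F=1]$ to $\Pr[\bar t'\neq\hat t'\land \bar t'_j=\hat t'_j]$, the same claim that $\bar t'\neq\hat t'$ forces $d(\bar t',\hat t')\geq d$, and the same combinatorial tail bound $\binom{n/2-d}{\eta n/2}/\binom{n/2}{\eta n/2}\leq(1-\tfrac{2d}{n})^{\eta n/2}$. The only (harmless, arguably cleaner) difference is that you justify the distance claim directly from linearity of the syndrome --- $\syn(\bar t')=\syn(\hat t')$ gives $\bar t'+\hat t'\in\ker\syn=C$, a nonzero codeword --- whereas the paper routes the same fact through the correction map $\corr$ and the identity $\bar t'+\corr(t')=\hat t'+\corr(\hat t')$.
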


Note that, in our protocol, in order for Bob to actually receive the key, Bob's classical device is only required to do one computation honestly: the final privacy amplification step.

\begin{proof}
	First, the event that $F=1$ is equivalent to $d(T,\hat{T})\leq\gamma\frac{n}{2}\land \bar{T}'_J=\hat{T}'_J$. Then,
	\begin{align}
		\Pr[K\neq\hat{K}\land F=1]\leq\Pr\squ{e(\bar{T}',R)\neq e(\hat{T}',R)\land\bar{T}'_J=\hat{T}'_J}\leq\Pr\squ{\bar{T}'\neq\hat{T}'\land\bar{T}'_J=\hat{T}'_J}
	\end{align}
	We claim that the event $\bar{T}'\neq\hat{T}'$ implies the event $d(\bar{T}',\hat{T}')\geq d$. To see this, (writing for $x\in\Z_2^{n/2}$, $\corr(x)\in C$ the correction from the error-correcting code, \emph{i.e.} the nearest point in $C$ to~$x$) first note that if $\bar{t}'\neq\hat{t}'$, then $\corr(t')=\corr(\bar{t}')\neq\corr(\hat{t}')$. Then, as the code distance is $d$, $d(\corr(t'),\corr(\hat{t}'))\geq d$. Since $\bar{t}'+\corr(t')=\hat{t}'+\corr(\hat{t}')$, $d(\bar{t}',\hat{t}')=d(\corr(t'),\corr(\hat{t}'))\geq d$.
	
	Thus, as $j$ is sampled uniformly at random among the substrings of length $\eta\tfrac{n}{2}$,
	\begin{align}
		\Pr\squ{d(\bar{T}',\hat{T}')\geq d\land\bar{T}'_J=\hat{T}'_J}\leq\frac{\binom{n/2-d}{\eta n/2}}{\binom{n/2}{\eta n/2}}\leq\parens*{1-\frac{2d}{n}}^{\eta\frac{n}{2}}.
	\end{align}
\end{proof}

\begin{theorem}
	Suppose $0\leq\delta\leq\gamma,\frac{2d}{n}$. Then, \cref{prot:qkd} is $(\Phi^{\otimes n},\parens{e^{-(\gamma-\delta)^2}}^n+\parens{e^{-(2d/n-\delta)^2}}^n)$-complete, where $\Phi^{\otimes n}:\mc{L}(V)\rightarrow\mc{L}(V)$ is any iid noise channel such that $\braket{0}{\Phi(\ketbra{1})}{0}\leq\delta$, $\braket{1}{\Phi(\ketbra{0})}{1}\leq\delta$, $\braket{+}{\Phi(\ketbra{-})}{+}\leq\delta$, and $\braket{-}{\Phi(\ketbra{+})}{-}\leq\delta$.
\end{theorem}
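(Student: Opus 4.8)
The plan is to bound the abort probability $\Pr[F=0]$ by a union bound over the only two steps at which \cref{prot:qkd} can abort: the parameter estimation step, which aborts exactly when $d(\hat t,t)>\gamma\tfrac n2$, and the information reconciliation step, which aborts exactly when $\hat t'_j\neq\bar t'_j$. Thus $\Pr[F=0]\le\Pr[d(\hat t,t)>\gamma\tfrac n2]+\Pr[\hat t'_j\neq\bar t'_j]$, and it suffices to bound the two summands by the first and second claimed terms, respectively. Both will come from a concentration inequality, and the hypotheses $\delta\le\gamma$ and $\delta\le\tfrac{2d}n$ are exactly what make the relevant deviations nonnegative.

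The first thing I would do is extract the product structure of the transmitted state. Since $A$ is the set of register subspaces, the recollection in \cref{sec:coset} gives $\ket{a_{t,t'}}=\ket{x^\theta}$ with $\theta=\iota(a)$: this is an $n$-qubit product state in which the $n/2$ positions with $\theta_i=0$ are computational-basis states carrying the bits of $t$ and the $n/2$ positions with $\theta_i=1$ are Hadamard-basis states carrying the bits of $t'$. Eve applies the iid channel $\Phi^{\otimes n}$ and, in the complete case, Bob's honest device measures each qubit in its own basis, so the $n$ outcomes are mutually independent. The four hypotheses on $\Phi$ state precisely that in either basis a single qubit is read incorrectly with probability at most $\delta$; hence each bit of $\hat t$ (resp.\ $\hat t'$) disagrees with the corresponding bit of $t$ (resp.\ $t'$) independently with probability at most $\delta$. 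Consequently $d(\hat t,t)$ and $d(t',\hat t')$ are each sums of $n/2$ independent Bernoulli variables of parameter at most $\delta$, each with mean at most $\delta\tfrac n2$.

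Given this, both bounds follow from Hoeffding's inequality over $n/2$ trials. For the first term, since $\delta\le\gamma$, the mean of $d(\hat t,t)$ is at most $\delta\tfrac n2$, so $\Pr[d(\hat t,t)>\gamma\tfrac n2]\le e^{-2(\gamma-\delta)^2(n/2)}=\parens{e^{-(\gamma-\delta)^2}}^n$. For the second term I would first reduce the abort event to a statement about error weight: an information reconciliation abort requires $\bar t'\neq\hat t'$, and — as in the code-distance argument of the correctness proof \cref{prop:qkd-correct} — Alice's correction reconstructs $\hat t'$ exactly whenever the number of bit flips on $t'$ stays below the code distance, giving the containment $\{\hat t'_j\neq\bar t'_j\}\subseteq\{d(t',\hat t')\ge d\}$. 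Since $\delta\le\tfrac{2d}n$ and $d=\tfrac{2d}n\cdot\tfrac n2$, Hoeffding then yields $\Pr[d(t',\hat t')\ge d]\le e^{-2(2d/n-\delta)^2(n/2)}=\parens{e^{-(2d/n-\delta)^2}}^n$. Adding the two estimates completes the proof.

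The routine concentration steps are immediate once independence is in hand, so the two points needing care are: (i) justifying that the honest measurement of $\Phi^{\otimes n}(\ketbra{a_{t,t'}})$ genuinely produces independent per-qubit flips at rate at most $\delta$, which rests on the register-subspace (Wiesner-state) decomposition together with the tensor structure of the channel; and (ii) the code-theoretic reduction for the second term, namely that information reconciliation cannot abort unless at least $d$ of the $t'$-positions were flipped. I expect (ii) to be the main obstacle, since it requires arguing that Alice's nonstandard correction — which exploits her exact knowledge of $t'$ and the received syndrome $\syn(\hat t')$ rather than performing blind syndrome decoding — recovers $\hat t'$ for every error pattern of weight below $d$; this is exactly where the relative-distance threshold $2d/n$ appearing in the statement has to be pinned down, and it is the step I would write out most carefully.
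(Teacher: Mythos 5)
Your proposal follows the paper's proof essentially step for step: the same union bound over the parameter-estimation and information-reconciliation abort events, the same reduction of $\Phi^{\otimes n}$ acting on the Wiesner-state form $\ket{a_{t,t'}}=\ket{x^\theta}$ of register-subspace coset states to independent per-bit flips at rate at most $\delta$, Hoeffding's inequality over $n/2$ trials yielding the identical exponents, and the same containment $\{\bar t'\neq\hat t'\}\subseteq\{d(t',\hat t')\geq d\}$ for the second term. The one step you flag as the main obstacle --- justifying that containment for Alice's syndrome-based correction --- is precisely the step the paper itself dispatches in a single line by appeal to the code-distance argument of \cref{prop:qkd-correct}, so your write-up is, if anything, more explicit about where the care is needed.
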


In particular, note that this gives an exponentially small abort rate if the error $\delta<\gamma,\frac{2d}{n}$. We make use of Hoeffding's inequality in the proof: for independent random variables $\Gamma_1,\ldots,\Gamma_n$ with support in $[0,1]$, their sum $\Gamma=\sum_i\Gamma_i$ has the property that, for any $x\geq 0$,
\begin{align}
	\Pr\squ*{\Gamma\geq\mbb{E}\Gamma+x}\geq\exp\parens*{-\frac{2x^2}{n}}
\end{align}

\begin{proof}
	First, recall that Alice sends states of the form $\ket{a_{t,t'}}=\ket{x^\theta}$, for $x=t_a+t'_{a^\perp}$ and $\theta=\iota(a)$, the indicator vector. Thus, the conditions on $\Phi$ are simply that there is an independent probability at most $\delta$ of a bit flip on any of the bits of the measured strings. Next, since $\hat{t}'=\bar{t}'$ implies that $\hat{t}'_j=\bar{t}'_j$, we have that
	\begin{align}
	\begin{split}
		\Pr[F=0]&=\Pr\squ*{d(\hat{T},T)>\tfrac{n}{2}\gamma\lor \bar{T}'_J=\hat{T}'_J}\\
		&\leq\Pr\squ*{d(\hat{T},T)>\tfrac{n}{2}\gamma}+ \Pr\squ*{\hat{T}'\neq\bar{T}'}.
	\end{split}
	\end{align}
	First, the probability of more than $\gamma\tfrac{n}{2}$ bit flips occurring on $\hat{t}$ is
	\begin{align}
		\Pr\squ*{d(\hat{T},T)>\tfrac{n}{2}\gamma}\leq\sum_{k=n\gamma/2+1}^{n/2}\binom{n/2}{k}\delta^{k}(1-\delta)^{n/2-k}=\Pr\squ*{\Gamma\geq\gamma\frac{n}{2}+1},
	\end{align}
	where the binomial random variable $\Gamma\sim\mathrm{Bin}(n/2,\delta)$. Consider the independent identically distributed Bernoulli random variables $\Gamma_1,\ldots,\Gamma_{n/2}\sim\mathrm{B}(\delta)$. Since we know $\Gamma=\sum_i\Gamma_i$ and $\mbb{E}\Gamma=\delta\tfrac{n}{2}$, Hoeffding's inequality provides
	\begin{align}
		\Pr\squ*{d(\hat{T},T)>\tfrac{n}{2}\gamma}\leq\exp\parens*{-\frac{4\parens*{(\gamma-\delta)n/2+1}^2}{n}}\leq\parens*{\exp(-(\gamma-\delta)^2)}^n.
	\end{align}
	To proceed similarly for the second term, first note that, in the same way as in \cref{prop:qkd-correct}, $\bar{t}'\neq\hat{t}'$ implies $d(\hat{t}',t')\geq d$. Thus, as before $\Pr\squ*{\hat{T}'\neq\bar{T}'}\leq\Pr\squ*{d(\hat{T}',T')\geq d}\leq\parens*{\exp(-(\tfrac{2d}{n}-\delta)^2)}^n$.
\end{proof}

\begin{lemma}\label{lem:swap-meet}
	Let $X=\Z_2^n$ and $A$ be registers, $\rho_{XA}$ be a cq state, and $U=B(n,m)\subseteq\Z_2^n$ be a ball. For $\sigma=\expec_{u\in U}X_X^u\rho_{XA}X_X^u$ where $X$ is the Pauli operator and any POVM $M:X\rightarrow\mc{P}(A)$, we have
	\begin{align}
		\rho_{M(A)A\land(d(M(A),X)\leq m)}=|U|\sigma_{M(A)A\land(M(A)=X)}.
	\end{align}
\end{lemma}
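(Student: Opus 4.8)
The plan is to unfold both sides of the claimed equality directly from the cq structure and match them term by term; the identity is ultimately a reindexing statement, so no inequalities or approximations enter. Write $\rho_{XA}=\sum_{x}p_x[x]\otimes\rho^x_A$. Bob's guess $M(A)$ is produced by the nondestructive measurement channel $\Phi_M$ acting on $A$ and leaving the classical register $X$ untouched, so the joint state before conditioning is $\rho_{M(A)XA}=\sum_{x}p_x[x]_X\otimes\sum_{s}[s]_{M(A)}\otimes\sqrt{M_s}\rho^x_A\sqrt{M_s}$. Both events $d(M(A),X)\leq m$ and $M(A)=X$ compare registers that are classical at the moment of conditioning ($X$ is classical throughout and $M(A)$ is the measurement outcome), so they are well defined and mutually compatible, and conditioning simply amounts to retaining the appropriate $(s,x)$ terms.

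For the left-hand side I would restrict to the terms with $d(s,x)\leq m$ and trace out $X$, using that $d(s,x)=|s+x|\leq m$ is exactly the statement $s+x\in U=B(n,m)$, obtaining $\rho_{M(A)A\land(d(M(A),X)\leq m)}=\sum_{x}\sum_{s:\,s+x\in U}p_x[s]_{M(A)}\otimes\sqrt{M_s}\rho^x_A\sqrt{M_s}$. For the right-hand side I would first compute the effect of the Pauli correction on the classical register: since $X^u\ket{x}=\ket{x+u}$ and $X^u$ is self-inverse, $X_X^u[x]X_X^u=[x+u]$, whence $\sigma_{XA}=\tfrac{1}{|U|}\sum_{u\in U}\sum_{x}p_x[x+u]_X\otimes\rho^x_A$. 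Applying $\Phi_M$, conditioning on $M(A)=X$ (which forces $s=x+u$), tracing out $X$, and multiplying by $|U|$ then yields $|U|\sigma_{M(A)A\land(M(A)=X)}=\sum_{u\in U}\sum_{x}p_x[x+u]_{M(A)}\otimes\sqrt{M_{x+u}}\rho^x_A\sqrt{M_{x+u}}$.

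The heart of the argument is the change of variables identifying the two sums. On the left the free indices are the guess $s$ and the true value $x$ constrained by $s+x\in U$; on the right, after the substitution $s:=x+u$, they are the guess $s$ and the shift $u\in U$ with the true value forced to be $s+u$. The map $(s,x)\mapsto(s,\,u:=s+x)$ is a bijection between the two index sets precisely because $U$ is a Hamming ball (hence symmetric, and $-u=u$ in $\Z_2^n$), and under it the summands coincide exactly. I expect the only genuine care to be in the bookkeeping: checking that the nondestructive measurement leaves $X$ intact so that both conditioning events are legitimate, that the Pauli shift permutes the register basis with no phases (true for $X^u$), and that the averaging factor $\expec_{u\in U}=\tfrac{1}{|U|}\sum_{u\in U}$ is exactly cancelled by the $|U|$ on the right. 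Conceptually, randomizing the true value over $U$ and then demanding exact agreement with the guess is the same, up to the normalization $|U|$, as demanding agreement within $U$, which is the content of the lemma.
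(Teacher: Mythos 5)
Your proposal is correct and matches the paper's own proof essentially step for step: both expand $\rho_{M(A)XA}$ via the (nondestructive) measurement channel, rewrite the event $d(M(A),X)\leq m$ as $s+x\in U$, expand $|U|\sigma_{M(A)A\land(M(A)=X)}$ using $X^u[x]X^u=[x+u]$, and identify the two sums by the reindexing $u:=s+x$ (valid since $u=-u$ over $\Z_2$), with the averaging $\expec_{u\in U}$ cancelling the factor $|U|$. Your extra bookkeeping remarks (compatibility of the conditioning events, phaselessness of the Pauli shift) are sound and only make explicit what the paper leaves implicit.
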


\begin{proof}
	First, writing $\rho_{XA}=\sum_{x\in X}[x]\otimes\rho^x_A$, we see that
	\begin{align}
		\rho_{XM(A)A}=\sum_{x,y\in X}[xy]\otimes\sqrt{M_y}\rho^x_A\sqrt{M_y},
	\end{align}
	and so
	\begin{align}
		\rho_{M(A)A\land(d(M(A),X)\leq m)}=\sum_{\substack{x,y\in X\\d(x,y)\leq m}}[y]\otimes\sqrt{M_y}\rho^x_A\sqrt{M_y}=\sum_{y\in X}[y]\otimes\sqrt{M_y}\sum_{u\in U}\rho^{y+u}_A\sqrt{M_y}.
	\end{align}
	On the other hand, $|U|\sigma_{XA}=\sum_{x\in X,u\in U}[x]\otimes\rho^{x+u}_A$, so
	\begin{align}
		|U|\sigma_{M(A)A\land(M(A)=X)}=\sum_{x\in X,u\in U}[x]\otimes\sqrt{M_x}\rho^{x+u}_A\sqrt{M_x},
	\end{align}
	which completes the proof.
\end{proof}

\begin{lemma}\label{lem:and-tropy}
	Let $X,Y,A$ be registers and $\rho_{XYA}$ be a ccq state. Then, for any $y_0\in Y$,
	\begin{align}
		H_{\min}(X|A)_{\rho_{\land(Y=y_0)}}\geq H_{\min}(X|AY)_{\rho}.
	\end{align}
\end{lemma}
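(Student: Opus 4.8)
The plan is to work directly from the dual definition of the conditional min-entropy, manufacturing a feasible witness for the left-hand side out of an optimal witness for the right-hand side. I would begin by writing the ccq state as $\rho_{XYA}=\sum_{x,y}[x]\otimes[y]\otimes\tilde\rho^{x,y}_A$ with each $\tilde\rho^{x,y}_A\in\mc{P}(A)$, so that the partial state is $\rho_{XA\land(Y=y_0)}=\sum_x[x]\otimes\tilde\rho^{x,y_0}_A$ (a subnormalised state, which is fine since the min-entropy extends to such states as noted in \cref{sec:states}). By definition, $2^{-H_{\min}(X|AY)_\rho}=\inf\set*{\Tr\sigma_{AY}}{\rho_{XYA}\leq\Id_X\otimes\sigma_{AY}}$, and analogously for the left-hand side with a witness $\sigma_A\in\mc{P}(A)$.

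The first real step is to argue that an optimal witness $\sigma_{AY}$ for $H_{\min}(X|AY)_\rho$ may be taken classical (block-diagonal) on $Y$. Since $\rho_{XYA}$ is invariant under the dephasing channel $\mc{P}$ on the classical register $Y$, applying $\mc{P}$ to both sides of $\rho_{XYA}\leq\Id_X\otimes\sigma_{AY}$ yields $\rho_{XYA}\leq\Id_X\otimes\mc{P}(\sigma_{AY})$ while preserving the trace; hence one may assume $\sigma_{AY}=\sum_y[y]\otimes\sigma^y_A$ with each $\sigma^y_A\in\mc{P}(A)$.

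With this form, the operator inequality decouples block-by-block in the classical registers $X$ and $Y$ into $\tilde\rho^{x,y}_A\leq\sigma^y_A$ for every $x,y$. Fixing $y=y_0$, the block $\sigma^{y_0}_A$ satisfies $\tilde\rho^{x,y_0}_A\leq\sigma^{y_0}_A$ for all $x$, which is exactly the constraint $\rho_{XA\land(Y=y_0)}\leq\Id_X\otimes\sigma^{y_0}_A$. Thus $\sigma^{y_0}_A$ is feasible for the left-hand side, so $2^{-H_{\min}(X|A)_{\rho_{\land(Y=y_0)}}}\leq\Tr\sigma^{y_0}_A$. Since every block $\sigma^y_A$ is positive semidefinite, $\Tr\sigma^{y_0}_A\leq\sum_y\Tr\sigma^y_A=\Tr\sigma_{AY}$, and taking the infimum over witnesses gives $2^{-H_{\min}(X|A)_{\rho_{\land(Y=y_0)}}}\leq 2^{-H_{\min}(X|AY)_\rho}$. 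Applying the decreasing function $-\lg$ produces the claimed inequality.

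The main obstacle I anticipate is purely the dephasing reduction together with the bookkeeping for the subnormalised conditional state: one must check that the dual (semidefinite) formulation of the min-entropy applies verbatim to $\rho_{\land(Y=y_0)}$ and that the block decomposition of the constraint is legitimate because $X$ and $Y$ are classical. Everything past that point is a direct manipulation of positive-semidefinite inequalities and a trace comparison, so no further estimates are needed.
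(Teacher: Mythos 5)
Your proof is correct, but it takes a genuinely different route from the paper's. The paper argues operationally: it rewrites both sides as guessing probabilities, $2^{-H_{\min}(X|AY)_\rho}=\sup_{\{M^y\}}\sum_{x,y}\Tr\squ{M^y_x\rho^{x,y}_A}$, and lower-bounds the supremum by keeping only the $y=y_0$ term with a single POVM $M$, which is exactly $2^{-H_{\min}(X|A)_{\rho_{\land(Y=y_0)}}}$; this is a three-line argument but it silently invokes the guessing-probability characterisation of the min-entropy for states classical on $X$ (and its adaptive version when the side information contains the classical register $Y$), a standard duality fact not proved in the paper. You instead work directly from the dual (SDP) definition stated in the preliminaries: dephase an optimal witness $\sigma_{AY}$ on the classical $Y$ (legitimate, since the pinching is completely positive, trace-preserving, and fixes $\rho$), decompose the operator inequality block-by-block over the classical $X,Y$ basis into $\tilde\rho^{x,y}_A\leq\sigma^y_A$, and extract the $y_0$ block as a feasible witness with $\Tr\sigma^{y_0}_A\leq\Tr\sigma_{AY}$. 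Your route is more self-contained relative to the paper's stated definition and makes transparent why the conditioning event costs nothing, at the price of the dephasing and block-decomposition bookkeeping; the paper's route is shorter and matches the operational intuition (an adversary guessing from $AY$ can always simulate one guessing from $A$ on the event $Y=y_0$). One cosmetic point: you speak of ``an optimal witness'' $\sigma_{AY}$; either note that the infimum in the SDP is attained in finite dimension, or run the same argument with an $\varepsilon$-near-optimal witness and let $\varepsilon\to0$ --- nothing else changes.
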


\begin{proof}
	We interpret this in terms of the guessing probability. Writing $\rho_{XYA}=\sum_{x,y}[xy]\otimes\rho^{x,y}_A$, the probability of guessing $X$ given $AY$ is
	\begin{align}
	\begin{split}
		2^{-H_{\min}(X|AY)_{\rho}}&=\sup_{M^y:X\rightarrow\mc{P}(A)\text{ POVMs}}\sum_{x,y}\Tr\squ*{M^y_x\rho^{x,y}_A}\\
		&\geq\sup_{M:X\rightarrow\mc{P}(A)\text{ POVM}}\sum_{x}\Tr\squ*{M_x\rho^{x,y_0}_A}=2^{-H_{\min}(X|A)_{\rho_{\land(Y=y_0)}}},
	\end{split}
	\end{align}
	as $\rho_{XYA\land(Y=y_0)}=\sum_{x}[xy_0]\otimes\rho^{x,y_0}_A$.
\end{proof}

\begin{theorem}
	Suppose that $\kappa\leq\parens*{-\lg\cos\tfrac{\pi}{8}-\frac{2s}{n}-2\eta-\tfrac{1}{(2\ln 2)n}}\frac{n}{2}$. Then, the QKD protocol \cref{prot:qkd} is $\max\!\!\set{2^{\tfrac{n}{2}h(\gamma)}\varepsilon,2^{-\parens{-\lg\cos\tfrac{\pi}{8}-h(\gamma)-\frac{1}{(2\ln 2)n}}\frac{n}{2}}}$-secret.
\end{theorem}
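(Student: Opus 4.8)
The plan is to run the same two-case argument used for the commitment scheme, but now through the \emph{robust} leaky MoE entropic uncertainty relation (\cref{cor:robust-entropic-moe}) so that Bob's imperfect guess of $t$ is tolerated. First I would fix the eavesdropper's splitting channel $\Phi$ and record the state $\rho$ just after state distribution, with Alice holding $a,t,t'$ (and a copy of $a$ in $A'$), Bob's untrusted device holding $B$, and Eve holding $E$; as the protocol proceeds, everything Eve eavesdrops --- namely $a$, Bob's reply $\hat t$, the syndrome $\syn(\hat t')$, the test set $j$ with the bits $\hat t'_j$, and the seed $r$ --- is appended to Eve's register. The central device is \cref{lem:swap-meet}: writing $U=B(n/2,\gamma\tfrac n2)$ and $\sigma=\expec_{u\in U}X_T^u\rho X_T^u$, the partial state conditioned on Bob's \emph{approximate} guess $d(\hat t,t)\le\gamma\tfrac n2$ equals $|U|$ times the partial state of $\sigma$ conditioned on the \emph{exact} guess $\hat t=t$. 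This is exactly what lets me import the bound $H_{\min}(T|M(AB))_\sigma+H_{\min}(T'|A'TE)_{\sigma_{|(M(AB)=T)}}\ge(-\lg\cos\tfrac\pi8-\tfrac1{(2\ln2)n})n=:L$ of \cref{cor:robust-entropic-moe}, where $M$ is the measurement Bob's device uses to produce $\hat t$.

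Splitting $L$ at $L/2$, at least one of the two terms is $\ge L/2$. In the first case $H_{\min}(T|M(AB))_\sigma\ge L/2$, so the exact-guessing probability on $\sigma$ is at most $2^{-L/2}$; by \cref{lem:swap-meet} and the ball-volume bound $|B(n/2,\gamma n/2)|\le 2^{\frac n2 h(\gamma)}$, the acceptance probability satisfies $\Pr[F=1]\le\Pr[d(\hat T,T)\le\gamma\tfrac n2]=|U|\,2^{-L/2}\le 2^{\frac n2 h(\gamma)}2^{-L/2}$. Since the two subnormalised operators in the secrecy expression each have trace $\Pr[F=1]$, their trace-norm distance is at most $\Pr[F=1]$, and a short computation shows $2^{\frac n2 h(\gamma)}2^{-L/2}$ is precisely the second term of the stated maximum.

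In the second case $H_{\min}(T'|A'TE)_{\sigma_{|(M(AB)=T)}}\ge L/2$, and I must account both for the classical information Eve additionally learns and for the correction step that defines Alice's raw key $\bar t'=\corr(t')+\mathrm{err}(\syn(\hat t'))$. Using the standard fact that revealing a classical register of size $2^b$ costs at most $b$ in min-entropy (together with \cref{lem:and-tropy} to handle the conditioning on the accept events), the syndrome $\syn(\hat t')$, the reconciliation bits $\hat t'_j$, and the fact that $\bar t'$ is, given $\syn(\hat t')$, a function of $\corr(t')$ reduce the conditional min-entropy of the extractor input $\bar t'$ to at least $\kappa$ under the stated hypothesis. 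The quantum-proof strong extractor then applies to the subnormalised partial state $\sigma_{\land(M(AB)=T)}$ --- whose min-entropy dominates that of the normalised conditional state --- giving trace distance $\le\varepsilon$ with $r$ as the public seed and $\mu_K$ the uniform output. Multiplying back through by the factor $|U|\le 2^{\frac n2 h(\gamma)}$ supplied by \cref{lem:swap-meet} returns the statement to $\rho$ and yields the first term $2^{\frac n2 h(\gamma)}\varepsilon$; the subsequent reconciliation check and release of $r$ are classical channels under which the trace-norm distance cannot increase, so the bound persists to the event $F=1$. Taking the maximum over the two cases gives the claimed secrecy bound.

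The main obstacle I anticipate is the bookkeeping in the second case: correctly threading the correction relation $\bar t'=\corr(t')+\mathrm{err}(\syn(\hat t'))$ so that the leaked syndrome and reconciliation bits are charged against $H_{\min}(\bar T'|\cdot)$ consistently with the stated $\kappa$, and ensuring that the event $\{d(\hat t,t)\le\gamma n/2\}$ on which \cref{lem:swap-meet} is invoked is compatible (as a measurement) with both the later reconciliation check and the extractor, so that conditioning and the $|U|$-rescaling commute. Everything else --- the case split, the volume estimate, and the passage from an acceptance-probability bound to a trace-norm bound --- is routine and parallels the encryption and commitment proofs.
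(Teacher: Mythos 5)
Your proposal is correct and follows essentially the same route as the paper's proof: the same modified state $\sigma=\expec_{u\in U}X_T^u\rho X_T^u$ linked to $\rho$ through \cref{lem:swap-meet}, the same split of the robust entropic uncertainty relation (\cref{cor:robust-entropic-moe}) at half the bound, the same charging of the leaked syndrome and reconciliation bits (totalling $s+2\eta\tfrac{n}{2}$, handled with \cref{lem:and-tropy}) together with the observation that, given $\syn(\hat{T}')$, guessing $\bar{T}'$ is equivalent to guessing $T'$, and the same rescaling by $|U|\leq 2^{\frac{n}{2}h(\gamma)}$ to return to $\rho$ on the event $F=1$. The only cosmetic difference is that in the first case you convert the acceptance-probability bound directly into a trace-norm bound, whereas the paper carries a uniform $\varepsilon^\ast=\max\set{\varepsilon,2^{-L/2}}$ through the $\sigma$-level statement and rescales once at the end; the resulting bounds coincide.
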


Asymptotically, in order for the QKD protocol to produce a secure key, we require only
\begin{align}
	\parens*{-\lg\cos\tfrac{\pi}{8}-h(\gamma)-\tfrac{1}{(2\ln 2)n}}\tfrac{n}{2}>0,\parens*{-\lg\cos\tfrac{\pi}{8}-\tfrac{2s}{n}-2\eta-\tfrac{1}{(2\ln 2)n}}\tfrac{n}{2}>0,
\end{align}
as we can make $\varepsilon$ arbitrarily small by enlarging the key. These provide the asymptotic noise tolerance. First $\frac{1}{2\ln 2 n}\rightarrow 0$ and we can choose $\eta$ small enough to have $\eta\rightarrow0$ while preserving subexponential correctness (for example $\eta=1/\sqrt{n}$), so we don't need to worry about those terms. Also, the Shannon limit provides the minimum value $s=\frac{n}{2}h(\gamma)$. Therefore, the inequalities reduce to $-\lg\cos\tfrac{\pi}{8}>h(\gamma)$ asymptotically, so approximately $\gamma<0.0153$; thus the asymptotic noise tolerance is $\approx1.5\%$. Note that this is the same tolerance as in \cite{TFKW13}.

\begin{proof}
	At the start of the protocol, Alice prepares the state $\rho_{ATT'V}=\expec_{a,t,t'}[att']\otimes\ketbra{a_{t,t'}}$, where she holds onto $ATT'$ and sends $V$. Eve acts with some channel $\Phi:\mc{L}(V)\rightarrow\mc{L}(BE)$ and sends the register $B$ to Bob. Bob sends $\hat{T}$ to Alice, which Eve may intercept and copy. We work first with the state $\sigma_{ATT'BE}=\expec_{u\in U}X_T^u\rho X_T^u$, and then exchange it for $\rho$ later, using \cref{lem:swap-meet}. At the parameter estimation step, the robust \gamename MoE property implies $H_{\min}(T|AB;T'|A'TE)_\sigma\geq\parens*{-\lg\cos\tfrac{\pi}{8}-\tfrac{1}{(2\ln 2)n}}n$, where $A'$ is a copy of $A$. Let $M:T\rightarrow\mc{P}(AB)$ be the measurement Bob's device uses to get the guess of $T$. Then, by the entropic uncertainty relation, we must have either
	\begin{align}
	\begin{split}
		&H_{\min}(T|M(AB))_\sigma\geq\parens*{-\lg\cos\tfrac{\pi}{8}-\tfrac{1}{(2\ln 2)n}}\frac{n}{2}\qquad\text{or}\\ &H_{\min}(T'|A'TE)_{\sigma_{|(M(AB)=T)}}\geq\parens*{-\lg\cos\tfrac{\pi}{8}-\tfrac{1}{(2\ln 2)n}}\frac{n}{2}.
	\end{split}
	\end{align}
	In the former case, we have
	\begin{align}
		\Tr\parens[\big]{\sigma_{\land\parens*{\hat{T}=T}}}=\Pr\squ[\big]{M(AB)=T}_\sigma\leq 2^{-\parens[\big]{-\lg\cos\tfrac{\pi}{8}-\tfrac{1}{(2\ln 2)n}}\frac{n}{2}},
	\end{align}
	as $\hat{T}=M(AB)$. In the latter case, by the error correction step, Eve holds $E_0=A'\hat{T}\syn(\hat{T}')J\hat{T}'_JE$ and thus, making use of \cref{lem:and-tropy}
	\begin{align}
	\begin{split}
		H_{\min}(T'|A'\hat{T}\syn(\hat{T}')J\hat{T}'_JE)_{\sigma_{|(\hat{T}=T)\land(\hat{T}'_J=\bar{T}'_J)}}&\geq H_{\min}(T'|A'\hat{T}\syn(\hat{T}')J\hat{T}'_J\bar{T}'_JE)_{\sigma_{|(\hat{T}=T)}}\\
		&\geq H_{\min}(T'|A'TE)_{\sigma_{|(M(AB)=T)}}-s-2\eta\tfrac{n}{2}\\
		&\geq\parens*{-\lg\cos\tfrac{\pi}{8}-\tfrac{2s}{n}-2\eta-\tfrac{1}{(2\ln 2)n}}\tfrac{n}{2}.
	\end{split}
	\end{align}
	Next, as Eve has access to the syndrome $\syn(\hat{t}')$, her probability of guessing $t'$ is equal to that of guessing $\bar{t}'$, giving $H_{\min}(\bar{T}'|A\hat{T}\syn(\hat{T}')E)_{\sigma_{|(\hat{T}=T)\land(\hat{T}'_J=\bar{T}'_J)}}\geq\parens*{-\lg\cos\tfrac{\pi}{8}-\frac{2s}{n}-2\eta-\tfrac{1}{(2\ln 2)n}}\frac{n}{2}$. By hypothesis on the strong extractor, we have that
	\begin{align}
		\norm{\sigma_{e(\bar{T}',R)RE_0|(\hat{T}=T)\land(\hat{T}'_J=\bar{T}'_J)}-\mu_{Z}\otimes\mu_{R}\otimes\sigma_{E_0|(\hat{T}=T)\land(\hat{T}'_J=\bar{T}'_J)}}_{\Tr}\leq\varepsilon,
	\end{align}
	where the register $Z=\Z_2^\ell$. Before passing to the information reconciliation step, we combine the two cases. Writing $\varepsilon^\ast=\max\set{\varepsilon,2^{-\parens[\big]{-\lg\cos\tfrac{\pi}{8}-\tfrac{1}{(2\ln 2)n}}\frac{n}{2}}}$, we get
	\begin{align}
	\begin{split}
		&\norm{\sigma_{e(\bar{T}',R)RE_0\land(\hat{T}=T\land\hat{T}'_J=\bar{T}'_J)}-\mu_{Z}\otimes\mu_{R}\otimes\sigma_{E_0\land(\hat{T}=T\land\hat{T}'_J=\bar{T}'_J)}}_{\Tr}\\
		&\qquad=\Tr(\sigma_{\land(\hat{T}=T)})\norm{\sigma_{e(\bar{T}',R)RE_0|(\hat{T}=T)\land(\hat{T}'_J=\bar{T}'_J)}-\mu_{Z}\otimes\mu_{R}\otimes\sigma_{E_0|(\hat{T}=T)\land(\hat{T}'_J=\bar{T}'_J)}}_{\Tr}\leq\varepsilon^\ast.
	\end{split}
	\end{align}
	Now, we can pass to the real state $\rho$. Using \cref{lem:swap-meet} with $X=T$,
	\begin{align}
	\begin{split}
		&\norm{\rho_{e(\bar{T}',R)RE_0\land(d(\hat{T},T)\leq\gamma n/2\land\hat{T}'_J=\bar{T}'_J)}-\mu_{Z}\otimes\mu_{R}\otimes\rho_{E_0\land(d(\hat{T},T)\leq\gamma n/2\land\hat{T}'_J=\bar{T}'_J)}}_{\Tr}\\
		&\qquad=|U|\norm{\sigma_{e(\bar{T}',R)RE_0\land(\hat{T}=T\land\hat{T}'_J=\bar{T}'_J)}-\mu_{Z}\otimes\mu_{R}\otimes\sigma_{E_0\land(\hat{T}=T\land\hat{T}'_J=\bar{T}'_J)}}_{\Tr}\leq2^{\tfrac{n}{2}h(\gamma)}\varepsilon^\ast.
	\end{split}
	\end{align}
	As the event $F=1$ is equivalent to $d(\hat{T},T)\leq\gamma n/2\land\hat{T}'_J=\bar{T}'_J$, this means
	\begin{align}
		\norm{\rho_{e(\bar{T}',R)RE_0\land(F=1)}-\mu_{K}\otimes\rho_{RE_0\land(F=1)}}_{\Tr}\leq 2^{\tfrac{n}{2}h(\gamma)}\varepsilon^\ast.
	\end{align}
	Finally, as Eve's register at the end of the privacy amplification step is $E'=RE_0=RA\hat{T}\syn(\hat{T}')J\hat{T}_JE$, we get the wanted result $\norm{\rho_{KE'\land(F=1)}-\mu_{K}\otimes\rho_{E'\land(F=1)}}_{\Tr}\leq 2^{\tfrac{n}{2}h(\gamma)}\varepsilon^\ast$.
\end{proof} 

\fi

\newpage

%%%%%%%%%%%%%%%%
% Bibliography %
%%%%%%%%%%%%%%%%

\bibliographystyle{bibtex/bst/alphaarxiv.bst}
\bibliography{bibtex/bib/full.bib,bibtex/bib/quantum.bib,bibtex/quantum_new.bib}

%%%%%%%%%%%%%%%%%%
%%%%%%%%%%%%%%%%%%
\end{document} %%%
%%%%%%%%%%%%%%%%%%
%%%%%%%%%%%%%%%%%%